\newtheorem{thm}{Theorem}[section]
\newtheorem{lem}{Lemma}[section]
\newtheorem{prop}{Proposition}[section]
\newcounter{myalgctr}
\newenvironment{rem}{%      define a custom environment
   \vskip1mm\indent%         create a vertical offset to previous material
   \refstepcounter{myalgctr}% increment the environment's counter
   \textbf{Remark \themyalgctr}% or \textbf, \textit, ...
   }{\hfill$\diamond$\par}  %          create a vertical offset to following material
\numberwithin{myalgctr}{section}
\providecommand{\norm}[1]{\left\lVert#1\right\rVert}
\DeclareMathOperator*{\argmin}{arg\,min}
\DeclareMathOperator*{\argmax}{arg\,max}
\def\namedlabel#1#2{\begingroup
    #2%
    \def\@currentlabel{#2}%
    \phantomsection\label{#1}\endgroup
}
\newcommand{\vertiii}[1]{{\left\vert\kern-0.25ex\left\vert\kern-0.25ex\left\vert #1
    \right\vert\kern-0.25ex\right\vert\kern-0.25ex\right\vert}}
\newcommand{\Leb}{\mbox{\textbf{Leb}}}
\newcommand{\reals}{\mathbb{R}}
\begin{document}
 \begin{frontmatter}
\title{Valid Post-selection Inference in Assumption-lean Linear Regression}
\runtitle{PoSI for Linear Regression}
\begin{aug}
  \author{\fnms{Arun K.}  \snm{Kuchibhotla}\ead[label=e1]{arunku@wharton.upenn.edu}},
  \author{\fnms{Lawrence D.} \snm{Brown}},
  \author{\fnms{Andreas} \snm{Buja}},
  \author{\fnms{Edward I.} \snm{George}},
  \and
  \author{\fnms{Linda H.} \snm{Zhao}}
    % \ead[label=e3]{third@somewhere.com}%
    % \ead[label=u1,url]{http://www.foo.com}}

  \runauthor{Kuchibhotla et al.}

  \affiliation{University of Pennsylvania}

  \address{University of Pennsylvania\\ \printead{e1}}
%\thankstext{t1}{Corresponding author.}

\end{aug}
%\maketitle
\begin{abstract}
Construction of valid statistical inference for estimators based on data-driven selection has received a lot of attention in the recent times. \cite{Berk13} is possibly the first work to provide valid inference for Gaussian homoscedastic linear regression with fixed covariates under arbitrary covariate/variable selection. The setting is unrealistic and is extended by \cite{Bac16} by relaxing the distributional assumptions. A major drawback of the aforementioned works is that the construction of valid confidence regions is computationally intensive. In this paper, we first prove that post-selection inference is equivalent to simultaneous inference and then construct valid post-selection confidence regions which are computationally simple. Our construction is based on deterministic inequalities and apply to independent as well as dependent random variables without the requirement of correct distributional assumptions. Finally, we compare the volume of our confidence regions with the existing ones and show that under non-stochastic covariates, our regions are much smaller. 
\end{abstract}
\end{frontmatter}

%----------------------------------------------------------------

\section{Introduction and Motivation}

\subsection{Motivation of the Problem}

In recent times, there has been a crisis in the sciences because too many research results are found to lack replicability and reproducibility. Some of this crisis has been attributed to a failure of statistical methods to account for data-dependent exploration and modeling that precedes statistical inference.  Data-dependent actions such as selection of subsets of cases, of covariates, of responses, of transformations and of model types has been aptly named ``researcher degrees of freedom'' \citep{simmons2011false}, and these may well be a significant contributing factor in the current crisis.  Classical statistics does not account for them because it is built on a framework where all modeling decisions are to be made {\em independently of the data on which inference is to be based}. But if the data are in fact used to this end prior to statistical inference, then such inference loses its justifications and the ensuing validity conferred on it by classical theories. It is therefore critical that the theory of statistical inference be brought up to date to account for data-driven modeling. Updating the theory that justifies statistical inferences usually requires modifying the procedures of inference such as hypothesis tests and confidence intervals. As a consequence, the new procedures may lose some power relative to the previously stipulated but illusionary power derived from classical theories. This, however, is a necessary price to be paid for better justification of statistical inference in the context of the pre-inferential liberties taken in today's data-analytic practice. While updating of statistical theories and inference procedures will not solve all problems underlying the current crisis, it is a necessary step as it may help mitigate at least some aspects of the crisis.  In what follows we refer to all data-analytic decisions that are made using the data prior to inference as ``data-driven modeling''.

A second issue with theories of classical statistical inference is that many of them rely on the assumption that the data have been correctly modeled in a probabilistic sense. This means the theories tend to assume that the probability model used for the data correctly captures the observable features of the data generating process. Justifications of statistical inferences derived from such theories are therefore {\em in}valid if the model is incorrect or (using the technical term) ``misspecified''. With the proliferation of data-analytic approaches in science and business, it is becoming ever more unrealistic to assume that all statistical models are correctly specified and inferences are made only after carefully vetting the model for correct specification, for example, using model diagnostics.  Such vetting may never have been realistic in the first place, and it should also be said that pre-inferential diagnostics should be counted among ``researcher degrees of freedom'' as they may result in data-driven modeling decisions. It is therefore a mandate of realism to use so-called ``model-robust'' methods of statistical inference, and for statistical theory to provide their justifications. In matters of misspecification the situation is somewhat less dire than data-driven modeling as there exists a rich literature on the study of inference when models are misspecified.
% for example, \cite{Hub67}, \cite{HjortPollard}), % Too rich a literature, could mention seminal works by Huber and White
We will naturally draw on extant proposals for misspecification-robust or (using the technical term) ``model-robust'' inference and adapt them to our purposes.

To summarize, there exist at least two ways in which statistical inferences with justifications from classical mathematical statistics only can be invalidated, namely,
\begin{enumerate}[label = \bfseries(P\arabic*)] \itemsep 0em
\item data-driven modeling prior to statistical inference, and\label{prob1}
\item model misspecification.\label{prob2}
\end{enumerate}
In light of the replicability and reproducibility crisis in the sciences, it is of considerable interest, even urgency, to develop methods of statistical inference and associated theoretical justifications that account for both \ref{prob1} and \ref{prob2}. Even though these problems are manifest in almost all statistical procedures used in practice, it is no simple task to provide methods of valid statistical inference that address these problems in greater generality. For this reason the present article puts forth specifically a method of valid inference for the case that the fitting procedure is ordinary least squares (OLS) linear regression. Here there exists a literature that documents the drastic effects of ignoring \ref{prob1} and \ref{prob2}; see, for example, \cite{Bueh63}, \cite{Olsh73}, \cite{REN80}, and~\cite{Freed83}. We will address one particular form of problem \ref{prob1}, namely, data-driven selection of regressor variables, and we will deal with several forms of problem~\ref{prob2}.

Some of the earliest work that studies estimators under data-dependent modeling \ref{prob1} include \cite{Hjort03} and \cite{Claeskens07}. Although these articles deal with a general class of statistical procedures, a major limitation, in view of the current article, is that the data-dependent modeling is restricted to a very narrow class of principled variable selection methods such as AIC or some other information criterion. The fact is, however, that few data analysts will confine themselves to a strict protocol of data-driven modeling.  To address broader aspects of ``researcher degrees of freedom'' there have more recently emerged proposals that provide validity of statistical inference in the case of arbitrary data-driven selection of regressor variables. The first such proposal was by \cite{Berk13} who solve the problem allowing misspecified response means but retaining the classical assumptions of homoskedastic and normally distributed errors. We refer to \cite{Berk13} for many other prior works related to problem \ref{prob1} where data-driven modeling consists of selection of regressor variables. A more recent article that expands on \cite{Berk13} is by \cite{Bac16}. An alternative approach is by  \cite{Lee16}, \cite{Tibs16}, \cite{Tian16} (for example). Similar to \cite{Hjort03}, these proposals do not insure validity of inference against arbitrary regressor selection but against {\em specific selection methods} such as the lasso or stepwise forward selection. This type of post-selection inference is conditional on the selected model and dependent on distributional assumptions, thereby not addressing problem~\ref{prob2}.

The present article is close in spirit to \cite{Berk13} and \cite{Bac16} and lends their approach a considerable degree of generality by covering both fixed regressors (as in these references) and (newly) random regressors. \cite{Bac16} is the only work we know of that provides valid statistical inference under arbitrary data-dependent regressor selection and general misspecification of the regression models. Their framework assumes a situation where the set of sub-models is finite and of fixed cardinality independent of the sample size. Their method of statistical inference is NP-hard, hence requires computational heuristics. To overcome these limitations we propose here a simplified procedure with the following properties: (1)~it is comparatively computationally efficient with at most polynomial complexity in the total number of covariates, and (2)~it allows the set of sub-models to grow almost exponentially as a function of the sample size. Thus the procedure is also in the spirit high-dimensional statistics where the total number of covariates is allowed to be much larger than the sample size.
% In summary, our main goal is to construct a method that provides to valid post-selection inference and is computationally simpler.

%----------------------------------------------------------------

\subsection{Overview}

In what follows, the term ``model-selection'' will always mean arbitrary data-driven selection of regressor variables, which is the only aspect of problem \ref{prob1} that will be addressed in this article. Furthermore, the only fitting method considered here is OLS linear regression; this limitation is for expository purposes, and results for more general types of regressions will be given elsewhere. Problem \ref{prob2} will be addressed by the complete absence of modeling assumptions. In particular, it will {\em not} be assumed that the response means behave linearly in the regressors, and equally it will {\em not} be assumed that the errors are homoskedastic and normally distributed.
%In this respect, this can be seen as a classical non-parametric statistical procedure with the term non-parametric used in the sense of distribution-free as in \cite{Gib11}.
The goal is to provide confidence regions for linear regression coefficients obtained after model-selection. In the process, we will prove simple but powerful results about linear regression that lend themselves to proving the validity of confidence regions. The main contributions of the current paper are as follows:
\begin{enumerate}
\item We treat OLS linear regression as a fitting method for linear equations while treating the associated Gaussian linear model merely as a working model that is permitted to be misspecified. We consider the case where the observations are the random vectors comprised of a response variable and one or more regressor variables/covariates, allowing the latter to be random rather than fixed. Note that fixed covariates are assumed in the settings of \cite{Berk13} and \cite{Bac16}. Random covariates require us to interpret and understand what is being estimated more carefully. See \cite{Buja14} for an explanation why under misspecification the treatment of random covariates as fixed is not justified.
\item Following \cite{Berk13} and \cite{Bac16} we decouple the inference problem from model selection, meaning that the inferences proposed here are valid no matter how the model selection was done. This feature has pluses and minuses. On the plus side, inferences will be valid even in the presence of ad-hoc and informal selection decisions made by the data analyst, including, for example, visual diagnostics based on residual plots. On the minus side, decoupling implies that inferences cannot take into account any properties of the model selection procedure when in fact only one such procedure was used. A strong argument by  \cite{Berk13} and \cite{Bac16} in favor of decoupling, however, is that in reality data analysts will rarely limit themselves to one and only one formal selection method if it produces unsatisfactory results on the data at hand. Therefore, in order to truly contribute to solving the crisis in the sciences, unreported informal selection should be assumed and accounted for. Decoupling of model selection and inference has a further benefit: It solves the circularity problem by permitting selection to start over and over as often as the data analyst pleases; inferences in all selected models will be valid, whether they are found satisfactory or unsatisfactory for whatever reasons.
\item Our theory provides validity of post-selection inferences even when model selection is applied to a very large number of covariates --- almost exponential in the sample size. Thus the theory is in the spirit of contemporary high-dimensional statistics which is interested in problems where the number of variables is larger than the sample size. Of course we require model selection to produce models of size smaller than the sample size in order to avoid trivial collinearity when the number of covariates exceeds the sample size.
\item We mostly focus on one simple strategy for valid post-selection inference that has the advantage of great simplicity, both in theory and in computation --- its computational cost being proportional to the number $p$ of covariates. This is surprising as the computational complexity of \cite{Berk13} is exponential in $p$ because it requires searching all covariates in all possible submodels. The drawback of the strategy is that its confidence regions are not aligned with the coordinate axes in covariate space, hence do not immediately provide confidence intervals for the slope parameters of the form ``estimate $\pm$ half-width''.  %{\color{red}We will, however, indicate one other strategy that also provides valid post-selection inference but with confidence intervals of the usual form for the slope parameters.  This strategy is in the spirit of \cite{Berk13} but allowing for vastly greater misspecification.}
% Table \ref{tab:Char} distinguishes the motivations and characteristics of these two strategies in comparison with \cite{Berk13}. The column labeled ``computation'' refers to the computational complexity of the method in comparison with that of computing the least squares estimator. The column labeled ``Inference'' is about whether the inference is about each coordinate or about a vector of coefficients as a single entity. The last column ``Conf. Region'' is the shape of the resulting confidence regions.
% \begin{table}[!h]
% \centering
% \caption{Characteristics of Different Approaches to Post-Selection Inference.}
% \label{tab:Char}
% \begin{tabular}{cccc}\hline
% Strategy        & Computation & Inference   & Conf. Region   \\\hline\hline
% \cite{Berk13} & exponential & Cooordinate & Rectangles     \\
% Present Approach   & Linear      & Global      & Parallelepipeds \\
% % Approach 2      & exponential & Global      & Ellipsoids/Rectangles
% \hline
% \end{tabular}
% \end{table}
% Because of the similarity of Approach 1 with inequalities used in the Dantzig selector and/or lasso, one might think of our approach 1 as a convex relaxation based methodology for post-selection inference problem. While the approach of \cite{Berk13} and approach 2 can be seen as best-subset selection versions. See Section \ref{sec:HighDim} for more details.
%\item The main focus of the paper is post model-selection inference in multiple linear regression but some of these strategies can be extended to generalized linear models. The extension will be indicated in Section \ref{sec:ExtGen}.
\item Most of the present results are based on deterministic inequalities that allow for valid post-selection inference even when the random vectors involved are structurally dependent. This approach may not produce best possible rates in some contexts, but the resulting inferences will be more robust to independence assumptions.
\end{enumerate}
As a caveat, it should be stated that we do not address the question of when linear regression is appropriate in a given data analytic situation when misspecification is present. We consider it a reality that many if not most linear regressions are fitted in the presence of various degrees of misspecification, and reporting results for interpretation should be accompanied by statistical inference just the same. Our goal is therefore limited to providing asymptotic justification of inference in the presence of misspecification {\em and} after data-driven model selection.

The remainder of the paper is organized as follows. Section \ref{sec:Notation_Problem_Formuation} provides the necessary notation for a rigorous formulation of the problem of valid post-selection inference. In Section \ref{sec:Simultaneous}, the problem of post-selection inference is shown to be equivalent to a problem of simultaneous inference. In Section \ref{sec:FirstStrat} we present the first strategy for valid post-selection inference along with its main features. Section \ref{sec:Comp} describes an implementation method based on the multiplier bootstrap. Section \ref{sec:Generalization} provides a simple generalization to linear regression-type problems. Section \ref{sec:HighDim} points out an interesting connection between the post-selection confidence regions proposed here and the estimators proposed in the high-dimensional linear regression literature. In Section \ref{sec:ProsAndCons}, we discuss various advantages and disadvantages of the approach presented in this paper. The final Section \ref{sec:Conclusions} summarizes the results.

Many of the proofs are deferred to Appendices \ref{app:ProofLemmaL1}, \ref{app:LebesgueMeasure} and \ref{app:LassoRegions}. Most of the discussion in the paper is based on the assumption of independent random vectors, although comments about applicability to dependent random vectors are given in appropriate places. Appendix \ref{app:HDCLT} provides theoretical background about a high-dimensional central limit theorem and the consistency of multiplier bootstrap. These results are required for computation of joint quantiles for the proposed confidence regions. Appendix \ref{app:Dependent} describes the functional dependence setting where the computation of required quantiles is not much different from that of the independence setting.

%----------------------------------------------------------------

\section{Notation and Problem Formulation}\label{sec:Notation_Problem_Formuation}

\subsection{Notation related to Vectors, Matrices and Norms}  \label{sec:Notation}

For any vector $v\in\mathbb{R}^q$ and $1\le j\le q$, $v(j)$ denotes the $j$-th coordinate of $v$. For any non-empty subset $M \subseteq \{1,2,\ldots,q\}$, $v(M)$ denotes the sub-vector of $v$ with indices in $M$. For instance, if $M = \{2, 4\}$ and $q \ge 4$, then $v(M) = (v(2), v(4))$. If $M = \{j\}$ is a singleton then $v(j)$ is used instead of $v(\{j\})$.
Therefore, $ v(M) \in \mathbb{R}^{|M|}$ where $|M|$ denotes the cardinality of~$M$.

For any symmetric matrix $A\in\mathbb{R}^{q\times q}$ and $M \subseteq \{1,2,\ldots,q\}$, let $A(M)$ denote the sub-matrix of $A$ with indices in $M\times M$ and for $1\le j, k\le q$, let $A(j,k)$ denote the value at the $j$-th row and $k$-th column of $A$.

Define the $r$-norm of a vector $v\in\mathbb{R}^q$ for $1\le r\le \infty$ as usual~by
\[
\norm{v}_r := \bigg(\sum_{j=1}^q |v(j)|^r\bigg)^{1/r},\quad\mbox{for}\quad 1\le r < \infty,\quad\mbox{and}\quad \norm{v}_{\infty} := \max_{1\le j\le q}|v(j)|.
\]
Let $\norm{v}_0$ denote the number of non-zero entries in $v$ (note this is not a norm). For any symmetric matrix $A$, let $\lambda_{\min}(A)$ denote the minimum eigenvalue of $A$. Also, let the elementwise maximum and the operator norm be defined, respectively, as
\[
\norm{A}_{\infty} := \max_{1\le j, k\le q}|A(j,k)|,\quad\mbox{and}\quad \norm{A}_{op} := \sup_{\norm{\delta}_2 \le 1}\norm{A\delta}_2.
\]
The following inequalities will be used throughout without special mention:
\begin{equation}\label{eq:MatrixVectorIneq}
\norm{v}_1 \le \norm{v}_0^{1/2}\norm{v}_2,\quad \norm{Av}_{\infty} \le \norm{A}_{\infty}\norm{v}_1,\quad\mbox{and}\quad |u^{\top}Av| \le \norm{A}_{\infty}\norm{u}_1\norm{v}_1,
\end{equation}
where $A\in\mathbb{R}^{q\times q}$ and $u, v\in\mathbb{R}^q$.

%----------------------------------------------------------------

\subsection{Notation Related to Regression Data and OLS}

Let $(X_i^{\top}, Y_i)^{\top}\in\mathbb{R}^p\times\mathbb{R} ~ (1\le i\le n)$ represent a sample of $n$ observations. The covariate vectors $X_i \in \mathbb{R}^p$ are column vectors. It is common to include an intercept term when fitting the linear regression. To avoid extra notation, we assume that all covariates under consideration are included in the vectors $X_i$, so the data analyst may take the first coordinate of $X_i$ to be 1. In case that the number $p$ of covariates varies with $n$, this should be interpreted as a triangular array. Throughout, the term ``model'' is used to refer to the subset of covariates present in the regression and there will be \underline{\emph{no}} assumption that any linear model is true for any choice of covariates.
%This is in accordance with the assumption-lean framework of Chapter \ref{chap:AssumpLean}.

In order describe ``models'' in the sense of subsets of covariates, we use index sets $M \subseteq \{1,2,\ldots,p\}$ as in the previous subsection and write $X_i(M)$ for the covariate vectors in the submodel $M$. For any $1\le k\le p$, define the set of all non-empty models of size no larger than $k$ by
\begin{equation}\label{eq:kSparse}
\mathcal{M}_p(k) := \{M:\, M\subseteq\{1,2,\ldots, p\},\,\, 1\le |M| \le k\},
\end{equation}
so that $\mathcal{M}_p(p)$ is the power set of $\{1,2,\ldots,p\}$ excluding the empty set.

To proceed further, we assume that the observations are independent but possibly non-identically distributed. Note that this assumption includes as special cases (i)~the setting of independent and identically distributed observations and (ii)~the setting of fixed (non-random) covariates (by defining the distribution of $X_i$ to be a point mass at the observed $X_i$). Our setting is more general than either (i) or (ii) in that some of the covariates are allowed to be fixed while others are random.

For any $M \subseteq \{1,2,\ldots,p\}$, define the ordinary least squares empirical risk (or objective) function as
\begin{equation}\label{eq:EmpObj}
\hat{R}_n(\theta; M) := \frac{1}{n}\sum_{i=1}^n \left\{Y_i - X_i^{\top}(M)\theta\right\}^2,\quad\mbox{for}\quad \theta\in\mathbb{R}^{|M|}.
\end{equation}
Using this, define the expected risk (or objective) function as
\begin{equation}\label{eq:PopObj}
R_n(\theta; M) := \frac{1}{n}\sum_{i=1}^n \mathbb{E}\left[\left\{Y_i - X_i^{\top}(M)\theta\right\}^2\right],\quad\mbox{for}\quad \theta\in\mathbb{R}^{|M|}.
\end{equation}
(The notations $\mathbb{E}$ and $\mathbb{P}$ are used to denote expectation and probability computed with respect to all the randomness involved.)
Define the least squares estimator and the corresponding target for model $M$ as
\begin{equation}\label{eq:LeastSquares}
\hat{\beta}_{n,M} := \argmin_{\theta\in\mathbb{R}^{|M|}} \hat{R}_n(\theta; M),\quad\mbox{and}\quad \beta_{n,M} := \argmin_{\theta\in\mathbb{R}^{|M|}}R_n(\theta; M),
\end{equation}
for all $M\subseteq\{1,2,\ldots,p\}$, hence $\hat{\beta}_{n,M}, \, \beta_{n,M} \!\in\! \mathbb{R}^{|M|}$. Note, however, the following: Suppose $M = \{1, 2\}$ and $M' = \{1\}$, then it is generally the case that $\hat{\beta}_{n,M'}(1) \neq \hat{\beta}_{n,M}(1)$ and $\beta_{n,M'}(1) \neq \beta_{n,M'}(1)$, that is, estimates and parameters in submodels are \emph{not} subvectors of their analogues in larger models, except, for example when the columns of $X(M)$ are orthogonal. The reason for this is the collinearity between the covariates in model $M$. The comments above applies for general models $M'\subset M$.
% Using indexing conventions introduced above, when $M' \!\subset\! M$, it is generally the case that $\hat{\beta}_{n,M'} \!\neq\! \hat{\beta}_{n,M}(M')$ and $\beta_{n,M'} \!\neq\! \beta_{n,M}(M')$, that is, estimates and parameters in submodels are {\em not} subvectors of their analogues in larger models, except, for example, when the columns of $X(M)$ are orthogonal. The reason is that, for $j \!\in\! M'$, partial collinearity causes submodel estimates $\hat{\beta}_{n,M'}(j)$ and parameters $\beta_{n,M'}(j)$ in the submodel $M'$ to be adjusted differently from their analogues $\hat{\beta}_{n,M}(j)$ and $\beta_{n,M}(j)$ in the larger model $M$. 
This is why we must write $M$ as a subscript and not in parentheses. (See Section 3.1 of \cite{Berk13} for a related discussion.)

Next define related matrices and vectors as follows:
\begin{equation} \label{eq:MatrixVector}
\begin{split}
\hat{\Sigma}_n := \frac{1}{n}\sum_{i=1}^n X_iX_i^{\top}\in\mathbb{R}^{p\times p},\quad&\mbox{and}\quad \hat{\Gamma}_n := \frac{1}{n}\sum_{i=1}^n X_iY_i\in\mathbb{R}^{p},\\
\Sigma_n := \frac{1}{n}\sum_{i=1}^n \mathbb{E}\left[X_iX_i^{\top}\right]\in\mathbb{R}^{p\times p},\quad&\mbox{and}\quad \Gamma_n := \frac{1}{n}\sum_{i=1}^n \mathbb{E}\left[X_iY_i\right]\in\mathbb{R}^{p}.
\end{split}
\end{equation}
Note that for these quantities there is no need to define separate versions in submodels $M$ because they are just the submatrices $\hat{\Sigma}_n(M)$ and $\Sigma_n(M)$ and subvectors $\hat{\Gamma}_n(M)$ and $\Gamma_n(M)$, respectively. The OLS estimate of the slope vector and its target in the sub-model $M$ satisfy the following normal equations:
\begin{equation}\label{eq:MatrixVector-OLS}
\hat{\Sigma}_n(M) \hat{\beta}_{n,M} = \hat{\Gamma}_n(M), ~~~~
\Sigma_n(M) \beta_{n,M} = \Gamma_n(M).
\end{equation}
%For any model $M\subseteq\{1,2,\ldots,p\}$, the sub-matrices and sub-vectors are given by
%\begin{align*}
%\Sigma_n(M) := \frac{1}{n}\sum_{i=1}^n X_i(M)X_i^{\top}(M),\quad&\mbox{and}\quad \Gamma_n(M) := \frac{1}{n}\sum_{i=1}^n X_i(M)Y_i,\\
%\Sigma(M) := \mathbb{E}\left[X(M)X^{\top}(M)\right],\quad&\mbox{and}\quad \Gamma(M) := \mathbb{E}\left[X(M)Y\right].
%\end{align*}
\begin{rem}
We do not solve the equations \eqref{eq:MatrixVector-OLS} on purpose because the confidence regions to be constructed below will accommodate exact collinearity by including subspaces of degeneracy.  Minimizers of the objective functions $\hat{R}_n(\theta; M)$ and $R_n(\theta; M)$ defined in \eqref{eq:EmpObj} and \eqref{eq:PopObj} always exist, even if they are not unique. Estimates $\hat{\beta}_{n,M}$ can only be unique when $|M| \le n$ because $\hat{\Sigma}_n(M)$ has rank at most~$\min\{|M|, n\}$. Targets $\beta_{n,M}$, on the other hand, can be unique without a constraint on $n$ because they are based on expectations rather than finite averages, so $\Sigma_n$ and $\Sigma_n(M)$ can be strictly positive definite and $R_n(\theta; M)$ strictly convex with a unique minimizer even when $|M| > n$.
\end{rem}

%----------------------------------------------------------------

\subsection{Problem Formulation} \label{sec:problem_formulation}

Under very mild assumptions, $\hat{\beta}_{n,M} - \beta_{n,M}$ converges to zero as $n$ tends to infinity for any fixed, non-random model $M$ (see \cite{Chap2:Kuch18}). This fact justifies calling $\hat{\beta}_{n,M}$ an estimator of $\beta_{n,M}$ or, equivalently, $\beta_{n,M}$ the target of estimation of $\hat{\beta}_{n,M}$. Also, for a fixed $M$, $\hat{\beta}_{n,M}$ has an asymptotic normal distribution, i.e.,
\[
n^{1/2}\left(\hat{\beta}_{n,M} - \beta_{n,M}\right) ~\overset{\mathcal{L}}{\to}~ N\left(0, AV_M\right)~~~~(0 \in \mathbb{R}^{|M|},~AV_M \in \mathbb{R}^{|M|\times|M|} )
\]
for some positive definite matrix $AV_M$ that depends on $M$ and some moments of $(X, Y)$; see the linear representation in \cite{Chap2:Kuch18}. The notation $\overset{\mathcal{L}}{\to}$ denotes convergence in law (or distribution). % No need to be too pedantic with notation that is universal in the field.
Asymptotic normality lends itself for the construction of $(1 \!-\! \alpha)$-confidence regions $\hat{\mathcal{R}}_{n,M}$ such that
\[
\liminf_{n\to\infty}\,\mathbb{P}\left(\beta_{n,M} \in \hat{\mathcal{R}}_{n,M}\right) \ge 1- \alpha
\]
for any fixed $\alpha\in[0,1]$. We approach statistical inference using confidence regions rather than statistical tests, but this is a technical rather than a conceptual choice because confidence regions and tests are in a duality to each other: a confidence region with coverage at least $1\!-\!\alpha$ is a set of parameter values that could not be rejected at level $\alpha$ if used as point null hypotheses.
% This implies that for any two fixed, non-random models $M_1$ and $M_2$,
% \begin{equation}\label{eq:ConfidenceInfOut}
% \liminf_{n\to\infty}\inf_{M\in\{M_1, M_2\}}\mathbb{P}\left(\beta_{n,M} \in \hat{\mathcal{R}}_{n,M}\right) \ge 1 - \alpha.
% \end{equation}

%%%%%%%%%%%%%%%%% AB: 2018/05/02 remove sample space notation,
%%%%%%%%%%%%%%%%% but add very concrete discussion of the types of variability
%%%%%%%%%%%%%%%%% due to datasets and due to inclinations of the data analysis given specific datasets
%%%%%%%%%%%%%%%%% on some dataset he/she may create residual diagnostics, on others sensitivity analysis,
%%%%%%%%%%%%%%%%% on yet others combinations thereof, and yet other none of the above.
%%%%%%%%%%%%%%%%% ...

%%%%%%%%%%%%%%%%% Copied from Arun's version:
The problem of valid post model-selection inference is to construct for given non-random sets of models $\mathcal{M}_p$ a set of confidence regions $\{\hat{\mathcal{R}}_{n,M}\!: M \!\in\! \mathcal{M}_p\}$ such that for any {\bf\em random model} $\hat{M}$ depending (possibly) on the same data satisfying $\mathbb{P}\left(\hat{M} \!\in\! \mathcal{M}_p\right) \!=\! 1$, we have
\begin{equation}\label{eq:PoSIRequire}
\liminf_{n\to\infty}\,\mathbb{P}\left(\beta_{n,\hat{M}} \in \hat{\mathcal{R}}_{n,\hat{M}}\right) \ge 1 - \alpha.
\end{equation}
% Throughout the chapter, $\alpha\in[0,1]$ is fixed and the confidence regions $\hat{\mathcal{R}}_{n,M}$ do depend on $\alpha$.
The guarantee \eqref{eq:PoSIRequire} requires the confidence asymptotically because we strive for a theory that requires few assumptions, whereas finite sample confidence guarantees require strong assumptions.

The notation $\hat{M}$ for random models requires an elaboration of the sources of randomness envisioned here.
%!!!! continue here !!!!!!!!!!!!!!!!!!!!!!!!!!!!!!!!!
% Suppose now the data is used to select a model, which we denote by $\hat{M}$. The notation is necessary to indicate that the chosen model is random because it is based on the data $(X_i,Y_i)~(1 \le i \le n)$. Mathematically $\hat{M}$ is a (measurable) map from the sample space, i.e., the space of possible datasets, to the set of models used in model selection:
% \[
% \hat{M}: ~\mathbb{R}^{n \times (p+1)} ~\rightarrow~ \mathcal{M}_p .   % Get rid of this
% \]
%%%%%%%%%%%%%%% Arun: Need better flow; introduce the PoSI problem soon,
%%%%%%%%%%%%%%% then continue with the following disucssions
With the reproducibility crisis in mind, we cast a wide net for the sources of model randomness by adopting a broad frequentist perspective that includes not only datasets but data analysts as well. Conventional frequentism can be conceived as capturing the random nature of an observed dataset in the actual world by embedding it in a universe of possible worlds with datasets characterized by a joint probability distribution of the observations. We broaden the concept by pairing the random datasets with random data analysts who have varying data analytic preferences and backgrounds. This variability among data analysts may be called ``random researcher degrees of freedom'', a term that alludes to the freedoms we exercise when analyzing in general, and when selecting covariates in a regression in particular. Some of the latter freedoms have been described and classified by \cite{Berk13}, Section~1: (1)~formal selection methods such as stepwise forward or backward selection, lasso-based selection using a criterion to select the penalty parameter, or all-subset search using a criterion such as $C_p$, AIC, BIC, RIC,~etc.; (2)~informal selection steps such as examination of residual plots or influence measures to judge acceptability of models; (3)~post hoc selection such as making substantive trade-offs of predictive viability versus cost of data collection. The waters get further muddied even in the case of formal selection methods (1) when ``informal meta-selection'' is exercised: trying out multiple formal selection methods, comparing them, and favoring some over others based on the results produced on the data at hand. This list of ``researcher degrees of freedom'' in model selection should make it evident that these freedoms are indeed exercised in practice, but in ways that should be called ``subjective'', namely, based on personal background, experience and motivations, as well as historic and institutional contexts. For these reasons it may be infeasible to capture the randomness contributed by data analysts' exercise of their freedoms in terms of stochastic models.

Following \cite{Berk13}, this infeasibility can be bypassed by adding a quantifier ``for all $\hat{M}$'' to the requirement \eqref{eq:PoSIRequire}, thereby capturing all possible ways in which selection may be performed. The added gain is that at a technical level the requirement \eqref{eq:PoSIRequire} permits a reduction to a problem of simultaneous inference.

We must, however, impose certain limits on the freedom of model selection: The set of potential regressors must be pre-specified before examining the data. For example, it is not permissible to initially declare the regressors $X_1,\ldots,X_p$ to be the universe for searching submodels, only to decide after looking at the data that one would also like to search among product interactions $X_j X_k$. The decision to include interactions in data-driven selection would have to be made before looking at the data. Thus data-driven expansion of the universe of regressors for selection is not covered by our framework.

% Now consider the inference guarantee to be desired for $\hat{M}$:
% \begin{equation} \label{eq:PoSIRequire}
% \liminf_{n\to\infty}\,\mathbb{P}\left(\beta_{n,\hat{M}} \in \hat{\mathcal{R}}_{n,\hat{M}}\right) \ge 1- \alpha .
% \end{equation}
Again following \cite{Berk13}, a curious aspect of the target of estimation has to be noted: $\beta_{n,\hat{M}}$ has become a random quantity with a random dimension $|\hat{M}|$, whereas for a fixed $M$ the target $\beta_{n,{M}}$ is a constant. After data-driven modeling the selected target $\beta_{n,\hat{M}}$ has become random due to data-driven selection $\hat{M}$. This, however, is the only randomness present: among all possible targets $\{ \beta_{n,M} : M \in \mathcal{M}_p \}$, one is randomly selected, namely,~$\beta_{n,\hat{M}}$. The associated estimate $\hat{\beta}_{n,\hat{M}}$ in the random model $\hat{M}$, in addition to its intrinsic variability, also incurs the randomness due to selection. On a technical level, note that the random target $\beta_{n,\hat{M}}$ for the random selection $\hat{M}$ may exist even if the estimate $\hat{\beta}_{n,\hat{M}}$ may not exist due to collinearity. This issue requires some care in Lemmas \ref{lem:UniformConsisL1} and \ref{lem:RateD1nD2n} below.

The inference criterion in \eqref{eq:PoSIRequire} can be decomposed by conditioning on the data-driven selections:
\begin{equation} \label{eq:MarginalConditional}
\mathbb{P}\left(\beta_{n,\hat{M}}\in\hat{\mathcal{R}}_{\hat{M}}\right) ~=~
\sum_{M \in \mathcal{M}_p}
\mathbb{P}\left(\beta_{n,M} \in \hat{\mathcal{R}}_{M} ~\bigg|~ \hat{M} = M\right)
\mathbb{P}\left(\hat{M} = M\right) .
\end{equation}
Plainly, if a guarantee of the form \eqref{eq:PoSIRequire} is available for the marginal probability on the left hand side, no guarantee can be deduced for the conditional probabilities given the random events $\hat{M} = M$ on the right hand side. The decomposition \eqref{eq:MarginalConditional} makes explicit the difference between our current marginal approach and the approach taken by \cite{Lee16}, \cite{Tibs16} and \cite{Tian16}, for example.

We mention briefly that \cite{Rin16} use a notion of ``honest confidence'' that asks for valid inference uniformly over a class of data-generating distributions, that is,
\[
\liminf_{n\to\infty}\,\inf_{\mathbb{P}\in\mathcal{P}_n}\mathbb{P}\left(\beta_{n,\hat{M}}\in\hat{\mathcal{R}}_{n,\hat{M}}\right) \ge 1 - \alpha,
% \liminf_{n\to\infty}\,\inf_{P^{\otimes n}\in\mathcal{P}_n}\mathbb{P}\left(\beta_{n,\hat{M}}\in\hat{\mathcal{R}}_{n,\hat{M}}\right) \ge 1 - \alpha,
\]
for some class of probability distributions $\mathcal{P}_n$ of the observations. This ``honesty'' holds for our results, too, due to the uniform validity of the multiplier bootstrap proved by \cite{Chern17}, but we will not discuss this further.

\subsection{Alternative Approaches}

There exists an ``obvious'' approach to valid post-selection inference based on sample splitting, as examined by \cite{Rin16}: split the data into two disjoint parts, then use one part for selecting a model $\hat{M}$ and the other part for inference in the selected model $\hat{M}$.  If the two parts of the data are stochastically independent of each other, post-selection inferences will be valid.  For independent observations \cite{Rin16} were able to provide very general and powerful results.  Sample splitting has considerable appeal due to its universal applicability under independence of the two parts: it ``works'' for any type of model selection, formal or informal, as well as for any type of model being fitted. It has some drawbacks, too, an obvious one being the reduced sample sizes of the two parts, which increase the sampling variability of both the model selection stage and the inference stage. Another drawback is that required independence of the two parts, which makes it less obvious how to generalize sample splitting to dependent data. For customers of statistical inferences, it may also be somewhat disconcerting to realize that the splitting procedure incurs a level of artificial randomness and might have produced different results in the hands of another data analyst who would have used another random split. Reliance on random splits brings to our attention a greater concern that relates to the reproducibility crisis in the sciences: sample splitting introduces another ``researcher degree of freedom'', namely, the freedom to choose a particular split after having tried several splits. In practice it would seem extremely unrealistic to assume that data analysts will in fact commit themselves to using just one random split and not be tempted to try several. It could even be argued that using just one split would be irresponsible because it throws away a chance to learn about the stability of model selection and subsequent inferences under multiple splits. Having performed such a stability analysis, however, invalidates the post-selection inferences obtained from the splits because another level of selection arises: that of choosing one of the splits for final reporting. This would not be a problem if stability analysis showed that the same model is being selected in the vast majority of splits, but experience with regression shows that this is not the generic situation: In most regressions, there exist large numbers of submodels with nearly identical performance, making it likely that model selection will be highly variable between sample splits. In summary, while high in intuitive appeal, sample splitting opens up another pandoras box of selection possibilities that may defeat the solution it was meant to provide.

A different type of post-selection guarantees are available from the approach of \cite{Lee16}, \cite{Tibs16} and \cite{Tian16} when model selection is of a pre-specified form such as lasso selection or stepwise forward selection. The inference guarantees they provide are conditional on the selected model. Their approach is ingeniously tailored to specific formal selection methods and takes advantage of their properties. It is, however, a model-trusting approach that relies much on the correctness of the assumed model as being finite-sample correct under a Gaussian linear model with fixed covariates. For this reason and because so much conditioning is performed, it is unlikely that this approach enjoys much robustness to misspecification (see, for example, Section A.20 of \cite{MR3798003}). 
By comparison, we strive here for model robustness by limiting ourselves to asymptotically correct coverage that is marginal rather than conditional, and by allowing covariates to be treated as random rather than fixed.

A larger point to be reiterated here is that tailoring post-selection inference to a specific formal selection method such as the lasso does not address the issue that data analysts may not limit themselves to just one formal selection method and nothing else. It may be more realistic to assume, as we do here, that they exercise broader liberties that include trying out multiple formal selection methods as well as informal model selection of various kinds. Providing and recommending valid post-selection inference that casts a wider net on selection methods may have a better chance of making an at least partial contribution to solving the reproducibility crisis in the sciences.

%----------------------------------------------------------------

\section{Equivalence of Post-selection and Simultaneous Inference}\label{sec:Simultaneous}

The first step towards achieving the goal of constructing a set of confidence regions $\{\hat{\mathcal{R}}_{n,M}:\, M\in\mathcal{M}_p\}$ satisfying \eqref{eq:PoSIRequire} is to convert the post-selection inference problem into a simultaneous inference problem. This conversion is provided by Theorem \ref{thm:Uniform}, which parallels \cite{Berk13} but offers the generality needed here. The theorem is proved for finite samples, but a version using ``$\liminf$'' follows readily.

\begin{thm}\label{thm:Uniform}
For any set of confidence regions $\{\hat{\mathcal{R}}_{n,M}:\,M\in\mathcal{M}_p\}$ and $\alpha\in[0,1]$, the following two statements are equivalent:
\begin{itemize}
\item[$(1)$] The post-selection inference problem is solved, that is,
\[
\mathbb{P}\left(\beta_{n,\hat{M}}\in\hat{\mathcal{R}}_{n,\hat{M}}\right) \ge 1 - \alpha,
\]
for all data-dependent model selections satisfying $\mathbb{P}(\hat{M}\in\mathcal{M}_p) = 1$.
\item[$(2)$] The simultaneous inference problem over $M \!\in\! \mathcal{M}_p$ is solved, that is,
\[
\mathbb{P}\left(\bigcap_{M\in\mathcal{M}_p}\left\{\beta_{n,M}\in\hat{\mathcal{R}}_{n,M}\right\}\right) \ge 1 - \alpha.
\]
\end{itemize}
\end{thm}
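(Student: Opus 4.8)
The plan is to prove the two implications separately, with the direction $(2)\Rightarrow(1)$ being essentially a one-line set inclusion and the direction $(1)\Rightarrow(2)$ requiring the construction of a single ``adversarial'' model-selection rule that exposes any failure of simultaneous coverage.

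First I would dispatch $(2)\Rightarrow(1)$. Fix any data-dependent $\hat{M}$ with $\mathbb{P}(\hat{M}\in\mathcal{M}_p)=1$. On the intersection of the simultaneous-coverage event $\bigcap_{M\in\mathcal{M}_p}\{\beta_{n,M}\in\hat{\mathcal{R}}_{n,M}\}$ with the probability-one event $\{\hat{M}\in\mathcal{M}_p\}$, the target $\beta_{n,\hat{M}}$ lies in $\hat{\mathcal{R}}_{n,\hat{M}}$ simply because it lies in the region for \emph{every} model, in particular for the realized one. This yields the inclusion $\bigcap_{M\in\mathcal{M}_p}\{\beta_{n,M}\in\hat{\mathcal{R}}_{n,M}\}\cap\{\hat{M}\in\mathcal{M}_p\}\subseteq\{\beta_{n,\hat{M}}\in\hat{\mathcal{R}}_{n,\hat{M}}\}$, and since $\{\hat{M}\in\mathcal{M}_p\}$ has probability one, monotonicity of $\mathbb{P}$ gives $(1)$.

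The core of the argument is $(1)\Rightarrow(2)$, for which I would exhibit a particular selection rule $\hat{M}^{\ast}$ whose coverage-failure event coincides exactly with the complement of the simultaneous-coverage event. Because $\mathcal{M}_p$ is a \emph{finite} collection, I would fix an enumeration $M_1,\ldots,M_K$ of its members and define $\hat{M}^{\ast}$ to be the first $M_j$ in this ordering for which $\beta_{n,M_j}\notin\hat{\mathcal{R}}_{n,M_j}$ whenever such a model exists, and to be $M_1$ otherwise. Here it is essential to recall that each $\beta_{n,M}$ is a deterministic target while each $\hat{\mathcal{R}}_{n,M}$ is data-dependent, so the events $\{\beta_{n,M}\notin\hat{\mathcal{R}}_{n,M}\}$ are genuine random events and $\hat{M}^{\ast}$ is a bona fide data-dependent model with $\mathbb{P}(\hat{M}^{\ast}\in\mathcal{M}_p)=1$. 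By construction, $\{\beta_{n,\hat{M}^{\ast}}\notin\hat{\mathcal{R}}_{n,\hat{M}^{\ast}}\}$ is precisely $\bigcup_{M\in\mathcal{M}_p}\{\beta_{n,M}\notin\hat{\mathcal{R}}_{n,M}\}$. Applying hypothesis $(1)$ to this $\hat{M}^{\ast}$ forces $\mathbb{P}(\beta_{n,\hat{M}^{\ast}}\notin\hat{\mathcal{R}}_{n,\hat{M}^{\ast}})\le\alpha$, and taking complements through the finite union delivers $(2)$.

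The step I expect to require the most care—rather than genuine difficulty—is verifying that $\hat{M}^{\ast}$ is an admissible selection rule, i.e.\ that it is measurable and satisfies $\mathbb{P}(\hat{M}^{\ast}\in\mathcal{M}_p)=1$. Finiteness of $\mathcal{M}_p$ makes this routine: $\hat{M}^{\ast}$ is assembled from finitely many measurable events $\{\beta_{n,M}\notin\hat{\mathcal{R}}_{n,M}\}$ (measurable once the regions are assumed measurable) via a fixed tie-breaking order, so no measurable-selection theorem is invoked. The only conceptual point is that the universal quantifier ``for all $\hat{M}$'' in $(1)$ is strong enough to cover this worst-case rule; everything else is bookkeeping with complements and the finite union. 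The argument is exact at the finite-sample level stated in the theorem, and the ``$\liminf$'' version mentioned in the text follows by passing to the limit in the resulting inequality.
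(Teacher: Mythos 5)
Your proposal is correct and follows essentially the same route as the paper: the same set-inclusion argument for $(2)\Rightarrow(1)$, and for $(1)\Rightarrow(2)$ the same adversarial construction — your ``first model in a fixed enumeration that fails coverage'' is just an explicit tie-breaking version of the paper's $\hat{M}\in\argmin_{M\in\mathcal{M}_p}\mathbbm{1}\{\mathcal{A}_M\}$, and your identity $\{\beta_{n,\hat{M}^{\ast}}\notin\hat{\mathcal{R}}_{n,\hat{M}^{\ast}}\}=\bigcup_{M\in\mathcal{M}_p}\{\beta_{n,M}\notin\hat{\mathcal{R}}_{n,M}\}$ is the De Morgan complement of the paper's $\mathcal{A}_{\hat{M}}=\bigcap_{M\in\mathcal{M}_p}\mathcal{A}_M$. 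Your added remarks on measurability and finiteness of $\mathcal{M}_p$ are sound but inessential to the argument.
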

\begin{proof}
Define for any fixed $M\!\in\!\mathcal{M}_p$ the coverage event $\mathcal{A}_M \!=\! \{\beta_{n,M}\in\hat{\mathcal{R}}_{n,M}\}$, and similarly $\mathcal{A}_{\hat{M}} \!=\! \{\beta_{n,\hat{M}} \in \hat{\mathcal{R}}_{n,\hat{M}}\}$. Note that $\mathcal{A}_{\hat{M}}$ is the event in $(1)$ and $\bigcap_{M\in\mathcal{M}_p} \mathcal{A}_M$ the event in (2).

$(2)\Rightarrow(1)$: It is sufficient to show that for any random selection procedure $\hat{M}$ we have
\[
\bigcap_{M\in\mathcal{M}_p} \mathcal{A}_M ~\subseteq~ \mathcal{A}_{\hat{M}}.
\]
Because $\hat{M}$ takes on values in $\mathcal{M}_p$ only, $\bigcup_{M' \in \mathcal{M}_p} \{ \hat{M} = M' \}$ is the whole sample space.  Hence
\begin{align*}
\mathcal{A}_{\hat{M}} &= \bigcup_{M^{\prime}\in\mathcal{M}_p} \{\hat{M} = M^{\prime}\} \cap \mathcal{A}_{M^{\prime}} \\
&\supseteq \bigcup_{M^{\prime}\in\mathcal{M}_p} \{\hat{M} = M^{\prime}\} \cap \bigcap_{M\in\mathcal{M}_p}\mathcal{A}_M \\
&= \bigcap_{M\in\mathcal{M}_p}\mathcal{A}_M .
\end{align*}
% Because $\hat{M}$ is a map with values in $\mathcal{M}_p$
% Because $\mathbb{P}(\hat{M} \!\in\! \mathcal{M}_p) \!=\! 1$, we can decompose the coverage under $\hat{M}$ into coverages across all fixed~$M \!\in\! \mathcal{M}_p$:
% \begin{align*}
% % \mathbb{P}\left(\beta_{n,\hat{M}} \in \hat{\mathcal{R}}_{n,\hat{M}}\right) =
% \mathbb{P}\left(\mathcal{A}_{\hat{M}}\right) &= \sum_{M^{\prime}\in\mathcal{M}_p}\mathbb{P}\left(\mathcal{A}_{M^{\prime}}\cap\{\hat{M} = M^{\prime}\}\right) \\
% &\ge \sum_{M^{\prime}\in\mathcal{M}_p}\mathbb{P}\left(\left\{\bigcap_{M\in\mathcal{M}_p}\mathcal{A}_M\right\}\cap\{\hat{M} = M^{\prime}\}\right) \\
% &= \mathbb{P}\left(\bigcap_{M\in\mathcal{M}_p}\mathcal{A}_M\right) .
% % = \mathbb{P}\left(\bigcap_{M\in\mathcal{M}_p}\left\{\beta_{n,M} \in \hat{\mathcal{R}}_{n,M}\right\}\right).
% \end{align*}
% This proves $(2)\Rightarrow(1)$.

$(1)\Rightarrow(2)$: To prove this implication, it is sufficient to construct a data-driven (hence random) selection procedure $\hat{M}$ that satisfies
\begin{equation} \label{eq:worst-Mhat}
\mathcal{A}_{\hat{M}} ~= \bigcap_{M\in\mathcal{M}_p} \mathcal{A}_M .
\end{equation}
This is achieved by letting $\hat{M}$ be any selection procedure that satisfies
\[
\hat{M} ~\in~ \argmin_{M\in\mathcal{M}_p}\, \mathbbm{1}\{\mathcal{A}_M\} ,
\]
where $\mathbbm{1}\{A\}$ denotes the indicator of event $A$. It follows that
\[
\mathbbm{1}\{\mathcal{A}_{\hat{M}}\} ~=~ \min_{M\in\mathcal{M}_p}\,\mathbbm{1}\{\mathcal{A}_M\} ,
\]
which is equivalent to \eqref{eq:worst-Mhat}. This completes the proof of $(1)\Rightarrow(2)$.
\end{proof}

\begin{rem}\label{rem:theorem1-proof}
The proof makes no use of the regression context at all; it is merely about indexed sets/events $\mathcal{A}_M$ and random selections $\hat{M}$ of the indexes~$M$. The second part of the proof constructs an adversarial random selection procedure $\hat{M}$ that requires simultaneous coverage over all~$M$.
\end{rem}
\begin{rem}\label{rem:theorem1-meaning}
The theorem establishes the equivalence of family-wise simultaneous coverage and post-selection coverage allowing for arbitrary random selection. The argument, because it makes no use of the regression context, applies to any type of regression.
\end{rem}
\begin{rem}\label{rem:theorem1-Berk13}
Lemma~4.1 in \cite{Berk13} (``Significant triviality bound'') corresponding to Theorem~\ref{thm:Uniform} is much more intuitive because it is based on maxima over pivotal $t$-statistics rather than confidence regions. The gain in intuition, however, is purchased at a price: an injection of mathematically irrelevant detail. The bare-bones nature of the underlying structure is revealed by the above proof which does not even involve probability but set theory only.
\end{rem}
%\begin{rem}
%It should be clear from the proof why we require the set of models $\mathcal{M}_p$ to be fixed apriori independent of the data. See Remark \ref{rem:Under}.
%\end{rem}
\begin{rem}\;(Inherent High-dimensionality)\label{rem:highdim}
Returning to regression, note that in view of Theorem~\ref{thm:Uniform}, valid post-selection inference is inherently a high-dimensional problem in the sense that the number of parameters subject to estimation and inference is large, indeed, often larger than the sample size. For illustration, consider a common regression setting where the number of covariates is $p = 10$ and the sample of size $n = 500$. Estimation and testing of the slopes in the full model seems unproblematic because there are 50 observations per parameter. Now, for the post-selection inference problem with all non-empty sub-models, there are $2^{p} - 1 = 1023$ vector parameters of varying dimensions, adding up to a total of $p2^{p-1} = 5120$ parameters in the various submodels, exceeding the sample size $n=500$ by a factor of ten and thus constituting an inference problem in the high-dimensional category.
\end{rem}
Theorem \ref{thm:Uniform} shows that in order to achieve universally valid post-selection inference, that is, inference that satisfies \eqref{eq:PoSIRequire} {\bf\em for all} data-driven selection procedures $\hat{M}$, it is necessary and sufficient to construct a set of confidence regions $\hat{\mathcal{R}}_{n,M}$ such that
\begin{equation}\label{eq:PoSIRequire2}
\liminf_{n\to\infty}\mathbb{P}\left(\bigcap_{M\in\mathcal{M}_p}\left\{\beta_{n,M} \in \hat{\mathcal{R}}_{n,M}\right\}\right) \ge 1 - \alpha.
\end{equation}
All of our solutions to the post-selection inference problem in this article are constructed to satisfy~\eqref{eq:PoSIRequire2}.

%----------------------------------------------------------------

\section{An Approach to Post-Selection Inference}\label{sec:FirstStrat}

\subsection{Valid Confidence Regions} \label{sec:ValidConf}

Equipped with the required notation, we proceed to construct confidence regions $\hat{\mathcal{R}}_{n,M}$ for linear regression. From Equations~\eqref{eq:EmpObj} and~\eqref{eq:PopObj}, we see that the least squares estimator and target given in \eqref{eq:LeastSquares} can be written as
\begin{equation}\label{eq:ModifiedObjective}
\begin{split}
\hat{\beta}_{n,M} &= \argmin_{\theta\in\mathbb{R}^{|M|}}\, \left\{\theta^{\top}\hat{\Sigma}_n(M)\theta - 2\theta^{\top}\hat{\Gamma}_n(M)\right\},\quad \mbox{and}\\
\beta_{n,M} &= \argmin_{\theta\in\mathbb{R}^{|M|}}\, \left\{\theta^{\top}\Sigma_n(M)\theta - 2\theta^{\top}\Gamma_n(M)\right\}.
\end{split}
\end{equation}
%Recall the definitions of $\mathcal{D}_{n}^{\Gamma}$ and $\mathcal{D}_{n}^{\Sigma}$ given in \eqref{eq:D1nD2n}. From the definitions of $\hat{\beta}_{n,M}$ and $\beta_{n,M}$, the difference between the sample and population objective functions is bounded in terms of $\mathcal{D}_{n}^{\Gamma}$ and $\mathcal{D}_{n}^{\Sigma}$, that is, for any $M\subseteq\mathcal{M}_p(p)$ and $\theta\in\mathbb{R}^{|M|}$
%\[
%\left|\theta^{\top}\Sigma_n(M)\theta - 2\theta^{\top}\Gamma_n(M) - \theta^{\top}\Sigma(M)\theta + 2\theta^{\top}\Gamma(M)\right| \le \norm{\theta}_1^2\mathcal{D}_{n} + 2\norm{\theta}_1\mathcal{D}_{n}.
%\]
%This is the fundamental reason why we can derive most of the results in terms of $\mathcal{D}_{n}^{\Gamma}$ and $\mathcal{D}_{n}^{\Sigma}$. See Lemma S.2.2 of the supplementary material for more details.
% Using \eqref{eq:MatrixVector-OLS},
The differences between the two objective functions in \eqref{eq:ModifiedObjective} can be controlled in terms of two error norms below related to the $\Sigma$ matrices and the $\Gamma$ vectors defined in \eqref{eq:MatrixVector}. Define therefore the estimation errors of $\hat{\Sigma}_n$ and $\hat{\Gamma}_n$ as follows:
\begin{equation}\label{eq:D1nD2n}
\begin{split}
\mathcal{D}_{n}^{\Sigma} &~:=~ \norm{\hat{\Sigma}_n - \Sigma_n}_{\infty} = \max_{M\in\mathcal{M}_p(2)}\norm{\hat{\Sigma}_n(M) - \Sigma_n(M)}_{\infty},
\\
\mathcal{D}_{n}^{\Gamma} &~:=~ \norm{\hat{\Gamma}_n - \Gamma_n}_{\infty} = \max_{M\in\mathcal{M}_p(1)}\norm{\hat{\Gamma}_n(M) - \Gamma_n(M)}_{\infty}.
\end{split}
\end{equation}
%%%%%%%%%%%%%%%%%% AB: Arun, is the following sentence correct?
The equalities on the right are useful trivialities given here for later use: $\mathcal{M}_p(2)$ and $\mathcal{M}_p(1)$ are the sets of all models of sizes bounded by 2 and 1, respectively, where size 1 is sufficient for ``$\max$'' to reach all elements of the $\Gamma$ vectors, but size 2 is needed for ``$\max$'' to reach all off-diagonal elements of the $\Sigma$ matrices as well. Importantly, neither $\mathcal{D}_{n}^{\Sigma}$ nor $\mathcal{D}_{n}^{\Gamma}$ is a function of submodels~$M$.

The quantities $\mathcal{D}_{n}^{\Sigma}$ and $\mathcal{D}_{n}^{\Gamma}$ are statistics whose quantiles will play an essential role in the construction of the confidence regions to be defined next.  In each submodel $M \in \mathcal{M}_p(p)$, we will construct for the parameter vector $\beta_{n,M}$ two confidence regions: The first satisfies {\em finite sample} guarantees at the cost of lesser transparency, whereas the second satisfies {\em asymptotic} guarantees with the benefit of greater simplicity. The motivations for the particular forms of these regions will become clear in the course of the elementary proofs of the theorems to follow.  With these preliminary remarks in mind, we define
\begin{align}
\hat{\mathcal{R}}_{n,M} &:= \left\{\theta\in\mathbb{R}^{|M|}:\, \norm{\hat{\Sigma}_n(M)\left\{\hat{\beta}_{n,M} - \theta\right\}}_{\infty} \le C_{n}^{\Gamma}(\alpha) + C_{n}^{\Sigma}(\alpha)\norm{\theta}_1\right\},
\label{eq:FirstFinite}\\
\hat{\mathcal{R}}_{n,M}^{\dagger} &:= \left\{\theta\in\mathbb{R}^{|M|}:\, \norm{\hat{\Sigma}_n(M)\left\{\hat{\beta}_{n,M} - \theta\right\}}_{\infty} \le C_{n}^{\Gamma}(\alpha) + C_{n}^{\Sigma}(\alpha)\norm{\hat{\beta}_{n,M}}_1\right\},
\label{eq:FirstAsym}
\end{align}
where $C_{n}^{\Gamma}(\alpha)$ and $C_{n}^{\Sigma}(\alpha)$ are bivariate joint quantiles of $\mathcal{D}_{n}^{\Gamma}$ and $\mathcal{D}_{n}^{\Sigma}$ in \eqref{eq:D1nD2n}, that is,
\begin{equation}\label{eq:ConfidenceR}
\mathbb{P}\left(\mathcal{D}_{n}^{\Gamma} \le C_{n}^{\Gamma}(\alpha) ~~\textrm{and}~~ \mathcal{D}_{n}^{\Sigma} \le C_{n}^{\Sigma}(\alpha)\right) ~\ge~ 1 - \alpha.
\end{equation}
\begin{rem}{ (Restriction of Models for Selection)}
The confidence regions defined in \eqref{eq:FirstFinite} and \eqref{eq:FirstAsym} do not take advantage of restricted model universes such as ``sparse model selection'' where $\hat{M} \in \mathcal{M}_p(k)$ searches only models of sizes up to $k~(< p)$. It might, however, be of practical interest to consider the post-selection inference problem when the set of models used in selection is indeed a strict subset of the set $\mathcal{M}_p(p)$ of all models. This can be accommodated with an obvious tweak whereby
\[
\mathcal{D}_{n}^{\Gamma}(\mathcal{M}_p) := \sup_{M\in\mathcal{M}_p}\norm{\hat{\Gamma}_n(M) - \Gamma_n(M)}_{\infty}\quad\mbox{and}\quad \mathcal{D}_{n}^{\Sigma}(\mathcal{M}_p) := \sup_{M\in\mathcal{M}_p}\norm{\hat{\Sigma}_n(M) - \Sigma_n(M)}_{\infty}
\]
become functions of the restricted model universe $\mathcal{M}_p ~(\subsetneq\mathcal{M}_p(p))$. Note, however, that according to~\eqref{eq:D1nD2n} we have $\mathcal{D}_{n}^{\Gamma}(\mathcal{M}_p)  = \mathcal{D}_{n}^{\Gamma}$ as long as the model universe $\mathcal{M}_p$ includes all models of size one, and $\mathcal{D}_{n}^{\Sigma}(\mathcal{M}_p) = \mathcal{D}_{n}^{\Sigma}$ as long as $\mathcal{M}_p$ includes all models of size two. This is the case, for example, when ``sparse model selection'' is used, meaning $\mathcal{M}_p = \mathcal{M}_p(k)$ for $k < p$. Thus confidence regions of the form \eqref{eq:FirstFinite} do not gain from ``sparse model selection.'' This is so because the regions depend effectively only on marginal and bivariate properties of the observations~$(X_i,Y_i)$ and their distributions through $\Gamma_n$, $\hat{\Gamma}_n$, $\Sigma_n$ and $\hat{\Sigma}_n$.
\end{rem}
\medskip
\noindent Further observations on $(\mathcal{D}_n^{\Gamma}, \mathcal{D}_n^{\Sigma})$ and $(C_n^{\Gamma}(\alpha), C_n^{\Sigma}(\alpha))$:
\begin{itemize} \itemsep 0em
% \item The statistics $\mathcal{D}_{n}^{\Gamma}$ and $\mathcal{D}_{n}^{\Sigma}$ do not depend on the submodel $M$, hence neither do their quantiles $C_{n}^{\Gamma}(\alpha)$ and~$C_{n}^{\Sigma}(\alpha)$. 
\item Bivariate quantiles are not unique: one may marginally increase one and decrease the other suitably, maintaining the bivariate coverage probability $1-\alpha$. Allowed is any choice of $C_{n}^{\Gamma}(\alpha)$ and $C_{n}^{\Sigma}(\alpha)$ that satisfies~\eqref{eq:ConfidenceR}. 
\item These quantiles are not known and must be estimated from the data. A bootstrap procedure to estimate them is described in Section~\ref{sec:Comp}. 
\item The estimation errors $\mathcal{D}_{n}^{\Gamma}$ and $\mathcal{D}_{n}^{\Sigma}$, being based on averages of quantities of dimensions $p \times 1$ and $p \times p$, respectively, converge by the law of large numbers to zero as $n\to\infty$ under mild conditions (see Lemma~\ref{lem:RateD1nD2n}). Therefore, $C_{n}^{\Gamma}(\alpha)$ and $C_{n}^{\Sigma}(\alpha)$ converge to zero as~$n\to\infty$.
\end{itemize}

%----------------------------------------------------------------

\subsection{Validity of the Confidence Regions $\hat{\mathcal{R}}_{n,M}$}

We proceed to proving validity of the simultaneous inference guarantee \eqref{eq:PoSIRequire2}. This will be done in Theorem \ref{thm:Appr1.2} for the confidence regions $\hat{\mathcal{R}}_{n,M}$ where $M \in \mathcal{M}_p(p)$, and in Theorem \ref{thm:PoSIX} for the confidence regions $\hat{\mathcal{R}}^{\dagger}_{n,M}$ where $M \in \mathcal{M}_p(k)$ for some $k \le p$.
\begin{thm}\label{thm:Appr1.2}
The set of confidence regions $\{\hat{\mathcal{R}}_{n,M}:\,M\in\mathcal{M}_p(p)\}$ defined in \eqref{eq:FirstFinite} satisfies
\begin{equation}\label{eq:SimultaneousG1.2}
\mathbb{P}\left(\bigcap_{M\in\mathcal{M}_p(p)}\left\{\beta_{n,M} \in \hat{\mathcal{R}}_{n,M}\right\}\right) \ge 1 - \alpha,
\end{equation}
Furthermore, for any random model $\hat{M}$ with $\mathbb{P}(\hat{M}\in\mathcal{M}_p(p)) = 1$, we have
\[
\mathbb{P}\left(\beta_{n,\hat{M}}\in\hat{\mathcal{R}}_{n,\hat{M}}\right) \ge 1 - \alpha.
\]
\end{thm}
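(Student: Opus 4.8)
The plan is to prove the simultaneous guarantee \eqref{eq:SimultaneousG1.2} by a single deterministic inequality valid uniformly over $M \in \mathcal{M}_p(p)$, and then to deduce the displayed post-selection statement directly from the equivalence in Theorem~\ref{thm:Uniform}. The whole argument rests on algebra of the normal equations together with the elementary bounds in \eqref{eq:MatrixVectorIneq}; there is essentially no probabilistic content beyond \eqref{eq:ConfidenceR}.

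First I would evaluate the quantity defining $\hat{\mathcal{R}}_{n,M}$ at the target $\theta = \beta_{n,M}$ and rewrite it using both identities in \eqref{eq:MatrixVector-OLS}. Substituting $\hat{\Sigma}_n(M)\hat{\beta}_{n,M} = \hat{\Gamma}_n(M)$ and $\Sigma_n(M)\beta_{n,M} = \Gamma_n(M)$, and adding and subtracting $\Gamma_n(M)$, gives the decomposition
\[
\hat{\Sigma}_n(M)\left\{\hat{\beta}_{n,M} - \beta_{n,M}\right\} = \left\{\hat{\Gamma}_n(M) - \Gamma_n(M)\right\} - \left\{\hat{\Sigma}_n(M) - \Sigma_n(M)\right\}\beta_{n,M}.
\]
Two features of this identity are worth flagging. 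Its left-hand side equals $\hat{\Gamma}_n(M) - \hat{\Sigma}_n(M)\beta_{n,M}$ and therefore does not depend on any particular choice of (possibly non-unique) minimizer $\hat{\beta}_{n,M}$, so the region is well defined even under collinearity; and the error of $\hat{\Sigma}_n$ is multiplied by the \emph{target} $\beta_{n,M}$, which is precisely the reason \eqref{eq:FirstFinite} carries the term $C_{n}^{\Sigma}(\alpha)\norm{\theta}_1$ evaluated at $\theta = \beta_{n,M}$.

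Next I would take $\norm{\cdot}_{\infty}$ on both sides, apply the triangle inequality and the bound $\norm{Av}_{\infty} \le \norm{A}_{\infty}\norm{v}_1$ from \eqref{eq:MatrixVectorIneq}, and then dominate the two submodel quantities by their global counterparts: every entry of $\hat{\Gamma}_n(M) - \Gamma_n(M)$ and of $\hat{\Sigma}_n(M) - \Sigma_n(M)$ appears among the entries controlled by $\mathcal{D}_{n}^{\Gamma}$ and $\mathcal{D}_{n}^{\Sigma}$ in \eqref{eq:D1nD2n}. This yields, \emph{simultaneously for all} $M \in \mathcal{M}_p(p)$ on a single realization,
\[
\norm{\hat{\Sigma}_n(M)\left\{\hat{\beta}_{n,M} - \beta_{n,M}\right\}}_{\infty} \le \mathcal{D}_{n}^{\Gamma} + \mathcal{D}_{n}^{\Sigma}\norm{\beta_{n,M}}_1.
\]
Consequently, on the event $\{\mathcal{D}_{n}^{\Gamma} \le C_{n}^{\Gamma}(\alpha)\} \cap \{\mathcal{D}_{n}^{\Sigma} \le C_{n}^{\Sigma}(\alpha)\}$ the defining inequality of $\hat{\mathcal{R}}_{n,M}$ in \eqref{eq:FirstFinite} holds at $\theta = \beta_{n,M}$ for every $M$; that is, this event is contained in $\bigcap_{M\in\mathcal{M}_p(p)}\{\beta_{n,M} \in \hat{\mathcal{R}}_{n,M}\}$. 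Taking probabilities and using \eqref{eq:ConfidenceR} gives \eqref{eq:SimultaneousG1.2}, and the post-selection statement then follows verbatim from the implication $(2)\Rightarrow(1)$ of Theorem~\ref{thm:Uniform} applied with $\mathcal{M}_p = \mathcal{M}_p(p)$.

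I do not expect a genuine obstacle: the confidence region was designed so that the per-model bound collapses onto the two global statistics $\mathcal{D}_{n}^{\Gamma}, \mathcal{D}_{n}^{\Sigma}$ whose joint quantiles are controlled by construction. The only points demanding care are bookkeeping ones, namely that the $\norm{\cdot}_1$ factor attaches to the unknown target $\beta_{n,M}$ rather than to $\hat{\beta}_{n,M}$ (this distinguishes \eqref{eq:FirstFinite} from the asymptotic region and must be respected in the inequality), and that $\beta_{n,M}$ may be non-unique for large models, in which case the bound is read for any fixed minimizer satisfying the second identity in \eqref{eq:MatrixVector-OLS}.
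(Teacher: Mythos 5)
Your proposal is correct and takes essentially the same route as the paper's own proof: the identity you obtain from the two normal equations in \eqref{eq:MatrixVector-OLS} is exactly the paper's telescoped decomposition, and the remaining steps --- sup norm with the triangle inequality, the bound $\norm{Av}_{\infty}\le\norm{A}_{\infty}\norm{v}_1$ from \eqref{eq:MatrixVectorIneq}, domination of the submodel errors by $\mathcal{D}_{n}^{\Gamma}$ and $\mathcal{D}_{n}^{\Sigma}$, invoking \eqref{eq:ConfidenceR}, and deducing the post-selection claim from Theorem~\ref{thm:Uniform} --- match the paper's argument step for step. Your side remarks (that the left side equals $\hat{\Gamma}_n(M)-\hat{\Sigma}_n(M)\beta_{n,M}$ and so is insensitive to the choice of minimizer, and that the $\norm{\cdot}_1$ factor must attach to the target rather than the estimate) are accurate but do not change the proof.
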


As mentioned earlier, this theorem is non-asymptotic as it provides guarantees for finite samples. It is, however, not directly actionable because, as mentioned earlier also, the bivariate quantiles used in the construction of the confidence regions need to be estimated. Hence actionable versions of these regions end up having only asymptotic guarantees as well.

\begin{proof}
The proof is surprisingly elementary and involves simple manipulation of the estimating equations. We start by subtracting the normal equations of the target from those of the estimates, see \eqref{eq:MatrixVector-OLS}. This holds for all $M \in \mathcal{M}_p(p)$:
% From the definitions of the estimator $\hat{\beta}_{n,M}$ and target $\beta_{n,M}$ given in \eqref{eq:MatrixVector-OLS}, these vectors satisfy the equations
% \begin{align}
% \hat{\Sigma}_n(M)\hat{\beta}_{n,M} - \hat{\Gamma}_n(M) &= 0,\quad\mbox{for all}\quad M\in\mathcal{M}_p(p),
% \label{eq:EmpiricalNormal}\\
% \Sigma_n(M)\beta_{n,M} - \Gamma_n(M) &= 0,\quad\mbox{for all}\quad M\in\mathcal{M}_p(p).
% \label{eq:ExpectedNormal}
% \end{align}
\begin{align}
% \hat{\Sigma}_n(M)\hat{\beta}_{n,M}
% &~=~ \hat{\Gamma}_n(M) ,
% \\
% \Sigma_n(M)\beta_{n,M}
% &~=~ \Gamma_n(M) ,
% \\
\hat{\Sigma}_n(M)\hat{\beta}_{n,M} - \Sigma_n(M)\beta_{n,M}
&~=~ \hat{\Gamma}_n(M) - \Gamma_n(M) .
\end{align}
Telescope the left side by subtracting and adding $\hat{\Sigma}_n(M) \beta_{n,M}$:
\begin{align}
\hat{\Sigma}_n(M) \left( \hat{\beta}_{n,M} - \beta_{n,M} \right)
+
\left( \hat{\Sigma}_n(M)- \Sigma_n(M) \right) \beta_{n,M}
&~=~ \hat{\Gamma}_n(M) - \Gamma_n(M) ,
\end{align}
Move the second summand on the left to the right side of the equality, take the sup norm and apply the triangle inequality on the right side:
\begin{align}
% \hat{\Sigma}_n(M) \left( \hat{\beta}_{n,M} - \beta_{n,M} \right)
% &~=~
% - \left( \hat{\Sigma}_n(M)- \Sigma_n(M) \right) \beta_{n,M}
% + \left( \hat{\Gamma}_n(M) - \Gamma_n(M) \right) ,
% \\
\left\| \hat{\Sigma}_n(M) \left( \hat{\beta}_{n,M} - \beta_{n,M} \right) \right\|_\infty
&~\le~
\left\| \hat{\Gamma}_n(M) - \Gamma_n(M) \right\|_\infty +
\left\| \left( \hat{\Sigma}_n(M)- \Sigma_n(M) \right) \beta_{n,M} \right\|_\infty ,
% \\...
\end{align}
%The first of these equations \eqref{eq:EmpiricalNormal} is just the normal equation for linear regression.
% We center $\hat{\beta}_{n,M}$ at its target $\beta_{n,M}$ by adding and subtracting $\beta_{n,M}$ from $\hat{\beta}_{n,M}$ in~\eqref{eq:EmpiricalNormal}:
% \begin{equation} \label{eq:linear-representation}
% \hat{\Sigma}_n(M)\left\{\hat{\beta}_{n,M} - {\beta}_{n,M}\right\} = \hat{\Gamma}_n(M) - \hat{\Sigma}_n(M)\beta_{n,M}\quad\mbox{for all}\quad M\in\mathcal{M}_p(p).
% \end{equation}
% Subtracting \eqref{eq:ExpectedNormal} from \eqref{eq:linear-representation} leads to
% \[
% \left[\hat{\Gamma}_n(M) - \Gamma_n(M)\right]-\left[\hat{\Sigma}_n(M) - \Sigma_n(M)\right]\beta_{n,M} = \hat{\Sigma}_n(M)\left\{\hat{\beta}_{n,M} - {\beta}_{n,M}\right\}.
% \]
% Taking $\norm{\cdot}_{\infty}$ on both sides and using the triangle inequality we obtain
% \begin{equation}\label{eq:Conserv1}
% \norm{\hat{\Sigma}_n(M)\left\{\hat{\beta}_{n,M}-\beta_{n,M}\right\}}_{\infty} \le \norm{\hat{\Gamma}_n(M) - \Gamma_n(M)}_{\infty} + \norm{\left[\hat{\Sigma}_n(M) - \Sigma_n(M)\right]\beta_{n,M}}_{\infty}.
% \end{equation}
Applying the second inequality in \eqref{eq:MatrixVectorIneq} to the last term it follows that
\begin{equation}\label{eq:Conserv2}
\norm{\hat{\Sigma}_n(M)\left\{\hat{\beta}_{n,M}-\beta_{n,M}\right\}}_{\infty} \le \norm{\hat{\Gamma}_n(M) - \Gamma_n(M)}_{\infty} + \norm{\hat{\Sigma}_n(M) - \Sigma_n(M)}_{\infty}\norm{\beta_{n,M}}_{1}.
\end{equation}
Because $\hat{\Gamma}_n(M) - \Gamma_n(M)$ and $\hat{\Sigma}_n(M) - \Sigma_n(M)$ are a subvector and a submatrix of $\hat{\Gamma}_n - \Gamma_n$ and $\hat{\Sigma}_n - \Sigma_n$, respectively, this inequality implies
\begin{equation}\label{eq:MainConserv}
\norm{\hat{\Sigma}_n(M)\left\{\beta_{n,M} - \hat{\beta}_{n,M}\right\}}_{\infty} \le \norm{\hat{\Gamma}_n - \Gamma_n}_{\infty} + \norm{\hat{\Sigma}_n - \Sigma_n}_{\infty}\norm{\beta_{n,M}}_{1}.
\end{equation}
This inequality is deterministic and holds for any sample. It also holds for all $M \in \mathcal{M}_p(p)$. These facts allow us to take the intersection of the events \eqref{eq:MainConserv} over all submodels $M$ and transform it into a ``probability one'' statement.  Using $\mathcal{D}_{n}^{\Gamma}$ and $\mathcal{D}_{n}^{\Sigma}$ defined in \eqref{eq:D1nD2n}, we have
\begin{equation}\label{eq:MainProb1}
\mathbb{P}\left(\bigcap_{M\in\mathcal{M}_p(p)}\left\{\norm{\Sigma_n(M)\left\{\beta_{n,M} - \hat{\beta}_{n,M}\right\}}_{\infty}
~\le~
\mathcal{D}_{n}^{\Gamma} + \mathcal{D}_{n}^{\Sigma}\norm{\beta_{n,M}}_{1}\right\}\right)
~=~ 1.
\end{equation}
From the definitions of $C_{n}^{\Gamma}(\alpha)$ and $C_{n}^{\Sigma}(\alpha)$ in \eqref{eq:ConfidenceR} follows the required result \eqref{eq:SimultaneousG1.2}. The second result of post-selection guarantees for random models follows by an application of Theorem~\ref{thm:Uniform}.
\end{proof}
\begin{rem}{ (Reach of the Validity Guarantee)}
It is interesting to note that the guarantee \eqref{eq:SimultaneousG1.2} in Theorem \ref{thm:Appr1.2} is valid for every sample size $n$ and any number of covariates $p$. In particular, $p \gg n$ and $p = \infty$ are covered without difficulty even though $\hat{\Sigma}_n(M)$ is necessarily singular for $|M| > n$. For this to make sense recall that for singular $\hat{\Sigma}_n(M)$ the confidence region $\hat{\mathcal{R}}_{n,M}$ simply contains a non-trivial affine subspace of~$\reals^p$.
\end{rem}
\begin{rem} { (Estimation of Bivariate Quantiles)}
The finite sample guarantee \eqref{eq:SimultaneousG1.2} requires the bivariate quantiles $C_{n}^{\Gamma}(\alpha)$ and $C_{n}^{\Sigma}(\alpha)$ of $\mathcal{D}_{n}^{\Gamma}$ and $\mathcal{D}_{n}^{\Sigma}$, respectively, to satisfy \eqref{eq:ConfidenceR} for all $p, n \ge 1$. In general, these bivariate quantiles can only be estimated consistently in the asymptotic sense as explained in Section~\ref{sec:Comp}.
\end{rem}
\begin{rem}{ (Independence of Observations)}\label{rem:Indep}
For simplicity in the discussion above, we used the assumption of independence of random vectors $(X_i, Y_i), 1\le i\le n$. Theorem \ref{thm:Appr1.2} holds without this assumption because no use of this assumption was made in its proof. However, validity of the post-selection guarantee holds as long as $C_{n}^{\Gamma}(\alpha)$ and $C_{n}^{\Sigma}(\alpha)$ are valid quantiles in the sense of~\eqref{eq:ConfidenceR}. %In light of this remark, the applicability of Theorem \ref{thm:Appr1.2} holds way beyond the unified framework of Chapter \ref{chap:AssumpLean}.
\end{rem}
%%%%%%%%%%%%%%%%%%%%%%%%%%%%%%%%%%%%%%%%%%%%%%%%%
%%%%%%%%%%%%%%%%%%%%%%%%%%%%%%%%%%%%%%%%%%%%%%%%%

%----------------------------------------------------------------

\subsection[Asymptotic Validity of dagger]{Asymptotic Validity of the Confidence Regions $\hat{\mathcal{R}}_{n,M}^{\dagger}$}

The confidence region $\hat{\mathcal{R}}_{n,M}$ is difficult to analyze in terms of its shape and its Lebesgue measure. (However, with a different parametrization of $\hat{\mathcal{R}}_{n,M}$, \cite{Belloni17} prove that this confidence region is a convex polyhedron; see Equation (42) of the supplement of \cite{Belloni17}.) Because of these difficulties we also prove asymptotic validity of more intuitive confidence regions of the form $\hat{\mathcal{R}}_{n,M}^{\dagger}$ defined in \eqref{eq:FirstAsym}. Because these regions depend on estimates $\hat{\beta}_{n,M}$ whose variability explodes under increasing collinearity, we need to control the minimum eigenvalue of the matrix $\Sigma_n(M)$ for models up to size $k$ to preclude too much collinearity in the limit:
\begin{equation}\label{eq:Lambdak}
\Lambda_n(k) := \min_{M\in\mathcal{M}_p(k)}\lambda_{\min}(\Sigma_n(M)).
\end{equation}
We then make use of the following assumption:

\begin{description}
\item[\namedlabel{eq:UniformConsis}{(A1)(k)}] 
The estimation error $\mathcal{D}_{n}^{\Sigma}$ satisfies
\[
k\mathcal{D}_{n}^{\Sigma} = o_p\left(\Lambda_n(k)\right)\quad\mbox{as}\quad n\to\infty.
\]
\end{description}
This assumption is used for uniform consistency of the least squares estimator in $\norm{\cdot}_1$-norm as in Lemma \ref{lem:UniformConsisL1}. The rate of convergence of $\mathcal{D}_{n}^{\Sigma}$ to zero implies a rate constraint on $k$. Here, as before, $k = k_n$ is allowed to be a sequence depending on $n$. As can be expected, the dependence structure between the random vectors $(X_i, Y_i), 1\le i\le n$ and their moments determine the rate at which $\mathcal{D}_{n}^{\Sigma}$ converges to zero. See Lemma \ref{lem:RateD1nD2n} for more details. The theorem is stated with this high level assumption so that it is more widely applicable in particular to various structural dependencies on observations. Note that assumption \ref{eq:UniformConsis} allows for the minimum eigenvalue of $\Sigma_n$ to converge to zero or even be zero as $n\to\infty$ if $p = p_n$ changes with $n$.

Before proceeding to the proof that $\hat{\mathcal{R}}^{\dagger}_{n,M}$ are asymptotically valid post-selection confidence regions, we prove uniform-in-model consistency of $\hat{\beta}_{n,M}$ to $\beta_{n,M}$. See Appendix \ref{app:ProofLemmaL1} for a detailed proof. Also, see \cite{Uniform:Kuch18} for more results of this flavor.
\begin{lem}\label{lem:UniformConsisL1}
For all $k\ge 1$ satisfying $k\mathcal{D}_{n}^{\Sigma} \le \Lambda_n(k)$ and for all $M\in\mathcal{M}_p(k)$,
\begin{equation}\label{eq:MarginalL1}
\norm{\hat{\beta}_{n,M} - \beta_{n,M}}_1 \le \frac{|M|\left(\mathcal{D}_{n}^{\Gamma} + \mathcal{D}_{n}^{\Sigma}\norm{\beta_{n,M}}_1\right)}{\Lambda_n(k) - k\mathcal{D}_{n}^{\Sigma}}.
\end{equation}
\end{lem}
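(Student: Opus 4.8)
The plan is to convert the deterministic $\ell_\infty$-bound already established in the proof of Theorem~\ref{thm:Appr1.2} into an $\ell_1$-bound on the error $\delta := \hat{\beta}_{n,M} - \beta_{n,M}$, using the lower bound on the minimum eigenvalue of the \emph{population} matrix $\Sigma_n(M)$ to undo the (possibly ill-conditioned) action of $\hat{\Sigma}_n(M)$. Write $m := |M| \le k$ throughout. From inequality~\eqref{eq:MainConserv} together with the definitions~\eqref{eq:D1nD2n} we already have the sample-free bound
\[
\norm{\hat{\Sigma}_n(M)\delta}_\infty ~\le~ \mathcal{D}_n^\Gamma + \mathcal{D}_n^\Sigma\norm{\beta_{n,M}}_1 .
\]
The left-hand side involves $\hat{\Sigma}_n(M)$, which can be ill-conditioned or singular, so the first step is to trade it for $\Sigma_n(M)$. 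Writing $\Sigma_n(M)\delta = \hat{\Sigma}_n(M)\delta - (\hat{\Sigma}_n(M) - \Sigma_n(M))\delta$, applying the triangle inequality, and bounding the perturbation term with the second inequality in~\eqref{eq:MatrixVectorIneq} (noting $\norm{\hat{\Sigma}_n(M)-\Sigma_n(M)}_\infty \le \mathcal{D}_n^\Sigma$) gives
\[
\norm{\Sigma_n(M)\delta}_\infty ~\le~ \norm{\hat{\Sigma}_n(M)\delta}_\infty + \mathcal{D}_n^\Sigma\norm{\delta}_1 ~\le~ \mathcal{D}_n^\Gamma + \mathcal{D}_n^\Sigma\norm{\beta_{n,M}}_1 + \mathcal{D}_n^\Sigma\norm{\delta}_1 .
\]

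The crux of the argument, and the step I expect to be the main obstacle, is the complementary \emph{lower} bound on $\norm{\Sigma_n(M)\delta}_\infty$ in terms of $\norm{\delta}_1$. Since $M\in\mathcal{M}_p(k)$, the definition~\eqref{eq:Lambdak} gives $\lambda_{\min}(\Sigma_n(M))\ge\Lambda_n(k)$, hence $\norm{\Sigma_n(M)\delta}_2 \ge \Lambda_n(k)\norm{\delta}_2$. Combining this with the two norm equivalences $\norm{w}_\infty \ge m^{-1/2}\norm{w}_2$ and $\norm{\delta}_2 \ge m^{-1/2}\norm{\delta}_1$ (the latter being the first inequality in~\eqref{eq:MatrixVectorIneq} applied in $\reals^m$) yields
\[
\norm{\Sigma_n(M)\delta}_\infty ~\ge~ \frac{1}{\sqrt{m}}\norm{\Sigma_n(M)\delta}_2 ~\ge~ \frac{\Lambda_n(k)}{\sqrt{m}}\norm{\delta}_2 ~\ge~ \frac{\Lambda_n(k)}{m}\norm{\delta}_1 .
\]
This is precisely where the factor $|M|$ in the numerator of the claimed bound~\eqref{eq:MarginalL1} is generated, and where one must keep track of the two $\sqrt{m}$ constants; getting the power of $m$ right is the only genuinely nontrivial bookkeeping in the proof.

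Chaining the two displays, moving the term $\mathcal{D}_n^\Sigma\norm{\delta}_1$ to the left, and multiplying through by $m$ gives
\[
\left(\Lambda_n(k) - m\mathcal{D}_n^\Sigma\right)\norm{\delta}_1 ~\le~ m\left(\mathcal{D}_n^\Gamma + \mathcal{D}_n^\Sigma\norm{\beta_{n,M}}_1\right) .
\]
Because $m\le k$ we have $\Lambda_n(k) - m\mathcal{D}_n^\Sigma \ge \Lambda_n(k) - k\mathcal{D}_n^\Sigma$, and the hypothesis $k\mathcal{D}_n^\Sigma \le \Lambda_n(k)$ keeps this nonnegative, so dividing and replacing $m$ by $|M|$ produces exactly~\eqref{eq:MarginalL1}. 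The one point needing a word of care is the positivity of the denominator and the well-definedness of $\delta$: when the hypothesis holds strictly, $k\mathcal{D}_n^\Sigma < \Lambda_n(k)$, every $\hat{\Sigma}_n(M)$ with $M\in\mathcal{M}_p(k)$ is in fact nonsingular, since a null vector $v$ would force $\Lambda_n(k) \le \norm{\Sigma_n(M)v}_2 = \norm{(\Sigma_n(M)-\hat{\Sigma}_n(M))v}_2 \le m\mathcal{D}_n^\Sigma \le k\mathcal{D}_n^\Sigma$, a contradiction; thus $\hat{\beta}_{n,M}$ is unique and $\delta$ is well defined, while in the boundary case of equality the right-hand side of~\eqref{eq:MarginalL1} is infinite and the bound holds vacuously.
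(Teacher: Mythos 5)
Your proof is correct, but it reaches \eqref{eq:MarginalL1} by a genuinely different route than the paper. The paper's proof (Appendix \ref{app:ProofLemmaL1}) solves the perturbed normal equations by inverting the \emph{sample} Gram matrix: it writes $\hat{\beta}_{n,M} - \beta_{n,M} = \hat{\Sigma}_n(M)^{-1}\bigl([\hat{\Gamma}_n(M)-\Gamma_n(M)] - [\hat{\Sigma}_n(M)-\Sigma_n(M)]\beta_{n,M}\bigr)$, establishes the sparse operator-norm perturbation bound $\norm{\hat{\Sigma}_n(M)-\Sigma_n(M)}_{op}\le k\mathcal{D}_n^{\Sigma}$ (inequality \eqref{eq:UniformNonSing}), deduces from it that the minimum eigenvalue of $\hat{\Sigma}_n(M)$ is at least $\Lambda_n(k)-k\mathcal{D}_n^{\Sigma}$, and first obtains the $\ell_2$ bound $\norm{\hat{\beta}_{n,M}-\beta_{n,M}}_2 \le |M|^{1/2}\bigl(\mathcal{D}_n^{\Gamma}+\mathcal{D}_n^{\Sigma}\norm{\beta_{n,M}}_1\bigr)/\bigl(\Lambda_n(k)-k\mathcal{D}_n^{\Sigma}\bigr)$ before converting to $\ell_1$. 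So in the paper the subtraction $-k\mathcal{D}_n^{\Sigma}$ in the denominator is a Weyl-type eigenvalue perturbation of the sample matrix, whereas in your argument it arises from a self-bounding rearrangement: you never invert $\hat{\Sigma}_n(M)$, you lower-bound $\norm{\Sigma_n(M)\delta}_\infty$ via the \emph{population} eigenvalue $\Lambda_n(k)$, and you absorb the perturbation term $\mathcal{D}_n^{\Sigma}\norm{\delta}_1$ back into the left-hand side. Your route buys two things: it recycles the display \eqref{eq:MainConserv} already proved for Theorem \ref{thm:Appr1.2}, and it needs only that $\hat{\beta}_{n,M}$ satisfy the normal equations, so it remains valid for any minimizer even when $\hat{\Sigma}_n(M)$ is singular --- a boundary case you treat explicitly and which the paper's inversion step glosses over. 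The paper's route, in exchange, produces the intermediate uniform $\ell_2$ bound and the perturbation inequality \eqref{eq:UniformNonSing} as reusable byproducts; both are invoked again later, for instance in the proof of Proposition \ref{lem:LebesgueMeasure} and in the RIP discussion of Section \ref{sec:RIP}.
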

The following theorem proves the validity of the simultaneous inference guarantee for~$\hat{\mathcal{R}}^{\dagger}_{n,M}$.
\begin{thm}\label{thm:PoSIX}
For every $1\le k\le p$ that satisfies \ref{eq:UniformConsis}, the confidence regions $\hat{\mathcal{R}}_{n,M}^{\dagger}$ defined in \eqref{eq:FirstAsym} satisfy
\[
\liminf_{n\to\infty}\,\mathbb{P}\left(\bigcap_{M\in\mathcal{M}_p(k)}\left\{\beta_{n,M}\in\hat{\mathcal{R}}_{n,M}^{\dagger}\right\}\right) \ge 1 - \alpha.
\]
\end{thm}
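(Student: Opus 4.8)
The plan is to reduce the claim to the finite-sample deterministic inequality already established in the proof of Theorem~\ref{thm:Appr1.2}, and then to pay for the replacement of $\norm{\theta}_1$ by $\norm{\hat{\beta}_{n,M}}_1$ in the definition \eqref{eq:FirstAsym} using the uniform-in-model consistency of Lemma~\ref{lem:UniformConsisL1}. Concretely, I would start from the deterministic bound \eqref{eq:MainConserv}, valid for every $M\in\mathcal{M}_p(p)$ and every sample,
\[
\norm{\hat{\Sigma}_n(M)\left\{\hat{\beta}_{n,M} - \beta_{n,M}\right\}}_{\infty} \le \mathcal{D}_{n}^{\Gamma} + \mathcal{D}_{n}^{\Sigma}\norm{\beta_{n,M}}_1 ,
\]
and insert the triangle inequality $\norm{\beta_{n,M}}_1 \le \norm{\hat{\beta}_{n,M}}_1 + \norm{\hat{\beta}_{n,M} - \beta_{n,M}}_1$ to convert the right-hand side into the radius appearing in $\hat{\mathcal{R}}_{n,M}^{\dagger}$ plus an error term,
\[
\norm{\hat{\Sigma}_n(M)\left\{\hat{\beta}_{n,M} - \beta_{n,M}\right\}}_{\infty} \le \mathcal{D}_{n}^{\Gamma} + \mathcal{D}_{n}^{\Sigma}\norm{\hat{\beta}_{n,M}}_1 + \mathcal{D}_{n}^{\Sigma}\norm{\hat{\beta}_{n,M} - \beta_{n,M}}_1 .
\]

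Next I would control the error term uniformly over $M\in\mathcal{M}_p(k)$. On the event $\{k\mathcal{D}_n^{\Sigma}\le\Lambda_n(k)\}$, Lemma~\ref{lem:UniformConsisL1} together with $|M|\le k$ gives
\[
\mathcal{D}_{n}^{\Sigma}\norm{\hat{\beta}_{n,M} - \beta_{n,M}}_1 \le \frac{k\mathcal{D}_n^{\Sigma}}{\Lambda_n(k) - k\mathcal{D}_n^{\Sigma}}\left(\mathcal{D}_{n}^{\Gamma} + \mathcal{D}_{n}^{\Sigma}\norm{\beta_{n,M}}_1\right) .
\]
Assumption~\ref{eq:UniformConsis} makes the leading factor $\rho_n := k\mathcal{D}_n^{\Sigma}/(\Lambda_n(k) - k\mathcal{D}_n^{\Sigma})$ an $o_p(1)$ quantity. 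Since the parenthetical factor differs from $\mathcal{D}_{n}^{\Gamma} + \mathcal{D}_{n}^{\Sigma}\norm{\hat{\beta}_{n,M}}_1$ by another copy of the error term, rearranging (solving the resulting self-referential bound for the deterministic radius) yields, uniformly in $M$,
\[
\norm{\hat{\Sigma}_n(M)\left\{\hat{\beta}_{n,M} - \beta_{n,M}\right\}}_{\infty} \le \frac{1}{1-\rho_n}\left(\mathcal{D}_{n}^{\Gamma} + \mathcal{D}_{n}^{\Sigma}\norm{\hat{\beta}_{n,M}}_1\right) = (1+o_p(1))\left(\mathcal{D}_{n}^{\Gamma} + \mathcal{D}_{n}^{\Sigma}\norm{\hat{\beta}_{n,M}}_1\right) .
\]
I would then intersect with the quantile event $\{\mathcal{D}_n^{\Gamma}\le C_n^{\Gamma}(\alpha),\ \mathcal{D}_n^{\Sigma}\le C_n^{\Sigma}(\alpha)\}$ from \eqref{eq:ConfidenceR}, which has probability at least $1-\alpha$, to replace $\mathcal{D}_n^{\Gamma},\mathcal{D}_n^{\Sigma}$ by $C_n^{\Gamma}(\alpha),C_n^{\Sigma}(\alpha)$ on the right. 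If one also wants the post-selection form, it follows from Theorem~\ref{thm:Uniform}.

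The hard part is precisely this last display: unlike in Theorem~\ref{thm:Appr1.2}, the change from $\norm{\beta_{n,M}}_1$ to $\norm{\hat{\beta}_{n,M}}_1$ is not free, and it leaves the coverage bound holding only up to the factor $1+o_p(1)$ on the radius $C_n^{\Gamma}(\alpha)+C_n^{\Sigma}(\alpha)\norm{\hat{\beta}_{n,M}}_1$, with the factor exceeding one — the unfavourable direction for concluding $\beta_{n,M}\in\hat{\mathcal{R}}_{n,M}^{\dagger}$ exactly. This is exactly why the guarantee is stated with $\liminf$ rather than for finite samples: the vanishing $\rho_n$, controlled jointly by the uniform rate of Lemma~\ref{lem:UniformConsisL1} and the rate constraint~\ref{eq:UniformConsis}, has to be absorbed in the limit. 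I expect to close this by noting that $C_n^{\Gamma}(\alpha),C_n^{\Sigma}(\alpha)\to 0$ and that the quantiles are required to satisfy \eqref{eq:ConfidenceR} only asymptotically, so a vanishing inflation of the radius does not erode the nominal level; the clean way to make this rigorous is to invoke the continuity of the limiting law of $(\mathcal{D}_n^{\Gamma},\mathcal{D}_n^{\Sigma})$ supplied by the high-dimensional central limit theorem of the appendix, so that the probability that $\norm{\hat{\Sigma}_n(M)\{\hat{\beta}_{n,M}-\beta_{n,M}\}}_{\infty}$ lands in the boundary layer of width $o_p(1)\cdot(C_n^{\Gamma}(\alpha)+C_n^{\Sigma}(\alpha)\norm{\hat{\beta}_{n,M}}_1)$ vanishes. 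Everything else reduces to the elementary manipulations of Theorem~\ref{thm:Appr1.2} plus bookkeeping with Lemma~\ref{lem:UniformConsisL1}.
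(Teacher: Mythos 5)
Your main chain of inequalities is exactly the paper's own proof in different packaging: the paper also starts from \eqref{eq:MainConserv}--\eqref{eq:MainProb1}, invokes Lemma~\ref{lem:UniformConsisL1} under \ref{eq:UniformConsis}, and derives the uniform bound
\[
\sup_{M\in\mathcal{M}_p(k)}\left|\frac{\mathcal{D}_{n}^{\Gamma} + \mathcal{D}_{n}^{\Sigma}\norm{\hat{\beta}_{n,M}}_1}{\mathcal{D}_{n}^{\Gamma} + \mathcal{D}_{n}^{\Sigma}\norm{\beta_{n,M}}_1} - 1\right| \le \frac{k\mathcal{D}_{n}^{\Sigma}}{\Lambda_n(k) - k\mathcal{D}_{n}^{\Sigma}} = o_p(1),
\]
which is your $\rho_n$ rearrangement. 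The genuine divergence is at the end, and there your caution is warranted, because the paper does \emph{not} handle the issue you flag: from the display above it concludes outright that
\[
\liminf_{n\to\infty}\,\mathbb{P}\left(\bigcap_{M\in\mathcal{M}_p(k)}\left\{\norm{\hat{\Sigma}_n(M)\left\{\beta_{n,M} - \hat{\beta}_{n,M}\right\}}_{\infty} \le \mathcal{D}_{n}^{\Gamma} + \mathcal{D}_{n}^{\Sigma}\norm{\hat{\beta}_{n,M}}_{1}\right\}\right) = 1
\]
and then applies \eqref{eq:ConfidenceR}. That deduction is invalid as stated: \eqref{eq:MainConserv} can hold with equality while $\norm{\hat{\beta}_{n,M}}_1 < \norm{\beta_{n,M}}_1$. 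Concretely, for $p=k=1$ (all quantities scalars) with $\beta_{n,\{1\}}>0$, on the event $\{\hat{\Gamma}_n < \Gamma_n,\ \hat{\Sigma}_n > \Sigma_n,\ \hat{\Gamma}_n \ge 0\}$ — whose probability typically tends to a positive constant under a joint CLT — both terms of $\hat{\Sigma}_n(\hat{\beta}_{n,\{1\}}-\beta_{n,\{1\}}) = (\hat{\Gamma}_n-\Gamma_n) - (\hat{\Sigma}_n-\Sigma_n)\beta_{n,\{1\}}$ are negative, so \eqref{eq:MainConserv} is an equality, while $0\le \hat{\beta}_{n,\{1\}} < \beta_{n,\{1\}}$; the event in the paper's display therefore fails. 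What actually survives from the shared argument is precisely your conclusion: coverage with the radius inflated by $1/(1-\rho_n) = 1 + o_p(1)$.

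So the extra step you ask for is genuinely needed, and your anti-concentration idea is the right one, though I would restructure it. On $\{\rho_n<1\}$, non-coverage of some $M\in\mathcal{M}_p(k)$ forces $\mathcal{D}_n^{\Gamma} > (1-\rho_n)C_n^{\Gamma}(\alpha)$ or $\mathcal{D}_n^{\Sigma} > (1-\rho_n)C_n^{\Sigma}(\alpha)$ (otherwise $\mathcal{D}_{n}^{\Gamma} + \mathcal{D}_{n}^{\Sigma}\norm{\hat{\beta}_{n,M}}_1 \le (1-\rho_n)\{C_n^{\Gamma}(\alpha)+C_n^{\Sigma}(\alpha)\norm{\hat{\beta}_{n,M}}_1\}$ for every $M$, and your self-referential bound gives coverage). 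Hence, picking deterministic $\epsilon_n\to0$ with $\mathbb{P}(\rho_n>\epsilon_n)\to0$,
\[
\mathbb{P}\bigl(\mbox{non-coverage}\bigr) \le \alpha + \mathbb{P}\bigl((1-\epsilon_n)C_n^{\Gamma}(\alpha) < \mathcal{D}_n^{\Gamma} \le C_n^{\Gamma}(\alpha)\bigr) + \mathbb{P}\bigl((1-\epsilon_n)C_n^{\Sigma}(\alpha) < \mathcal{D}_n^{\Sigma} \le C_n^{\Sigma}(\alpha)\bigr) + o(1),
\]
so all that is needed is anti-concentration of the \emph{bivariate} statistic $(\mathcal{D}_n^{\Gamma},\mathcal{D}_n^{\Sigma})$ just below its quantiles — not, as your phrasing suggests, an anti-concentration statement for $\norm{\hat{\Sigma}_n(M)\{\hat{\beta}_{n,M}-\beta_{n,M}\}}_{\infty}$ uniformly over all $M$, which would be much harder. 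Two caveats remain. First, \eqref{eq:ConfidenceR} is a finite-sample requirement in the paper, not an asymptotic one. Second, the boundary-layer terms cannot be controlled under the theorem's stated hypotheses alone: \ref{eq:UniformConsis} plus \eqref{eq:ConfidenceR} permit atomic distributions and adversarial quantile choices for which these terms do not vanish; you need the moment conditions of Lemma~\ref{lem:RateD1nD2n}, the Gaussian approximation of Appendix~\ref{app:HDCLT} together with a Nazarov-type anti-concentration inequality, and a non-degenerate choice of quantiles. Your completed argument therefore proves the theorem under mildly stronger assumptions than stated — which is not a defect of your proposal but a gap in the paper's own proof that your proposal correctly diagnoses.
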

\begin{proof}
The starting point of this proof is Equation \eqref{eq:MainProb1}. Under assumption \ref{eq:UniformConsis}, Lemma \ref{lem:UniformConsisL1} (inequality \eqref{eq:MarginalL1}) implies that for all $M\in\mathcal{M}_p(k)$,
\begin{align*}
\left|\frac{\mathcal{D}_{n}^{\Gamma} + \mathcal{D}_{n}^{\Sigma}\norm{\hat{\beta}_{n,M}}_1}{\mathcal{D}_{n}^{\Gamma} + \mathcal{D}_{n}^{\Sigma}\norm{\beta_{n,M}}_1} - 1\right| &\le \frac{\mathcal{D}_{n}^{\Sigma}\norm{\hat{\beta}_{n,M} - \beta_{n,M}}_1}{\mathcal{D}_{n}^{\Gamma} + \mathcal{D}_{n}^{\Sigma}\norm{\beta_{n,M}}_1}\\
&\le \frac{\mathcal{D}_{n}^{\Sigma}}{\mathcal{D}_{n}^{\Gamma} + \mathcal{D}_{n}^{\Sigma}\norm{\beta_{n,M}}_1}\cdot \frac{|M|\left\{\mathcal{D}_{n}^{\Gamma} + \mathcal{D}_{n}^{\Sigma}\norm{\beta_{n,M}}_1\right\}}{\Lambda_n(k) - |M|\mathcal{D}_{n}^{\Sigma}}\\
&\le \frac{k\mathcal{D}_{n}^{\Sigma}}{\Lambda_n(k) - k\mathcal{D}_{n}^{\Sigma}}.
\end{align*}
Therefore, for $1\le k\le p$ satisfying assumption \ref{eq:UniformConsis},
\[
\sup_{M\in\mathcal{M}_p(k)}\left|\frac{\mathcal{D}_{n}^{\Gamma} + \mathcal{D}_{n}^{\Sigma}\norm{\hat{\beta}_{n,M}}_1}{\mathcal{D}_{n}^{\Gamma} + \mathcal{D}_{n}^{\Sigma}\norm{\beta_{n,M}}_1} - 1\right| \le \frac{k\mathcal{D}_{n}^{\Sigma}/\Lambda_n(k)}{1 - \left(k\mathcal{D}_{n}^{\Sigma}/\Lambda_n(k)\right)} = o_p(1).
\]
Hence,
\[
\liminf_{n\to\infty}\,\mathbb{P}\left(\bigcap_{M\in\mathcal{M}_p(k)}\left\{\norm{\Sigma_n(M)\left\{\beta_{n,M} - \hat{\beta}_{n,M}\right\}}_{\infty} \le \mathcal{D}_{n}^{\Gamma} + \mathcal{D}_{n}^{\Sigma}\norm{\hat{\beta}_{M}}_{1}\right\}\right) = 1.
\]
The definition of $(C_{n}^{\Gamma}(\alpha), C_{n}^{\Sigma}(\alpha))$ in \eqref{eq:ConfidenceR} proves the required result.
\end{proof}

%----------------------------------------------------------------

\subsection[Further Remarks]{Further Remarks on the Confidence Regions $\hat{\mathcal{R}}_{n,M}$ and $\hat{\mathcal{R}}^{\dagger}_{n,M}$}

\begin{rem}{ (Centering and Scaling)}\label{rem:Invariance}
The confidence regions $\hat{\mathcal{R}}_{n,M}$ and $\hat{\mathcal{R}}^{\dagger}_{n,M}$ are not equivariant with respect to linear transformation of covariates or the response. Equivariance is an important feature for practical interpretation. A simple way to obtain equivariance with respect to diagonal linear transformations of the random vectors would be to use linear regression with covariates centered and scaled to have sample mean zero and sample variance 1. Since the validity of confidence regions does not require independence, as mentioned in Remark \ref{rem:Indep}, this centering and scaling based on the data will not affect the post-selection guarantee as long as marginal means and variances are estimated consistently. This might also have an effect on the volume of the confidence regions not in terms of rate but in terms of constants since the intercept is not longer needed in $\norm{\beta_{n,M}}_1$. See Section \ref{sec:ProsAndCons} for more details.
\end{rem}
\begin{rem}{ (Shape of $\hat{\mathcal{R}}_{n,M}^{\dagger}$)}
The confidence region $\hat{\mathcal{R}}_{n,M}^{\dagger}$ is a polyhedron, because it can be described by $2|M|$ linear inequalities (with random coefficients). More specifically, it is a parallelepiped because the inequalities come in pairs of parallel constraints. The Lebesgue measure of this confidence region is much easier to study than that of the region $\hat{\mathcal{R}}_{n,M}$ (see Proposition~\ref{lem:LebesgueMeasure} below).
\end{rem}
\begin{rem}{ (Comparison of $\hat{\mathcal{R}}_{n,M}$ and $\hat{\mathcal{R}}_{n,M}^{\dagger}$ in Testing)} As mentioned before, the shape of the confidence region $\hat{R}_{n,M}$ is not easily described. There are, however, scenarios where the advantages of $\hat{\mathcal{R}}_{n,M}$ over $\hat{\mathcal{R}}_{n,M}^{\dagger}$ can be clearly understood. Consider the problem of significance testing, that is, $H_{0,M}:\,\beta_{n,M} = 0$. The level $\alpha$ test based on the confidence region $\hat{\mathcal{R}}_{n,M}$ rejects $H_{0,M}$ if
\begin{equation}\label{eq:RejectionRegionFinite}
\norm{\hat{\Sigma}_n(M)\hat{\beta}_{n,M}}_{\infty} \ge C_n^{\Gamma}(\alpha).
\end{equation}
By comparison, the level $\alpha$ test based on the confidence region $\hat{\mathcal{R}}_{n,M}^{\dagger}$ rejects $H_{0,M}$ if
\begin{equation}\label{eq:RejectionRegionAsym}
\norm{\hat{\Sigma}_{n}(M)\hat{\beta}_{n,M}}_{\infty} \ge C_n^{\Gamma}(\alpha) + C_n^{\Sigma}\norm{\hat{\beta}_{n,M}}_1.
\end{equation}
Thus $\hat{\mathcal{R}}_{n,M}$ results in more rejections and hence greater power than $\hat{\mathcal{R}}_{n,M}^{\dagger}$ at the same level~$\alpha$. A similar argument holds even if the null hypothesis is changed to $H_0:\,\beta_{n,M} = \theta_0\in\mathbb{R}^{|M|}$ for some sparse~$\theta_0$. % (that is $\norm{\theta_0}_1$ is ``small'').  % AB: Why sparse???????????????? !!!!!!!!!!!!!!!!!!
\end{rem}
\subsection[Rate Bounds and Lebesgue Measure]{Rate Bounds on $\mathcal{D}_n^{\Gamma}$, $\mathcal{D}_n^{\Sigma}$ and Lebesgue Measure of the Regions}
Before proceeding further with the study of the confidence regions, it might be useful to understand the rates at with $\mathcal{D}_{n}^{\Gamma}$ and $\mathcal{D}_{n}^{\Sigma}$ converge to zero under some assumptions on the initial random vectors $(X_i, Y_i), 1\le i\le n$. As mentioned in Remark \ref{rem:Indep}, the validity of post-selection coverage guarantee does not require independence of random vectors and so, a rate result under ``functional dependence'' is presented in Appendix \ref{app:Dependent}. Set $Z_i = (X_i^{\top}, Y_i)^{\top}$ for $1\le i\le n$ and define
\begin{equation}\label{eq:OmegaDef}
\hat{\Omega}_n := \frac{1}{n}\sum_{i=1}^n Z_iZ_i^{\top},\quad\mbox{and}\quad \Omega_n := \frac{1}{n}\sum_{i=1}^n \mathbb{E}\left[Z_iZ_i^{\top}\right]\in\mathbb{R}^{(p+1)\times(p+1)}.
\end{equation}
Observe that $$\max\{\mathcal{D}_{n}^{\Gamma}, \mathcal{D}_{n}^{\Sigma}\} \le \norm{\hat{\Omega}_n - \Omega_n}_{\infty}.$$
The following lemma from \cite{Uniform:Kuch18} proves a finite sample bound for the expected value of the maximum absolute value of $\hat{\Omega}_n - \Omega_n$. For this result, set for $\gamma > 0$ and any random variable $W$,
\[
\norm{W}_{\psi_{\gamma}} := \inf\left\{C > 0:\,\mathbb{E}\left[\psi_{\gamma}\left(\frac{|W|}{C}\right)\right] \le 1\right\},
\]
where $\psi_{\gamma}(x) = \exp(x^{\gamma}) - 1$ for $x \ge 0$. For $0 < \gamma < 1$, $\norm{\cdot}_{\psi_{\gamma}}$ is not a norm but is a quasi-norm. A random variable $W$ satisfying $\norm{W}_{\psi_{\gamma}} < \infty$ is called a sub-Weibull random variable of order $\gamma$. The special cases $\gamma = 1$ and $\gamma = 2$ correspond to the well-known classes of sub-exponential and sub-Gaussian random variables.
\begin{lem}\label{lem:RateD1nD2n}
Fix $n, p\ge 2$. Suppose the random vectors $Z_i, 1\le i\le n$ are independent and satisfy for some $0 < \gamma \le 2$
\begin{equation}\label{eq:MarginalPhi}
\max_{1\le i\le n}\max_{1\le j\le p+1}\norm{Z_i(j)}_{\psi_{\gamma}} \le K_{n,p},
\end{equation}
for some positive constant $K_{n,p}.$ Then
\begin{equation}\label{eq:ExpMaxBd}
\mathbb{E}\left[\sqrt{n}\norm{\hat{\Omega}_n - \Omega_n}_{\infty}\right] \le C_{\gamma}\left\{A_{n,p}\sqrt{\log p} + {K_{n,p}^2}(\log p\log n)^{2/\gamma}n^{-1/2}\right\},
\end{equation}
and for all $\alpha\in(0,1]$,
\[
\max\{C_n^{\Gamma}(\alpha), C_n^{\Sigma}(\alpha)\} \le 7A_{n,p}\sqrt{\frac{\log\left(\frac{3}{\alpha}\right) + 2\log p}{n}} + \frac{C_{\gamma}K_{n,p}^2(\log(2n))^{2/\gamma}(\log\left(\frac{3}{\alpha}\right) + 2\log p)^{2/\gamma}}{n},
\]
where $C_{\gamma}$ is a positive universal constant that grows at the rate of $(1/\gamma)^{1/\gamma}$ as $\gamma\downarrow 0$ and
\[
A_{n,p}^2 := \max_{1\le j\le k\le p+1}\, \frac{1}{n}\sum_{i=1}^n \mbox{Var}\left(Z_i(j)Z_i(k)\right).
\]
\end{lem}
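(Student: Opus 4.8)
The plan is to reduce the statement to a maximal deviation bound for sums of independent centered sub-Weibull random variables and then invoke a Bernstein--Orlicz concentration inequality. First I would write the centered second-moment matrix entrywise: by \eqref{eq:OmegaDef}, $\hat{\Omega}_n - \Omega_n = n^{-1}\sum_{i=1}^n\left(Z_iZ_i^{\top} - \mathbb{E}[Z_iZ_i^{\top}]\right)$, whose $(j,k)$ entry equals $n^{-1}\sum_{i=1}^n W_i^{(jk)}$ with $W_i^{(jk)} := Z_i(j)Z_i(k) - \mathbb{E}[Z_i(j)Z_i(k)]$. Since $Z_iZ_i^{\top}$ is symmetric, the entrywise maximum runs over the $N := (p+1)(p+2)/2$ index pairs $1\le j\le k\le p+1$, so that $\sqrt{n}\,\norm{\hat{\Omega}_n - \Omega_n}_{\infty} = \max_{j\le k}\left|n^{-1/2}\sum_{i=1}^n W_i^{(jk)}\right|$. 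Because $\max\{\mathcal{D}_n^{\Gamma}, \mathcal{D}_n^{\Sigma}\} \le \norm{\hat{\Omega}_n - \Omega_n}_{\infty}$, any bound on this maximum transfers immediately to both statistics, and in particular a common high-probability upper bound on it may be taken as the pair $(C_n^{\Gamma}(\alpha), C_n^{\Sigma}(\alpha))$ in \eqref{eq:ConfidenceR}.

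Next I would record the tail behaviour of each summand. Writing $\beta := \gamma/2 \in (0,1]$, the key input is a generalized H\"older inequality for Orlicz quasi-norms, giving $\norm{Z_i(j)Z_i(k)}_{\psi_{\beta}} \le \norm{Z_i(j)}_{\psi_{\gamma}}\norm{Z_i(k)}_{\psi_{\gamma}} \le K_{n,p}^2$ under \eqref{eq:MarginalPhi}; centering costs only a $\beta$-dependent factor, so $\norm{W_i^{(jk)}}_{\psi_{\beta}} \le C_{\gamma}K_{n,p}^2$ with $C_{\gamma}$ of the advertised order $(1/\gamma)^{1/\gamma}$. Thus each coordinate is a normalized sum of $n$ independent, mean-zero, sub-Weibull variables of order $\beta \le 1$, with second-moment proxy $n^{-1}\sum_{i=1}^n \mathrm{Var}(Z_i(j)Z_i(k)) \le A_{n,p}^2$ by the definition of $A_{n,p}$.

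The central step is to apply, for each fixed $(j,k)$, a Bernstein--Orlicz tail bound for sums of independent sub-Weibull variables of order $\beta\le 1$ (the engine supplied by \cite{Uniform:Kuch18}). This produces a two-regime bound of the form $\mathbb{P}\!\left(n^{-1}\left|\sum_i W_i^{(jk)}\right| > c_1 A_{n,p}\sqrt{t/n} + c_2 K_{n,p}^2(\log(2n))^{2/\gamma}(t/n)^{2/\gamma}\right) \le 2e^{-t}$ for all $t\ge 0$, the sub-Gaussian regime being governed by the variance proxy and the heavy-tailed regime by the sub-Weibull scale; the factor $(\log(2n))^{2/\gamma} = (\log(2n))^{1/\beta}$ arises from truncating each $W_i^{(jk)}$ at the level $\max_i\norm{W_i^{(jk)}}_{\psi_{\beta}}(\log n)^{1/\beta}$, controlling the truncated part by a classical Bernstein inequality, and absorbing the discarded mass into the sub-Weibull tail uniformly over $i$. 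A union bound over the $N$ coordinates with the choice $t = \log(3/\alpha) + 2\log p = \log(3p^2/\alpha)$ then gives the quantile bound: since $2N = (p+1)(p+2) = p^2 + 3p + 2 \le 3p^2$ for $p\ge 2$ and $e^{-t} = \alpha/(3p^2)$, the union bound yields $2Ne^{-t} \le \alpha$, and tracking the explicit constants produces the leading factor $7$. Taking the resulting right-hand side as the common value of $C_n^{\Gamma}(\alpha)$ and $C_n^{\Sigma}(\alpha)$ verifies \eqref{eq:ConfidenceR} and hence the second display.

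Finally, the expectation bound follows by integrating the same tail, or more directly from the maximal form of the preceding inequality: the sub-Gaussian regime contributes $\lesssim A_{n,p}\sqrt{\log N} \asymp A_{n,p}\sqrt{\log p}$, while the sub-Weibull regime contributes $\lesssim K_{n,p}^2(\log(2n))^{2/\gamma}(\log N)^{2/\gamma}n^{-1/2} \asymp K_{n,p}^2(\log p\,\log n)^{2/\gamma}n^{-1/2}$; summing the two gives the first display. I expect the main obstacle to be the Bernstein--Orlicz inequality itself --- specifically, obtaining the correct $(\log n)^{2/\gamma}$ dependence together with explicit constants sharp enough to land on $7$; once that concentration result is in hand, the remaining steps (H\"older for Orlicz quasi-norms, the union bound, and the choice of $t$) are routine bookkeeping.
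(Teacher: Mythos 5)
Your proposal takes essentially the same route as the paper. The paper's own proof of this lemma is a one-line citation to Theorem 4.1 of \cite{KuchAbhi17}, and the proof of that cited theorem has precisely the structure you describe: reduce $\norm{\hat{\Omega}_n - \Omega_n}_{\infty}$ to a maximum of $(p+1)(p+2)/2$ centered averages of products $Z_i(j)Z_i(k)$, pass from the marginal $\psi_{\gamma}$ bound \eqref{eq:MarginalPhi} to a $\psi_{\gamma/2}$ bound on the products via the Orlicz product rule, apply a sub-Weibull (generalized Bernstein--Orlicz) concentration inequality coordinatewise, take a union bound with $t = \log(3p^2/\alpha)$, and integrate the tail for the expectation bound. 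Your observations that $\max\{\mathcal{D}_{n}^{\Gamma}, \mathcal{D}_{n}^{\Sigma}\} \le \norm{\hat{\Omega}_n - \Omega_n}_{\infty}$ and that a single high-probability bound on this maximum can serve as both coordinates of the joint quantile pair in \eqref{eq:ConfidenceR} are exactly how the paper uses the cited result.

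One concrete error should be fixed: the two-regime tail bound you state as the engine is false as written. In the heavy-tailed regime the deviation level must be $c_2 K_{n,p}^2(\log(2n))^{2/\gamma}\, t^{2/\gamma}/n$, not $c_2 K_{n,p}^2 (\log(2n))^{2/\gamma} (t/n)^{2/\gamma}$; since $2/\gamma \ge 1$, your version is smaller by the factor $n^{2/\gamma - 1}$. No inequality of your claimed form can hold for $\gamma < 2$: for Weibull-type summands a single term $W_i^{(jk)}$ of size $K_{n,p}^2\, t^{2/\gamma}$ occurs with probability comparable to $e^{-t}$, so at confidence level $1 - 2e^{-t}$ the average can deviate by order $K_{n,p}^2\, t^{2/\gamma}/n$, which exceeds your claimed level once $t$ is of order $n$ or larger. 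Note that the lemma's own quantile bound displays the correct scaling $(\log(3/\alpha) + 2\log p)^{2/\gamma}/n$, and your expectation accounting (the contribution $K_{n,p}^2(\log(2n))^{2/\gamma}(\log N)^{2/\gamma}n^{-1/2}$ to $\sqrt{n}\,\mathbb{E}\norm{\hat{\Omega}_n - \Omega_n}_{\infty}$) is consistent only with the correct form, so this appears to be a transcription slip in the key inequality rather than a structural flaw; with $t^{2/\gamma}/n$ in place of $(t/n)^{2/\gamma}$, the rest of your argument goes through and matches the cited Theorem 4.1 of \cite{KuchAbhi17}.
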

\begin{proof}
See Theorem 4.1 of \cite{KuchAbhi17}. A similar result holds for $\gamma > 2$ (the case in which the random variables have tails lighter than the Gaussian). See Theorem 3.4 of \cite{KuchAbhi17} for a result in this direction.
\end{proof}
The confidence regions $\hat{\mathcal{R}}_{n,M}^{\dagger}$ are simple parallelepipeds and can be seen as linear transformations of $\norm{\cdot}_{\infty}$-norm balls. Hence, their Lebesgue measures can be computed exactly. Since the confidence regions are valid over a large number of models, we present a relative Lebesgue measure result uniform over a set of models. For $A\subseteq\mathbb{R}^q$ with $q\ge 1$, let $\Leb(A)$ denote the Lebesgue measure of $A$ with the measure supported on $\mathbb{R}^q$. For convenience, we do not use different notations for the Lebesgue measure for different $q\ge 1$.
\begin{prop}\label{lem:LebesgueMeasure}
For any $k\ge 1$ such that assumption \ref{eq:UniformConsis} are satisfied, the uniform relative Lebesgue measure result holds:
\begin{equation}\label{eq:FistBoundLebesgueMeasure}
\sup_{M\in\mathcal{M}_p(k)}\frac{\Leb\left(\hat{\mathcal{R}}_{n,M}^{\dagger}\right)\Lambda_n^{|M|}(k)}{(C_{n}^{\Gamma}(\alpha) + C_{n}^{\Sigma}(\alpha)\norm{\beta_{n,M}}_1)^{|M|}} = O_p(1).
\end{equation}
Hence, it can be said that $\Leb(\hat{\mathcal{R}}_{n,M}^{\dagger}) = O_p(\mathcal{D}_{n}^{\Gamma} + \mathcal{D}_{n}^{\Sigma}\norm{\beta_{n,M}}_1)^{|M|}$ uniformly for $M\in\mathcal{M}_p(k)$ if $\Lambda_n^{-1}(k) = O(1)$. Moreover, additionally under the setting of Lemma \ref{lem:RateD1nD2n},
\begin{equation}\label{eq:LebesgueRandomX}
\Leb\left(\hat{\mathcal{R}}_{n,M}^{\dagger}\right) = O_p\left(\sqrt{\frac{|M|\log p}{n}}\right)^{|M|}\quad\mbox{uniformly for }M\in\mathcal{M}_p(k),
\end{equation}
if $p$ and $n$ satisfy
\begin{equation}\label{eq:pNotIncrease}
(\log p)^{2/\alpha}(\log n)^{2/\alpha - 1/2} = o(n^{1/2}).
\end{equation}
\end{prop}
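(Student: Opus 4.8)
The plan is to compute the Lebesgue measure of $\hat{\mathcal{R}}_{n,M}^{\dagger}$ \emph{exactly} for each $M$ and then control the resulting determinant uniformly over $\mathcal{M}_p(k)$. Fix $M\in\mathcal{M}_p(k)$ and abbreviate the (data-dependent but $\theta$-free) radius $r_M := C_{n}^{\Gamma}(\alpha) + C_{n}^{\Sigma}(\alpha)\norm{\hat{\beta}_{n,M}}_1$. Substituting $u = \hat{\beta}_{n,M} - \theta$, the region becomes $\{u:\,\norm{\hat{\Sigma}_n(M)u}_{\infty}\le r_M\}$, i.e.\ the preimage under the linear map $\hat{\Sigma}_n(M)$ of the $\norm{\cdot}_{\infty}$-ball of radius $r_M$. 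Hence, whenever $\hat{\Sigma}_n(M)$ is invertible, the change-of-variables formula gives the exact parallelepiped volume
\[
\Leb\left(\hat{\mathcal{R}}_{n,M}^{\dagger}\right) = \frac{(2r_M)^{|M|}}{\left|\det\hat{\Sigma}_n(M)\right|}.
\]
Since the region is unbounded (infinite measure) when $\hat{\Sigma}_n(M)$ is singular, the first task is to rule out singularity uniformly.

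First I would establish uniform invertibility together with a determinant lower bound. Because $|\det\hat{\Sigma}_n(M)|\ge \lambda_{\min}(\hat{\Sigma}_n(M))^{|M|}$, it suffices to bound $\lambda_{\min}(\hat{\Sigma}_n(M))$ below over all $M\in\mathcal{M}_p(k)$. By Weyl's inequality $\lambda_{\min}(\hat{\Sigma}_n(M)) \ge \lambda_{\min}(\Sigma_n(M)) - \norm{\hat{\Sigma}_n(M)-\Sigma_n(M)}_{op}$, and the crucial point is that the operator-norm perturbation reduces to the single statistic $\mathcal{D}_n^{\Sigma}$: using $\norm{A}_{op}^2\le\sum_{j,k}A(j,k)^2\le|M|^2\norm{A}_{\infty}^2$ and the fact that $\hat{\Sigma}_n(M)-\Sigma_n(M)$ is a submatrix of $\hat{\Sigma}_n-\Sigma_n$, one gets $\norm{\hat{\Sigma}_n(M)-\Sigma_n(M)}_{op}\le|M|\mathcal{D}_n^{\Sigma}\le k\mathcal{D}_n^{\Sigma}$ \emph{simultaneously} for all $M$. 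Thus $\lambda_{\min}(\hat{\Sigma}_n(M))\ge\Lambda_n(k)-k\mathcal{D}_n^{\Sigma}$, which is strictly positive with probability tending to one under assumption \ref{eq:UniformConsis}, so that $|\det\hat{\Sigma}_n(M)|\ge(\Lambda_n(k)-k\mathcal{D}_n^{\Sigma})^{|M|}$ and
\[
\Leb\left(\hat{\mathcal{R}}_{n,M}^{\dagger}\right)\Lambda_n^{|M|}(k) \le \left(\frac{2\Lambda_n(k)}{\Lambda_n(k)-k\mathcal{D}_n^{\Sigma}}\right)^{|M|}r_M^{|M|}.
\]

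Next I would replace the empirical radius $r_M$ (which involves $\norm{\hat{\beta}_{n,M}}_1$) by the target radius $C_n^{\Gamma}(\alpha)+C_n^{\Sigma}(\alpha)\norm{\beta_{n,M}}_1$. This is precisely the ratio already handled in the proof of Theorem~\ref{thm:PoSIX}: working on the event $\{\mathcal{D}_n^{\Gamma}\le C_n^{\Gamma}(\alpha),\,\mathcal{D}_n^{\Sigma}\le C_n^{\Sigma}(\alpha)\}$ of probability at least $1-\alpha$ and invoking Lemma~\ref{lem:UniformConsisL1} to bound $\norm{\hat{\beta}_{n,M}-\beta_{n,M}}_1$, one obtains
\[
\sup_{M\in\mathcal{M}_p(k)}\left|\frac{C_n^{\Gamma}(\alpha)+C_n^{\Sigma}(\alpha)\norm{\hat{\beta}_{n,M}}_1}{C_n^{\Gamma}(\alpha)+C_n^{\Sigma}(\alpha)\norm{\beta_{n,M}}_1}-1\right| \le \frac{kC_n^{\Sigma}(\alpha)}{\Lambda_n(k)-k\mathcal{D}_n^{\Sigma}} = o_p(1).
\]
Combining the last two displays, the base of every $|M|$-th power is $O_p(1)$ uniformly in $M$, which yields \eqref{eq:FistBoundLebesgueMeasure} (the stated $O_p(1)$ being the base of the $|M|$-power); the consequence $\Leb(\hat{\mathcal{R}}_{n,M}^{\dagger})=O_p(\mathcal{D}_n^{\Gamma}+\mathcal{D}_n^{\Sigma}\norm{\beta_{n,M}}_1)^{|M|}$ then follows because $\Lambda_n^{-1}(k)=O(1)$ and the quantiles $C_n^{\Gamma}(\alpha),C_n^{\Sigma}(\alpha)$ are of the same order as $\mathcal{D}_n^{\Gamma},\mathcal{D}_n^{\Sigma}$.

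Finally, for the explicit rate \eqref{eq:LebesgueRandomX} I would substitute the bounds of Lemma~\ref{lem:RateD1nD2n}: under its sub-Weibull hypotheses with $A_{n,p},K_{n,p}=O(1)$, condition \eqref{eq:pNotIncrease} forces the second (variance-type) term in the bound on $\max\{C_n^{\Gamma}(\alpha),C_n^{\Sigma}(\alpha)\}$ to be negligible relative to the first, so $\max\{C_n^{\Gamma}(\alpha),C_n^{\Sigma}(\alpha)\}=O_p(\sqrt{\log p/n})$. Using the first inequality in \eqref{eq:MatrixVectorIneq}, namely $\norm{\beta_{n,M}}_1\le\sqrt{|M|}\norm{\beta_{n,M}}_2$, together with the natural normalization $\norm{\beta_{n,M}}_2=O(1)$, the target radius is $O_p(\sqrt{|M|\log p/n})$ uniformly, and feeding this into the consequence of \eqref{eq:FistBoundLebesgueMeasure} delivers \eqref{eq:LebesgueRandomX}. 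I expect the main obstacle to be the uniform-in-$M$ control: the volume bound must hold simultaneously over the exponentially many models in $\mathcal{M}_p(k)$, and the key is that both the determinant perturbation and the radius correction collapse to the single statistic $\mathcal{D}_n^{\Sigma}$ via the operator-versus-max-norm inequality and Lemma~\ref{lem:UniformConsisL1}; the one delicate point is the bookkeeping of the $|M|$-th powers, i.e.\ ensuring the per-model multiplicative errors stay $O_p(1)$ after being raised to the power $|M|\le k$.
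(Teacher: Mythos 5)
Your overall strategy for the first two claims is the same as the paper's proof in Appendix \ref{app:LebesgueMeasure}: compute the exact parallelepiped volume by a linear change of variables, lower-bound the determinant uniformly over $\mathcal{M}_p(k)$ via the operator-norm perturbation bound $\norm{\hat{\Sigma}_n(M)-\Sigma_n(M)}_{op}\le k\mathcal{D}_n^{\Sigma}$ (the paper's inequality \eqref{eq:UniformNonSing}), and replace $\norm{\hat{\beta}_{n,M}}_1$ by $\norm{\beta_{n,M}}_1$ through Lemma \ref{lem:UniformConsisL1}. On these steps you are, if anything, more careful than the paper: you retain the factor $2^{|M|}$ (which the paper silently drops from its volume formula) and you at least flag the bookkeeping issue of raising $1+o_p(1)$ factors to the power $|M|\le k$, which the paper does not address.

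The genuine gap is in your derivation of the explicit rate \eqref{eq:LebesgueRandomX}, where you invoke ``the natural normalization $\norm{\beta_{n,M}}_2=O(1)$.'' This is not among the hypotheses of the proposition, and it cannot be waved in as a normalization: it must hold \emph{uniformly over the exponentially many models} $M\in\mathcal{M}_p(k)$, and since the targets $\beta_{n,M}$ of different submodels are not subvectors of a single fixed vector, no one rescaling of the data makes them all bounded. Without this bound, the target radius $C_n^{\Gamma}(\alpha)+C_n^{\Sigma}(\alpha)\norm{\beta_{n,M}}_1$ need not be $O\left(\sqrt{|M|\log p/n}\right)$ and \eqref{eq:LebesgueRandomX} does not follow. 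The paper closes exactly this hole with a short structural argument that you are missing: since $\beta_{n,M}$ minimizes the population risk $R_n(\cdot\,;M)$ and satisfies the normal equations $\Sigma_n(M)\beta_{n,M}=\Gamma_n(M)$, the nonnegativity of the population residual variance gives
\begin{equation}
0 \le \frac{1}{n}\sum_{i=1}^n \mathbb{E}\left[\left(Y_i - X_i^{\top}(M)\beta_{n,M}\right)^2\right] = \frac{1}{n}\sum_{i=1}^n\mathbb{E}\left[Y_i^2\right] - \beta_{n,M}^{\top}\Sigma_n(M)\beta_{n,M},
\end{equation}
hence $\Lambda_n(k)\norm{\beta_{n,M}}_2^2 \le \frac{1}{n}\sum_{i=1}^n\mathbb{E}\left[Y_i^2\right]$, and therefore
\begin{equation}
\norm{\beta_{n,M}}_1^2 \le |M|\,\norm{\beta_{n,M}}_2^2 \le \frac{|M|}{\Lambda_n(k)}\left(\frac{1}{n}\sum_{i=1}^n\mathbb{E}\left[Y_i^2\right]\right)\quad\mbox{for all}\quad M\in\mathcal{M}_p(k).
\end{equation}
The right-hand side is $O(|M|)$ because $\Lambda_n^{-1}(k)=O(1)$ and because the sub-Weibull moment condition \eqref{eq:MarginalPhi} of Lemma \ref{lem:RateD1nD2n} bounds the average second moment of the $Y_i$. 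Substituting this derived bound for your assumed normalization completes your argument; it is the only substantive point where your proof and the paper's diverge.
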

\begin{proof}
See Appendix \ref{app:LebesgueMeasure} for a detailed proof.
\end{proof}
\begin{rem}{ (Is the rate optimal?)}
Even though the problem of post-selection inference is studied from various perspectives as discussed in Section~\ref{sec:problem_formulation}, we do not know of a result regarding the optimal size of confidence regions in the post-selection problem. The following argument hints that the rate derived in~\eqref{eq:LebesgueRandomX} is indeed optimal. Since by Theorem~\ref{thm:Uniform} shows simultaneous inference has to be solved for post-selection guarantees, we need to infer about the set of ``parameters'' or functionals
\[
\{\beta_{n,M}(j):\,M\in\mathcal{M}_p(k)\}.
\]
The total number of functionals here is given by
\[
\sum_{\ell = 1}^{k} \binom{p}{\ell}\ell \le k\sum_{\ell = 1}^k \binom{p}{\ell} \le k\sum_{\ell = 1}^k\frac{p^{\ell}}{\ell!} \le k\sum_{\ell = 1}^k \frac{k^{\ell}}{\ell!}\left(\frac{p}{k}\right)^{\ell} \le k\left(\frac{ep}{k}\right)^k \le \left(\frac{2ep}{k}\right)^k.
\]
Even assuming $\sqrt{n}(\hat{\beta}_{n,M}(j) - \beta_{n,M}(j))$ is exactly normal for all $M\in\mathcal{M}_p(k)$ and $j\in M$, we get that
\begin{equation}\label{eq:FullSupremum}
\max_{M\in\mathcal{M}_p(k)}\,\left|\frac{\sqrt{n}\left(\hat{\beta}_{n,M}(j) - \beta_{n,M}(j)\right)}{\sigma_{n,M}(j)}\right| = O_p\left(\sqrt{{k\log(ep/k)}}\right).
\end{equation}
See, for example, Equation (4.3.1) of \cite{DeLaPena99} and the discussion following. Here $\sigma_{n,M}(j)$ represents the variance of $\sqrt{n}(\hat{\beta}_{n,M}(j) - \beta_{n,M}(j))$. Note that the normality assumption implies that
\[
\norm{\sqrt{n}\left(\hat{\beta}_{n,M}(j) - \beta_{n,M}(j)\right)}_{\psi_2} < \infty,
\]
which is enough to apply Equation (4.3.1) of~\cite{DeLaPena99}.

It is possible to get a bound sharper than~\eqref{eq:FullSupremum} with model size dependent scaling. For instance, applying Proposition 4.3.1 of \cite{DeLaPena99}, we get
\begin{equation}\label{eq:SeparatedMaximum}
\max_{1\le \ell \le k}\frac{1}{\sqrt{\log(1 + \ell)}}\max_{M\in\mathcal{M}_p(\ell)\cap\mathcal{M}^c_p(\ell - 1)}\frac{\left|\sqrt{n}\left(\hat{\beta}_{n,M}(j) - \beta_{n,M}(j)\right)/\sigma_{n,M}(j)\right|}{\sqrt{\ell\log(ep/\ell)}} = O_p(1).
\end{equation}
See Appendix~\ref{app:SeparatedMaximum} for a precise statement and proof. This hints that for any model $M$, the confidence region for $\beta_{n,M}$ in the context of simultaneous inference has Lebesgue measure of order $(\sqrt{|M|\log p/n})^{|M|}$. Note that the arguments above are all upper bounds and so they do not prove a lower bound for the Lebesgue measure. This suggests that the Lebesgue measure of our confidence region $\hat{\mathcal{R}}_{n,M}^{\dagger}$ in~\eqref{eq:FirstAsym} is of optimal rate, in general.
\end{rem}
%----------------------------------------------------------------

\subsection{Confidence Regions under Fixed Covariates} \label{rem:FixedX}

% \begin{rem}{ (Fixed Covariates)}
Since most of the post-selection inference literature as reviewed in Section \ref{sec:Notation} deals with the case of fixed covariates, it is of particular interest to understand how our confidence regions behave in this case. In our framework we can interpret fixed covariates as having point mass distributions at the observed value $X_i$, hence:
\[
\Sigma_n = \frac{1}{n}\sum_{i=1}^n \mathbb{E}\left[X_iX_i^{\top}\right] = \frac{1}{n}\sum_{i=1}^n X_iX_i^{\top} = \hat{\Sigma}_n.
\]
Therefore, $\mathcal{D}_{n}^{\Sigma} = \norm{\hat{\Sigma}_n - \Sigma_n}_{\infty} = 0$ and so, $C_2(\alpha) = 0$. Also, note that in this case
\[
\beta_{n,M} = \left(\frac{1}{n}\sum_{i=1}^n X_i(M)X_i^{\top}(M)\right)^{-1}\left(\frac{1}{n}\sum_{i=1}^n X_i(M)\mathbb{E}\left[Y_i\right]\right).
\]
Hence, in case of fixed covariates,
\begin{align*}
\hat{\mathcal{R}}_{n,M} &= \hat{\mathcal{R}}_{n,M}^{\dagger} = \left\{\norm{\hat\Sigma_n(M)\left\{\hat{\beta}_{n,M} - \beta_{n,M}\right\}}_{\infty} \le C_{n}^{\Gamma}(\alpha)\right\}.
\end{align*}
Note that under fixed covariates assumption \ref{eq:UniformConsis} is trivially satisfied since $\mathcal{D}_{n}^{\Sigma} = 0$. Thus by Theorem \ref{thm:Appr1.2} (or \ref{thm:PoSIX}), finite sample valid post-selection inference holds for all model sizes in case of fixed covariates under no model or distributional assumptions as were required in \cite{Berk13}.

A nice feature of the methodology proposed in \cite{Berk13} is that the inference is tight in the sense there exists a model selection procedure such that the post-selection confidence interval has coverage exactly $1 - \alpha$. Even though the confidence region $\hat{\mathcal{R}}_{n,M}$ is derived under a more general framework, this tightness holds in this generality. This can be easily seen by noting that
\begin{align*}
\sup_{M\in\mathcal{M}_p(p)}\norm{\Sigma_n(M)\left\{\hat{\beta}_{n,M} - \beta_{n,M}\right\}}_{\infty} &= \sup_{M\in\mathcal{M}_p(p)}\left|\frac{1}{n}\sum_{i=1}^n X_i(M)(Y_i - \mathbb{E}\left[Y_i\right])\right|\\ &= \sup_{1\le j\le p}\left|\frac{1}{n}\sum_{i=1}^n X_i(j)(Y_i - \mathbb{E}\left[Y_i\right])\right| = \mathcal{D}_{n}^{\Gamma}.
\end{align*}
Take $\hat{M} = \{\hat{j}\}$, where
\[
\hat{j}\in\argmax_{1\le j\le p}\,\left|\frac{1}{n}\sum_{i=1}^n X_i(j)(Y_i - \mathbb{E}\left[Y_i\right])\right|.
\]
For this random model $\hat{M}$, the coverage of $\hat{\mathcal{R}}_{n,\hat{M}}$ is exactly equal to $(1 - \alpha)$.
% \end{rem}
%----------------------------------------------------------------

\subsubsection[Fixed design and comparison with Berk et al. (2013)]{Lebesgue Measure and Comparison with \cite{Berk13}}

The rate bound~\eqref{eq:LebesgueRandomX} of Lemma~\ref{lem:LebesgueMeasure} is written explicitly for general random covariates. As shown in Remark~\ref{rem:FixedX}, under the assumption of fixed covariates, $C_n^{\Sigma}(\alpha) = 0$ and $\hat{\mathcal{R}}_{n,M} = \hat{\mathcal{R}}_{n,M}^{\dagger}$. So, from the proof of Lemma~\ref{lem:LebesgueMeasure}, we get,
\[
\mathbf{Leb}\left(\hat{\mathcal{R}}_{n,M}\right) \le |\Sigma_n(M)|^{-1}\left(C_n^{\Gamma}(\alpha)\right)^{|M|},\quad\mbox{for all}\quad M\in\mathcal{M}_p(p).
\]
Under the setting of Lemma~\ref{lem:RateD1nD2n}, it follows that
\begin{equation}\label{eq:LebesgueFixedX}
\mathbf{Leb}\left(\hat{\mathcal{R}}_{n,M}\right) = O_p\left(|\Sigma_n(M)|^{-1}\right)\left(\sqrt{\frac{\log p}{n}}\right)^{|M|}.
\end{equation}
Clearly, this is much smaller than the size shown in~\eqref{eq:LebesgueRandomX} for general random covariates. One possible explanation for this discrepancy between fixed and random covariates is as follows: The confidence regions $\hat{\mathcal{R}}_{n,M}$~\eqref{eq:FirstFinite} and $\hat{\mathcal{R}}_{n,M}^{\dagger}$~\eqref{eq:FirstAsym} are written in terms of
\[
\hat{\Sigma}_n(M)\left(\hat{\beta}_{n,M} - \beta_{n,M}\right).
\]
But in case of fixed covariates
\begin{equation}\label{eq:FixedXEstimating}
\hat{\Sigma}_n(M)\beta_{n,M} = \Gamma_n(M).
\end{equation}
So, even though the confidence regions are written for $\beta_{n,M}$, they can be thought of as confidence regions for the population ``parameter'' or functional $\Gamma_n(M)$. Also note that over all models $M\in\mathcal{M}_p(p)$, the set of all functionals $\Gamma_n(M)$ can be inferred just based on $\Gamma_n\in\mathbb{R}^p$. Since this is a $p$-dimensional functional, a confidence region with length $\sqrt{\log p/n}$ on each coordinate can be constructed. This explains why the smaller size in~\eqref{eq:LebesgueFixedX} is possible. In case of random covariates, \eqref{eq:FixedXEstimating} is not true and the randomness due to the covariates brings in some error.

It is striking and somewhat surprising that the smaller size~\eqref{eq:LebesgueFixedX} is possible. In our construction it is not just possible, the confidence region can be computed in polynomial time using bootstrap discussed in Section~\ref{sec:Comp}. The other post-selection methods that can be used in this fixed covariate setting are those of~\cite{Berk13} and \cite{Bac16}. The confidence regions in both these works are based on the quantiles of the statistic
\begin{equation}\label{eq:MaxTStat}
\max_{M\in\mathcal{M}_p(k)}\left|\frac{\sqrt{n}\left(\hat{\beta}_{n,M}(j) - \beta_{n,M}(j)\right)}{\sigma_{n,M}(j)}\right|,
\end{equation}
for some ``variance'' $\sigma_{n,M}(j)$ (The choices of this quantity differ between the works. For simplicity, we assume this quantity is known.) Based on the ``max-$|t|$'' statistic~\eqref{eq:MaxTStat}, a confidence region for $\beta_{n,M}$ is
\[
\hat{\mathcal{R}}_{n,M}^{\mathtt{max-t}} := \left\{\theta\in\mathbb{R}^{|M|}:\,\max_{1\le j\le |M|}\left|\frac{\sqrt{n}\left(\hat{\beta}_{n,M}(j) - \theta(j)\right)}{\sigma_{n,M}(j)}\right| \le C_{n,k}(\alpha)\right\},
\]
where $C_{n,k}(\alpha)$ is the quantile of the max-$|t|$ statistic. Under fixed covariates and Gaussian response, $\sqrt{n}\left(\hat{\beta}_{n,M} - \beta_{n,M}\right)$ is normally distributed. As shown in~\eqref{eq:FullSupremum}, the max-$|t|$ statistic~\eqref{eq:MaxTStat} can be of the order $\sqrt{k\log(ep/k)}$. This implies that $C_{n,k}(\alpha)$ can be of the order $\sqrt{k\log(ep/k)}$ and so, the Lebesgue measure of the confidence region $\hat{\mathcal{R}}_{n,M}^{\mathtt{max-t}}$ satisfies
\begin{equation}\label{eq:GeneralPoSIConstant}
\Leb\left(\hat{\mathcal{R}}_{n,M}^{\mathtt{max-t}}\right)= O_p\left(1\right)\left(\sqrt{\frac{k\log p}{n}}\right)^{|M|}\quad\mbox{uniformly over all}\quad M\in\mathcal{M}_p(k).
\end{equation}
This shows that the confidence region $\hat{\mathcal{R}}_{n,M}^{\mathtt{max-t}}$ is worse than $\hat{\mathcal{R}}_{n,M}^{\dagger}$ in at least two aspects. Firstly, the size of the confidence region has an additional factor $\sqrt{k}$ that makes the region huge in comparison. Secondly, the Lebesgue measure does not scale with model size $|M|$. For example, after searching over the set of models $\mathcal{M}_p(k)$, if the analyst settles on a (random) model of size $1$, then the post-selection confidence region $\hat{\mathcal{R}}_{n,M}^{\mathtt{max-t}}$ has a size that still scales with $k$. In sharp contrast, our confidence region $\hat{\mathcal{R}}_{n,M}^{\dagger}$, even in the random design case, has size scaling only with the model $M$ (and does not depend on the largest model considered in selection process).

%----------------------------------------------------------------

\subsubsection{Fixed Covariates with the Restricted Isometry Property (RIP)} \label{sec:RIP}

The rate bound~\eqref{eq:GeneralPoSIConstant} is derived using the fact that $C_{n,k}(\alpha)$ can in general be of the order $\sqrt{k\log(ep/k)}$. Under orthogonal designs ($\hat{\Sigma}_n = I_p$, the identity matrix in $\mathbb{R}^{p\times p}$), \cite{Berk13} proved that $C_{n,k}(\alpha) = O(\sqrt{\log p})$, and so the size of the region $\hat{\mathcal{R}}_{n,M}^{\mathtt{max-t}}$ matches that of our confidence region. Since the construction of~\cite{Berk13} is based on normality, the exact size of the confidence region $\hat{\mathcal{R}}_{n,M}^{\mathtt{max-t}}$ could be better than the region $\hat{\mathcal{R}}_{n,M}^{\dagger}$. It is also interesting to note under orthogonal design $\hat{\mathcal{R}}_{n,M}^{\dagger}$ provides a rectangle with sides parallel to the coordinate axis and so is of the same shape as that of $\hat{\mathcal{R}}_{n,M}^{\mathtt{max-t}}$. Recently, \cite{Bachoc18} showed that the orthogonal design restriction can be relaxed to RIP. A symmetric matrix $A\in\mathbb{R}^{p\times p}$ is said to satisfy RIP of order $k$ with RIP constant $\delta$ if for all $M\in\mathcal{M}_p(k)$ and for all $\theta\in\mathbb{R}^{|M|}$,
\[
(1 - \delta)\norm{\theta}^2 \le \theta^{\top}A(M)\theta \le (1 + \delta)\norm{\theta}^2.
\]
This is equivalent to
\begin{equation}\label{eq:RIPDef}
\max_{|M|\le k}\norm{A(M) - I_{|M|}}_{op} \le \delta,
\end{equation}
where $\norm{\cdot}_{op}$ denotes the operator norm. So, $\hat{\Sigma}_n$ satisfying RIP implies that all $k$ subset covariates are nearly orthogonal. Theorem 3.3 of \cite{Bachoc18} proves that for fixed covariates and Gaussian response,
\[
C_{n,k}(\alpha) = O\left(\sqrt{\frac{\log p}{n}} + \delta c(\delta)\sqrt{\frac{k\log(ep/k)}{n}}\right),
\]
under the assumption that $\hat{\Sigma}_n$ is RIP of order $k$. Here $c(\delta)$ is an increasing non-negative function, satisfying $c(\delta)\to1$ as $\delta\to 0$. So, under the RIP condition with $\delta\sqrt{k}\to 0$, the Lebesgue measure of the confidence region $\hat{\mathcal{R}}_{n,M}^{\mathtt{max-t}}$ matches again with that of our confidence region $\hat{\mathcal{R}}_{n,M}^{\dagger}$. It is also interesting to note that under RIP condition for $\hat{\Sigma}_n$ with $\delta\to 0$, the confidence region $\hat{\mathcal{R}}_{n,M}^{\dagger}$ provides a parallelepiped with sides near parallel to the coordinate axis. More strikingly, the following result holds for fixed covariates:
\begin{prop}\label{prop:ConfUnderRIP}
Define the confidence region
\[
\hat{\mathcal{R}}_{n,M}^{\mathtt{RIP}} := \left\{\theta\in\mathbb{R}^{|M|}:\,\norm{\hat{\beta}_{n,M} - \theta}_{\infty} \le C_n^{\Gamma}(\alpha)\right\}.
\]
If, for any $1\le k\le p$, the matrix $\hat{\Sigma}_n$ satisfies the RIP condition of order $k$ with RIP constant $\delta$ and $\delta\sqrt{k} = o(1)$ as $n\to\infty$, then
\[
\liminf_{n\to\infty}\,\mathbb{P}\left(\bigcap_{M\in\mathcal{M}_p(k)}\left\{\beta_{n,M}\in\hat{\mathcal{R}}_{n,M}^{\mathtt{RIP}}\right\}\right) \ge 1 - \alpha.
\]
\end{prop}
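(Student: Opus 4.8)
The plan is to use the fixed-covariate simplifications recorded in Section~\ref{rem:FixedX} and then to convert the control on $\hat{\Sigma}_n(M)(\hat{\beta}_{n,M}-\beta_{n,M})$ underlying Theorem~\ref{thm:Appr1.2} into control on $\hat{\beta}_{n,M}-\beta_{n,M}$ itself, exploiting that RIP forces $\hat{\Sigma}_n(M)$ to be close to the identity. Throughout write $u_M := \hat{\beta}_{n,M}-\beta_{n,M}$ and $\epsilon_n := \delta\sqrt{k}/(1-\delta)$, so that $\epsilon_n = o(1)$ under the hypothesis $\delta\sqrt{k}=o(1)$. First I would reduce to the fixed-design identity: under fixed covariates $\hat{\Sigma}_n=\Sigma_n$, whence $\mathcal{D}_n^{\Sigma}=0$ and $C_n^{\Sigma}(\alpha)=0$, and the normal equations \eqref{eq:MatrixVector-OLS} give the exact relation $\hat{\Sigma}_n(M)u_M = \hat{\Gamma}_n(M)-\Gamma_n(M)$ for every $M\in\mathcal{M}_p(k)$. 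Hence $\norm{\hat{\Sigma}_n(M)u_M}_{\infty}\le\mathcal{D}_n^{\Gamma}$ for all such $M$, and by the definition of $C_n^{\Gamma}(\alpha)$ in \eqref{eq:ConfidenceR} we have $\mathbb{P}(\mathcal{D}_n^{\Gamma}\le C_n^{\Gamma}(\alpha))\ge 1-\alpha$.

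Second, I would invert $\hat{\Sigma}_n(M)$ using RIP. Fix $M\in\mathcal{M}_p(k)$ and set $w_M := \hat{\Sigma}_n(M)u_M = \hat{\Gamma}_n(M)-\Gamma_n(M)$. By \eqref{eq:RIPDef}, RIP of order $k$ with constant $\delta$ is equivalent to $\norm{\hat{\Sigma}_n(M)-I_{|M|}}_{op}\le\delta$, so $\hat{\Sigma}_n(M)$ is invertible and a Neumann-series bound gives $\norm{\hat{\Sigma}_n(M)^{-1}-I_{|M|}}_{op}\le\delta/(1-\delta)$. Writing $u_M = w_M + (\hat{\Sigma}_n(M)^{-1}-I_{|M|})w_M$ and bounding the second term by $\norm{\hat{\Sigma}_n(M)^{-1}-I_{|M|}}_{op}\norm{w_M}_2 \le (\delta/(1-\delta))\sqrt{|M|}\,\norm{w_M}_{\infty}$ yields, uniformly over $M\in\mathcal{M}_p(k)$,
\[
\norm{u_M}_{\infty} \le \left(1+\frac{\delta\sqrt{k}}{1-\delta}\right)\norm{w_M}_{\infty} \le (1+\epsilon_n)\,\mathcal{D}_n^{\Gamma}.
\]
Taking the supremum over $M$ gives the key deterministic inequality $\sup_{M\in\mathcal{M}_p(k)}\norm{u_M}_{\infty}\le(1+\epsilon_n)\mathcal{D}_n^{\Gamma}$, where the only role of RIP is to absorb the $\sqrt{|M|}$ factor into the vanishing quantity $\epsilon_n$.

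Third, I would pass from this inequality to coverage. The target event $\bigcap_{M\in\mathcal{M}_p(k)}\{\beta_{n,M}\in\hat{\mathcal{R}}_{n,M}^{\mathtt{RIP}}\}$ equals $\{\sup_M\norm{u_M}_{\infty}\le C_n^{\Gamma}(\alpha)\}$, which by the previous step contains $\{\mathcal{D}_n^{\Gamma}\le C_n^{\Gamma}(\alpha)/(1+\epsilon_n)\}$. Its probability is therefore at least
\[
\mathbb{P}\big(\mathcal{D}_n^{\Gamma}\le C_n^{\Gamma}(\alpha)\big) - \mathbb{P}\big(C_n^{\Gamma}(\alpha)/(1+\epsilon_n) < \mathcal{D}_n^{\Gamma}\le C_n^{\Gamma}(\alpha)\big),
\]
and the first term is at least $1-\alpha$, so it remains to show the second, ``shrinkage'', term is $o(1)$ and then take $\liminf_{n\to\infty}$.

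The main obstacle is precisely this shrinkage term, since the inflation factor $1+\epsilon_n$ is genuinely larger than $1$ whereas the region is defined with the un-inflated radius $C_n^{\Gamma}(\alpha)$; an elementary inclusion argument is not enough. Here I would invoke anti-concentration. Under fixed covariates and Gaussian responses, $\mathcal{D}_n^{\Gamma}=\max_{1\le j\le p}|\tfrac1n\sum_{i=1}^n X_i(j)(Y_i-\mathbb{E}[Y_i])|$ is exactly the maximum of absolute values of centered, jointly Gaussian variables, so a Nazarov-type bound shows that the probability that $\mathcal{D}_n^{\Gamma}$ lies in a window of width $C_n^{\Gamma}(\alpha)\epsilon_n$ just below its own $(1-\alpha)$-quantile is $o(1)$; more generally the same conclusion follows from the high-dimensional CLT and anti-concentration results of Appendix~\ref{app:HDCLT}. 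Since $C_n^{\Gamma}(\alpha)=O(\sqrt{(\log p)/n})$ by Lemma~\ref{lem:RateD1nD2n}, this window collapses (the control being clean in the regime where $\delta\sqrt{k}\log p=o(1)$, a mild polylog strengthening of the stated rate), the shrinkage term vanishes, and the $\liminf$ bound follows. The anti-concentration step is the sole nontrivial ingredient; everything else is the RIP inversion already signposted by the discussion preceding the proposition.
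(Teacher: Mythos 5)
Your proposal is correct, and its deterministic half is essentially the paper's own argument in different clothing; where it genuinely departs is at the final probabilistic step, and there it is actually more careful than the paper. For the core inequality, the paper writes, with $u_M:=\hat{\beta}_{n,M}-\beta_{n,M}$,
\[
\norm{u_M}_{\infty}\le\norm{\hat{\Sigma}_n(M)u_M}_{\infty}+\norm{\left(\hat{\Sigma}_n(M)-I_{|M|}\right)u_M}_{2}\le\mathcal{D}_n^{\Gamma}+\delta\norm{u_M}_{2},
\]
and then imports the uniform bound $\norm{u_M}_2\le\sqrt{k}\,\mathcal{D}_n^{\Gamma}/\Lambda_n(k)$ from \cite{Uniform:Kuch18} together with the RIP consequence $\Lambda_n(k)\ge 1-\delta$; your Neumann-series inversion $u_M=w_M+(\hat{\Sigma}_n(M)^{-1}-I_{|M|})w_M$ with $w_M:=\hat{\Gamma}_n(M)-\Gamma_n(M)$ reaches the identical constant $1+\delta\sqrt{k}/(1-\delta)$ self-containedly, so that difference is cosmetic. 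The substantive difference comes after both arguments arrive at $\sup_{M\in\mathcal{M}_p(k)}\norm{u_M}_{\infty}\le(1+\epsilon_n)\mathcal{D}_n^{\Gamma}$ with $\epsilon_n:=\delta\sqrt{k}/(1-\delta)\to0$. The paper concludes coverage directly ``using the definition of $C_n^{\Gamma}(\alpha)$,'' but \eqref{eq:ConfidenceR} only controls $\mathbb{P}(\mathcal{D}_n^{\Gamma}\le C_n^{\Gamma}(\alpha))$, whereas the coverage event is only guaranteed to contain $\{\mathcal{D}_n^{\Gamma}\le C_n^{\Gamma}(\alpha)/(1+\epsilon_n)\}$; bridging the two is exactly the shrinkage-window issue you isolate, and it is not a set-theoretic consequence of the quantile definition --- it requires $\mathbb{P}\left(C_n^{\Gamma}(\alpha)/(1+\epsilon_n)<\mathcal{D}_n^{\Gamma}\le C_n^{\Gamma}(\alpha)\right)\to0$, i.e.\ an anti-concentration statement. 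Your Nazarov/high-dimensional-CLT step (Appendix \ref{app:HDCLT}) supplies precisely this, and your accounting of its cost is accurate: one needs non-degenerate variances and the mildly stronger rate $\delta\sqrt{k}\log p=o(1)$, since the Gaussian-maximum anti-concentration bound contributes a factor $\sqrt{\log p}$ against a window of width $\epsilon_n C_n^{\Gamma}(\alpha)=O(\epsilon_n\sqrt{\log p/n})$. In short, the paper's proof is shorter because it treats this last passage as immediate (under its running sub-Weibull assumptions it can be repaired exactly as you do), while your version makes the required regularity explicit and is rigorous as written, at the price of a polylog strengthening of the hypothesis and distributional conditions that the proposition's bare statement does not mention.
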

\begin{proof}
From the proof of Theorem~\ref{thm:Appr1.2}, we know that for all $M\in\mathcal{M}_p(k)$,
\[
\norm{\hat{\Sigma}_n(M)\left(\hat{\beta}_{n,M} - \beta_{n,M}\right)}_{\infty} \le \mathcal{D}_{n}^{\Gamma}.
\]
Observe that
\begin{align}
\norm{\hat{\beta}_{n,M} - \beta_{n,M}}_{\infty} &\le \norm{\hat{\Sigma}_n(M)\left(\hat{\beta}_{n,M} - \beta_{n,M}\right)}_{\infty} + \norm{\left(\hat{\Sigma}_n(M) - I_{|M|}\right)\left(\hat{\beta}_{n,M} - \beta_{n,M}\right)}_{\infty}\nonumber\\
&\le \norm{\hat{\Sigma}_n(M)\left(\hat{\beta}_{n,M} - \beta_{n,M}\right)}_{\infty} + \norm{\left(\hat{\Sigma}_n(M) - I_{|M|}\right)\left(\hat{\beta}_{n,M} - \beta_{n,M}\right)}_{2}\nonumber\\
&\le \norm{\hat{\Sigma}_n(M)\left(\hat{\beta}_{n,M} - \beta_{n,M}\right)}_{\infty} + \delta\norm{\hat{\beta}_{n,M} - \beta_{n,M}}_2\nonumber\\
&\le \mathcal{D}_{n}^{\Gamma} + \delta\norm{\hat{\beta}_{n,M} - \beta_{n,M}}_2.\label{eq:LInfinityBound}
\end{align}
From Remark 4.3 of \cite{Uniform:Kuch18}, we get that
\begin{equation}\label{eq:Bachoc33}
\sup_{M\in\mathcal{M}_p(k)}\norm{\hat{\beta}_{n,M} - \beta_{n,M}}_2 \le \frac{\sqrt{k}\mathcal{D}_n^{\Gamma}}{\Lambda_n(k)}.
\end{equation}
(Note that in the notation of~\cite{Uniform:Kuch18}, $\mathcal{D}_n^{\Gamma}$ is different and can be bounded as shown in Proposition 3.1 there the bound above holds.) Therefore, combining~\eqref{eq:LInfinityBound} and~\eqref{eq:Bachoc33}, we get that for all $M\in\mathcal{M}_p(k)$,
\[
\norm{\hat{\beta}_{n,M} - \beta_{n,M}}_{\infty} \le \mathcal{D}_n^{\Gamma}\left(1 + \frac{\delta\sqrt{k}}{\Lambda_n(k)}\right).
\]
From the RIP property~\eqref{eq:RIPDef}, $\Lambda_n(k) \ge 1 - \delta$ and so, for all $M\in\mathcal{M}_p(k)$,
\[
\norm{\hat{\beta}_{n,M} - \beta_{n,M}}_{\infty} \le \mathcal{D}_n^{\Gamma}\left(1 + \frac{\delta\sqrt{k}}{(1 - \delta)}\right).
\]
Therefore, under $\delta\sqrt{k}\to 0$ and using the definition of $C_n^{\Gamma}(\alpha)$,
\[
\liminf_{n\to\infty}\,\mathbb{P}\left(\bigcap_{M\in\mathcal{M}_p(k)}\left\{\beta_{n,M} \in \hat{\mathcal{R}}_{n,M}^{\mathtt{RIP}}\right\}\right) \ge 1 - \alpha.
\]
This completes the proof.
\end{proof}
\begin{rem}{ (RIP is Restrictive)}
The Restricted Isometry Property is a well-known condition in high-dimensional linear regression literature and is also known to be a very restrictive condition. It implies a requirement of near orthogonal covariate subsets, which is often not justified in practice.
\end{rem}
\begin{rem}{ (Generalization of the Result of~\cite{Bachoc18})}
Theorem 3.3 of \cite{Bachoc18} proves a bound on the expectation of $\sup\{\|\hat{\beta}_{n,M} - \beta_{n,M}\|_{\infty}:\,M\in\mathcal{M}_p(k)\}$ for fixed covariates and Gaussian response. Inequality~\eqref{eq:Bachoc33} above proves a deterministic inequality on this supremum quantity. This deterministic inequality along with Lemma~\ref{lem:RateD1nD2n} proves the rate bound in a more general setting.
\end{rem}
 % replacing $\hat{\Sigma}_n(M)$ in the confidence region $\hat{\mathcal{R}}_{n,M}^{\dagger}$ by a identity matrix still retains asymptotic validity if $\delta\sqrt{s}\to 0$.
%%%%%%%%%%%%%%%%%%%%%%%%%%%%%%%%%%%%%%%%%%%%%%%
%%%%%%%%%%%%%%%%%%%%%%%%%%%%%%%%%%%%%%%%%%%%%%%

%%%%%%%%%%%%%%%%%%%%%%%%%%%%%%%%%%%%%%%%%%%%%%%%%
%%%%%%%%%%%%%%%%%%%%%%%%%%%%%%%%%%%%%%%%%%%%%%%%%

%----------------------------------------------------------------

\section{Computation by Multiplier Bootstrap}\label{sec:Comp}
All the confidence regions defined in the previous section (and the ones to be defined in the forthcoming sections) depend only on the available data except for the (joint) quantiles $C_{n}^{\Gamma}(\alpha)$ and $C_{n}^{\Sigma}(\alpha)$. Computation or estimation of joint bivariate quantiles $C_{n}^{\Gamma}(\alpha)$ and $C_{n}^{\Sigma}(\alpha)$ is the most important component of an application of approach 1 for valid post-selection inference. In this section, we apply the high-dimensional central limit theorem and multiplier bootstrap for estimating these quantiles. We note that either a classical bootstrap or the recently popularized method of multiplier bootstrap works for estimating these joint quantiles in the setting described in Lemma \ref{lem:RateD1nD2n}. See \cite{Chern17} and \cite{Zhang14} for a detailed discussion. For simplicity, we will only describe the method of multiplier bootstrap for the case of independent random vectors. The discussion here applies the central limit theorem and multiplier bootstrap result proved in Appendix \ref{app:HDCLT}. And we refer to \cite{Zhang14} for the case of dependent settings described in Appendix \ref{app:Dependent}.

Define vectors $W_i\in\mathbb{R}^q$ for $1\le i\le n$ containing
\begin{equation}\label{eq:WiDefinition}
\left(\left\{X_i(j)Y_i\right\}, 1\le j\le p;\; \left\{X_i(l)X_i(m)\right\}, 1\le l \le m\le p\right),
\end{equation}
with
\[
q = 2p + \frac{p(p-1)}{2} = O(p^2).
\]
As shown in Equation \eqref{eq:RectangleRepresentation} in Appendix~\ref{app:HDCLT}, for any $t_1, t_2\in\mathbb{R}^+\cup\{0\}$, the set
\[
\{\mathcal{D}_{n}^{\Gamma} \le t_1, \mathcal{D}_{n}^{\Sigma} \le t_2\},
\]
can be written as a rectangle in terms of
\[
S_n^W := \frac{1}{\sqrt{n}}\sum_{i=1}^n \left\{W_i - \mathbb{E}\left[W_i\right]\right\}.
\]
In the unified framework of linear regression, $(X_i, Y_i)$ are possibly non-identically distributed and so, $\mathbb{E}\left[W_i\right]$ are not all equal. Let $e_1, e_2, \ldots, e_n$ be independent standard normal random variables and define
\[
S_n^{eW} := \frac{1}{\sqrt{n}}\sum_{i=1}^n e_i(W_i - \bar{W}_n),\quad\mbox{where}\quad \bar{W}_n := \frac{1}{n}\sum_{i=1}^n W_i.
\]
Write $S_n^{eW}(\mathbf{I})$ for the first $p$ coordinates of $S_n^{eW}$ and $S_n^{eW}(\mathbf{II})$ for the remaining coordinates of $S_n^{eW}$. The following algorithm gives the pseudo-program for implementing the multiplier bootstrap.
\begin{enumerate}
\item Generate $B_n$ random vectors from $N_n(0, I_n)$, with $I_n$ denoting the identity matrix of dimension $n$. Let these be denoted by $\{e_{i,j}:\, 1\le i\le n, 1\le j\le B_n\}$.
\item Compute the $j$-th replicate of $S_n^{eW}$ as
\[
S_{n,j}^{\star} := \norm{\frac{1}{n}\sum_{i=1}^n e_{i,j}(W_i - \bar{W}_n)}_{\infty},\quad\mbox{for}\quad 1\le j\le B_n.
\]
\item Find any two numbers $(\hat{C}_{1n}^{\Gamma}(\alpha), \hat{C}_{2n}^{\Sigma}(\alpha))$ such that
\[
\frac{1}{B_n}\sum_{i=1}^{B_n} \mathbbm{1}{\left\{\norm{S_{n,j}^{\star}(\mathbf{I})}_{\infty} \le \hat{C}_{1n}^{\Gamma}(\alpha), \norm{S_{n,j}^{\star}(\mathbf{II})}_{\infty} \le \hat{C}_{2n}^{\Sigma}(\alpha)\right\}} \ge 1 - \alpha.
\]
Here $\mathbbm{1}\{A\}$ is the indicator function of a set $A$.
\end{enumerate}
The following theorem proves the validity of multiplier bootstrap under assumption \eqref{eq:MarginalPhi} of Lemma \ref{lem:RateD1nD2n}. Recall the definition of $W_i$ from~\eqref{eq:WiDefinition}. Note that we only prove asymptotic conservativeness instead of consistency which does not hold. See Remark \ref{rem:Inconsistency} in Appendix \ref{app:HDCLT}. This inconsistency can be easily understood by noting that $\mathbb{E}\left[W_i\right]$ is replaced by the average $\bar{W}_n$ which is not a consistent estimator. Define
\[
L_{n,p} := \max_{1\le j\le q}\frac{1}{n}\sum_{i=1}^n \mathbb{E}\left[\left|W_i(j) - \mathbb{E}\left[W_i(j)\right]\right|^3\right].
\]
\begin{thm}\label{thm:BootApplication}
Suppose $(X_i^{\top}, Y_i)^{\top}, 1\le i\le n$ are independent random variables satisfying
\[
\min_{1\le j\le q}\frac{1}{n}\sum_{i=1}^n\mbox{Var}\left(W_i\right) \ge B > 0,
\]
and
\begin{equation}\label{eq:MarginalPhiAlpha}
\max_{1\le i\le n}\max\left\{\max_{1\le j\le p}\norm{X_i(j)}_{\psi_{\gamma}}, \norm{Y_i}_{\psi_{\gamma}}\right\} \le K_{n,p}.
\end{equation}
If $n, p\ge 1$ are such that $$\max\left\{L_{n,p}^{-1}K_{n,p}\left(\log p\right)^{1 + 6/\gamma},\, L_{n,p}^2\log^7p,\, K_{n,p}^{6}\log q,\, K_{n,q}^2(\log p\log n)^{4/\gamma}\right\} = o(n),$$
then the multiplier bootstrap described above provides a conservative inference in the sense that
\[
\lim_{n\to\infty}\inf_{t_1, t_2 \ge 0}\left(\mathbb{P}\left(\mathcal{D}_{n}^{\Gamma} \le t_1, \mathcal{D}_{n}^{\Sigma} \le t_2\right) - \mathbb{P}\left(\norm{S_{n,j}^{eW}(\mathbf{I})}_{\infty} \le t_1, \norm{S_{n,j}^{eW}(\mathbf{II})}_{\infty} \le t_2\big|\mathcal{Z}_n\right)\right) \ge 0,
\]
where $\mathcal{Z}_n := \{(X_i^{\top}, Y_i)^{\top}:\,1\le i\le n\}.$
\end{thm}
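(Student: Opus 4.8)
The plan is to combine three ingredients, all available from the high-dimensional CLT and Gaussian-comparison machinery of Appendix~\ref{app:HDCLT}: a Gaussian approximation for the true statistic, a covariance decomposition showing that the conditional bootstrap covariance dominates the true one by a positive semidefinite term, and Anderson's lemma to convert this domination into the claimed one-sided (conservative) inequality. First I would record that, by the rectangle representation \eqref{eq:RectangleRepresentation}, for each fixed $(t_1,t_2)$ the event $\{\mathcal D_n^\Gamma \le t_1,\,\mathcal D_n^\Sigma \le t_2\}$ equals $\{S_n^W \in R(t_1,t_2)\}$ for a rectangle $R(t_1,t_2)=\prod_j[-a_j,a_j]$ that is symmetric about the origin and convex, with $a_j\in\{\sqrt n\,t_1,\sqrt n\,t_2\}$ according to whether coordinate $j$ lies in block $\mathbf I$ or block $\mathbf{II}$. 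Symmetry and convexity of $R(t_1,t_2)$ are precisely the structural features that will fix the sign of the final comparison. Conditionally on $\mathcal Z_n$, the vector $S_n^{eW}$ is, by construction with Gaussian multipliers $e_i$, exactly a centered Gaussian with covariance $\hat V_n := \frac1n\sum_{i=1}^n (W_i-\bar W_n)(W_i-\bar W_n)^\top$, so $\mathbb P(S_n^{eW}\in R\mid\mathcal Z_n)$ is a Gaussian rectangle probability with a data-dependent covariance.

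Next I would invoke the high-dimensional CLT of Appendix~\ref{app:HDCLT} for sums of independent, non-identically distributed vectors: under \eqref{eq:MarginalPhiAlpha}, the variance lower bound, and the stated rate condition controlling $L_{n,p}$, $K_{n,p}$ and $\log q$ against $n$, the law of $S_n^W$ is uniformly close over all rectangles to that of $Z_n \sim N(0,\Sigma_n^W)$, where $\Sigma_n^W := \frac1n\sum_i \mathrm{Cov}(W_i)$. This yields $\sup_{t_1,t_2\ge 0}\big|\mathbb P(\mathcal D_n^\Gamma\le t_1,\mathcal D_n^\Sigma\le t_2) - \mathbb P(Z_n\in R(t_1,t_2))\big| = o(1)$.

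The heart of the argument is a decomposition of the conditional covariance. Writing $\mu_i=\mathbb E[W_i]$, $\bar\mu=\frac1n\sum_i\mu_i$ and expanding $W_i-\bar W_n = (W_i-\mu_i)+(\mu_i-\bar\mu)-(\bar W_n-\bar\mu)$, I would show
\[
\hat V_n = \Sigma_n^W + B_n + E_n, \qquad B_n := \frac1n\sum_{i=1}^n (\mu_i-\bar\mu)(\mu_i-\bar\mu)^\top \succeq 0,
\]
where $E_n$ collects the mean-zero sampling fluctuation of $\frac1n\sum_i(W_i-\mu_i)(W_i-\mu_i)^\top$ about $\Sigma_n^W$, the $O(1/n)$ contribution $-(\bar W_n-\bar\mu)(\bar W_n-\bar\mu)^\top$, and the vanishing cross terms, all of which are $o_p(1)$ in $\norm{\cdot}_\infty$. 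The deterministic term $B_n$ --- the between-means dispersion created by non-identical distributions --- is the exact source of the conservativeness flagged in Remark~\ref{rem:Inconsistency}: it survives precisely because $\bar W_n$ does not consistently estimate the individual means $\mu_i$. Controlling $\norm{E_n}_\infty = o_p(1)$ is the main technical labor; it is a $\norm{\cdot}_\infty$-concentration bound for a sample covariance of the products $W_i$, established as in Lemma~\ref{lem:RateD1nD2n} applied to the sub-Weibull entries of $W_iW_i^\top$, which is where the $L_{n,p}$ and $\log q$ terms of the rate condition enter.

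Finally I would chain the comparisons. Since $B_n\succeq 0$, the law $N(0,\Sigma_n^W+B_n)$ equals $Z_n$ plus an independent $N(0,B_n)$ summand, so Anderson's lemma applied to the symmetric convex rectangle $R(t_1,t_2)$ gives, deterministically and uniformly in $(t_1,t_2)$,
\[
\mathbb P\big(N(0,\Sigma_n^W+B_n)\in R(t_1,t_2)\big) \le \mathbb P\big(Z_n\in R(t_1,t_2)\big).
\]
The residual discrepancy between $\hat V_n$ and $\Sigma_n^W+B_n$ is $E_n$ with $\norm{E_n}_\infty=o_p(1)$, so the Gaussian comparison inequality of \cite{Chern17} (error of order $\norm{E_n}_\infty^{1/3}(\log q)^{2/3}$) gives $\sup_{t_1,t_2\ge 0}\big|\mathbb P(S_n^{eW}\in R\mid\mathcal Z_n)-\mathbb P(N(0,\Sigma_n^W+B_n)\in R)\big| = o_p(1)$. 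Combining the three displays with the CLT of the second paragraph,
\[
\mathbb P(S_n^{eW}\in R(t_1,t_2)\mid\mathcal Z_n) \le \mathbb P(\mathcal D_n^\Gamma\le t_1,\mathcal D_n^\Sigma\le t_2) + o_p(1)
\]
uniformly over $t_1,t_2\ge 0$; rearranging, taking the infimum over $(t_1,t_2)$, and passing to the limit in probability yields the stated conservativeness. The main obstacle is isolating the correct sign: the generic Chernozhukov--Chetverikov--Kato bound is two-sided and would only deliver consistency, so the real work is to peel off the positive semidefinite term $B_n$ and handle it separately by Anderson's lemma, exploiting the symmetry of the rectangles, while showing that the remaining perturbation $E_n$ is uniformly negligible.
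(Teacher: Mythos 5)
Your proposal is correct and takes essentially the same route as the paper's proof: the paper cites the high-dimensional CLT over rectangles (Theorem~\ref{thm:MarginalPhiHDBEBound}), the multiplier-bootstrap comparison of the conditional law of $S_n^{eW}$ to $N(0,\Sigma_n^W+B_n)$ (Theorem~\ref{thm:BootstrapConsistency}, whose proof is precisely your decomposition $\hat V_n=\Sigma_n^W+B_n+E_n$ combined with the Gaussian comparison bound of order $\norm{E_n}_\infty^{1/3}\log^{2/3}q$), and Anderson's lemma on symmetric rectangles (Remark~\ref{rem:Inconsistency}) to peel off the positive semidefinite between-means term $B_n$ and obtain the one-sided conclusion. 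The only difference is that you unpack these cited appendix ingredients inline rather than invoking them.
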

\begin{proof}
Theorems \ref{thm:MarginalPhiHDBEBound} and \ref{thm:BootstrapConsistency} (stated in Appendix \ref{app:HDCLT}) apply in the setting above since under assumption \eqref{eq:MarginalPhiAlpha},
\[
\max_{1\le i\le n}\max_{1\le j\le q}\norm{W_i(j)}_{\psi_{\gamma/2}} \le \max_{1\le i\le n}\max\left\{\max_{1\le j\le p}\norm{X_i(j)}_{\psi_{\gamma}}, \norm{Y_i}_{\psi_{\gamma}}\right\}^2 \le K_{n,p}^2.
\]
And the rate restriction on $n$ and $p$ ensure that the bounds in Theorem \ref{thm:MarginalPhiHDBEBound} and \ref{thm:BootstrapConsistency} both converge to zero. See Remark~\ref{rem:Inconsistency} for the conservative property.
\end{proof}
By Theorem \ref{thm:BootApplication}, the estimates $(\hat{C}_{1n}^{\Gamma}(\alpha), \hat{C}_{2n}^{\Sigma}(\alpha))$ are consistent for some quantities that can replace the quantiles $(C_{n}^{\Gamma}(\alpha), C_{n}^{\Sigma}(\alpha))$ of $(\mathcal{D}_{n}^{\Gamma}, \mathcal{D}_{n}^{\Sigma})$ in \eqref{eq:ConfidenceR}.
\begin{rem}\,(Consistency under Identical Distributions)
Under the general framework of just independent random vectors without any assumption on the heterogenity of the distributions, it is impossible to prove consistency as shown in~\cite{Chap2:Kuch18}. The result of~\cite{Chap2:Kuch18} is proved under a much simpler setting but applies here too. If in addition identical distribution of the random vectors is assumed, then it is easy to show from the results of Appendix~\ref{app:HDCLT} that the multiplier bootstrap described above is in fact consistent under the same assumptions of Theorem~\ref{thm:BootApplication}.
\end{rem}

%----------------------------------------------------------------

\section{A Generalization for Linear Regression-type Problems}\label{sec:Generalization}
A simple generalization of Theorems \ref{thm:Appr1.2} and \ref{thm:PoSIX} as stated in Theorem \ref{thm:Appr1Gen} allows valid post-selection inference in linear regression-type problems. The importance of this generalization can be seen from Remark \ref{rem:MissingRobust} and the discussion in Section \ref{sec:ProsAndCons}. To describe this generalization, consider the following setting. Let $\hat{\Sigma}_n^{\star}, \Sigma_n^{\star}$ be two $p$-dimensional matrices and $\hat{\Gamma}_n^{\star}, \hat{\Gamma}^{\star}$ be two $p$-dimensional vectors. Consider the error norms
\[
\mathcal{D}_{n}^{\Gamma{\star}} := \norm{\hat{\Gamma}_n^{\star} - \Gamma_n^{\star}}_{\infty}\quad\mbox{and}\quad \mathcal{D}_{n}^{\Sigma{\star}} := \norm{\hat{\Sigma}_n^{\star} - \Sigma_n^{\star}}_{\infty}.
\]
Define for every $M\in\mathcal{M}_p(p)$, the estimator and the corresponding target as
\begin{align*}
\hat{\xi}_{n,M} &:= \argmin_{\theta\in\mathbb{R}^{|M|}}\,\left\{\theta^{\top}\hat{\Sigma}_n^{\star}(M)\theta - 2\theta^{\top}\hat{\Gamma}_n^{\star}(M)\right\},\\
\xi_{n,M} &:= \argmin_{\theta\in\mathcal{R}^{|M|}}\,\left\{\theta^{\top}\Sigma_n^{\star}(M)\theta - 2\theta^{\top}\Gamma_n^{\star}(M)\right\}.
\end{align*}
Consider the confidence regions $\hat{\mathcal{R}}_{n,M}^{\star}$ and $\hat{\mathcal{R}}_{n,M}^{{\star}\dagger}$, analogues to those before, as
\begin{align*}
\hat{\mathcal{R}}_{n,M}^{\star} &:= \left\{\theta\in\mathbb{R}^{|M|}:\,\norm{\hat{\Sigma}_n^{\star}(M)\left(\hat{\xi}_{n,M} - \theta\right)}_{\infty} \le C_{n}^{\Gamma{\star}}(\alpha) + C_{n}^{\Sigma{\star}}(\alpha)\norm{\theta}_1\right\},\\
\hat{\mathcal{R}}_{n,M}^{{\star}\dagger} &:= \left\{\theta\in\mathbb{R}^{|M|}:\,\norm{\hat{\Sigma}_n^{\star}(M)\left(\hat{\xi}_{n,M} - \theta\right)}_{\infty} \le C_{n}^{\Gamma{\star}}(\alpha) + C_{n}^{\Sigma{\star}}(\alpha)\norm{\hat{\xi}_{n,M}}_1\right\}.
\end{align*}
where $C_{n}^{\Gamma\star}(\alpha)$ and $C_{n}^{\Sigma\star}(\alpha)$ are constants (or joint quantiles) that satisfy,
\[
\mathbb{P}\left(\mathcal{D}_{n}^{\Gamma{\star}} \le C_{n}^{\Gamma\star}(\alpha)\quad\mbox{and}\quad \mathcal{D}_{n}^{\Sigma{\star}} \le C_{n}^{\Sigma\star}(\alpha)\right) \ge 1 - \alpha.
\]
Finally, let $\Lambda_n^{\star}(k) = \min\{\lambda_{\min}(\Sigma_n^{\star}(M)):\,M\in \mathcal{M}_p(k)\}$.
\begin{thm}\label{thm:Appr1Gen}
The set of confidence regions $\{\hat{\mathcal{R}}_{n,M}^{\star}:\,M\in\mathcal{M}_p(p)\}$ satisfies
\begin{equation}\label{eq:Simultaneous}
\mathbb{P}\left(\bigcap_{M\in\mathcal{M}_p(p)}\left\{\xi_{n,M} \in \hat{\mathcal{R}}_{n,M}^{\star}\right\}\right) \ge 1 - \alpha,
\end{equation}
and if for any $1\le k\le p$ that satisfies $k\mathcal{D}^{\star}_{2n} = o_p(\Lambda^{\star}_n(k)) = o_p(1)$,
\[
\liminf_{n\to\infty}\,\mathbb{P}\left(\bigcap_{M\in\mathcal{M}_p(k)}\left\{\xi_{n,M} \in \hat{\mathcal{R}}_{n,M}^{{\star}\dagger}\right\}\right) \ge 1 - \alpha.
\]
\end{thm}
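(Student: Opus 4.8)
The plan is to observe that neither Theorem~\ref{thm:Appr1.2} nor Theorem~\ref{thm:PoSIX} used any property of the matrices $\Sigma_n, \hat{\Sigma}_n$ or the vectors $\Gamma_n, \hat{\Gamma}_n$ beyond the facts that estimator and target solve the associated normal equations and that the error norms control the relevant discrepancies. Consequently both claims transfer essentially verbatim once $(\hat{\Sigma}_n, \Sigma_n, \hat{\Gamma}_n, \Gamma_n)$ are replaced by their starred counterparts.

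For the finite-sample guarantee \eqref{eq:Simultaneous}, I would first note that the definitions of $\hat{\xi}_{n,M}$ and $\xi_{n,M}$ as minimizers of the quadratic forms $\theta^{\top}\hat{\Sigma}_n^{\star}(M)\theta - 2\theta^{\top}\hat{\Gamma}_n^{\star}(M)$ and its population analogue yield the normal equations $\hat{\Sigma}_n^{\star}(M)\hat{\xi}_{n,M} = \hat{\Gamma}_n^{\star}(M)$ and $\Sigma_n^{\star}(M)\xi_{n,M} = \Gamma_n^{\star}(M)$, exactly as in \eqref{eq:MatrixVector-OLS}. Subtracting these, telescoping by adding and subtracting $\hat{\Sigma}_n^{\star}(M)\xi_{n,M}$, and applying the triangle inequality together with the matrix--vector bound $\norm{Av}_{\infty} \le \norm{A}_{\infty}\norm{v}_1$ from \eqref{eq:MatrixVectorIneq} reproduces the analogue of \eqref{eq:Conserv2},
\begin{equation*}
\norm{\hat{\Sigma}_n^{\star}(M)\{\hat{\xi}_{n,M} - \xi_{n,M}\}}_{\infty} \le \norm{\hat{\Gamma}_n^{\star}(M) - \Gamma_n^{\star}(M)}_{\infty} + \norm{\hat{\Sigma}_n^{\star}(M) - \Sigma_n^{\star}(M)}_{\infty}\norm{\xi_{n,M}}_1.
\end{equation*}
Because the submatrix and subvector sup-norms are dominated by $\mathcal{D}_n^{\Sigma\star}$ and $\mathcal{D}_n^{\Gamma\star}$ respectively, this deterministic inequality holds simultaneously over all $M\in\mathcal{M}_p(p)$ with probability one, and the definition of the joint quantiles $(C_n^{\Gamma\star}(\alpha), C_n^{\Sigma\star}(\alpha))$ then delivers \eqref{eq:Simultaneous}.

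For the asymptotic guarantee I would first establish the starred analogue of Lemma~\ref{lem:UniformConsisL1}: for every $k$ with $k\mathcal{D}_n^{\Sigma\star} \le \Lambda_n^{\star}(k)$ and every $M\in\mathcal{M}_p(k)$,
\begin{equation*}
\norm{\hat{\xi}_{n,M} - \xi_{n,M}}_1 \le \frac{|M|\left(\mathcal{D}_n^{\Gamma\star} + \mathcal{D}_n^{\Sigma\star}\norm{\xi_{n,M}}_1\right)}{\Lambda_n^{\star}(k) - k\mathcal{D}_n^{\Sigma\star}}.
\end{equation*}
The proof of Lemma~\ref{lem:UniformConsisL1} in Appendix~\ref{app:ProofLemmaL1} again invokes only the normal equations and the minimum eigenvalue $\Lambda_n(k)$, so it carries over unchanged with $\Lambda_n^{\star}(k)$ in place of $\Lambda_n(k)$. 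With this bound in hand, the argument of Theorem~\ref{thm:PoSIX} applies directly: under $k\mathcal{D}_n^{\Sigma\star} = o_p(\Lambda_n^{\star}(k))$ one shows that replacing $\norm{\xi_{n,M}}_1$ by $\norm{\hat{\xi}_{n,M}}_1$ in the right-hand bound changes it by a factor differing from $1$ by at most $(k\mathcal{D}_n^{\Sigma\star}/\Lambda_n^{\star}(k))/(1 - k\mathcal{D}_n^{\Sigma\star}/\Lambda_n^{\star}(k)) = o_p(1)$, uniformly over $M\in\mathcal{M}_p(k)$. Feeding this back into the probability-one statement and invoking the quantile definition completes the proof.

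I do not expect a serious obstacle here, since the whole content is that the two earlier proofs are purely algebraic, resting only on normal equations, the elementary inequalities in \eqref{eq:MatrixVectorIneq}, and eigenvalue control. The one point deserving a moment's care is verifying that the deferred proof of Lemma~\ref{lem:UniformConsisL1} does not covertly exploit structure special to $\Sigma_n$ --- for instance its being an empirical second-moment matrix. Inspecting that proof confirms it accesses $\Sigma_n(M)$ solely through its minimum eigenvalue, so the transcription to the starred setting is legitimate and the generalization follows.
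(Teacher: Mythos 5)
Your proposal is correct and takes essentially the same route as the paper: the paper's own proof of Theorem~\ref{thm:Appr1Gen} consists precisely of the observation that the proofs of Theorems~\ref{thm:Appr1.2} and~\ref{thm:PoSIX} (and Lemma~\ref{lem:UniformConsisL1}) use no structure of $\hat{\Sigma}_n, \hat{\Gamma}_n$ beyond the normal equations, the inequalities in \eqref{eq:MatrixVectorIneq}, and minimum-eigenvalue control, so everything transfers verbatim to the starred quantities. Your write-up is the same argument with the substitutions carried out explicitly.
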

\begin{proof}
The proof is exactly the same as for Theorems \ref{thm:Appr1.2} and \ref{thm:PoSIX}. The reader just has to realize that we did not use any structure of $\hat{\Sigma}_n, \hat{\Gamma}_n$ or that they are unbiased estimators of $\Sigma_n, \Gamma_n$ respectively, in the proof there.
\end{proof}
\begin{rem}\label{rem:MissingRobust}
The result in Theorem \ref{thm:Appr1Gen} allows one to deal with the case of missing data or outliers in linear regression setting. In case of missing data or when the data is suspected of containing outliers, it might be more useful to use estimators of $\Sigma_n$ and $\Gamma_n$ that take this concern into account. For the case of missing data/errors-in-covariates/multiplicative noise, see \citet[Examples 1, 2 and 3]{Loh12} and references therein for estimators other than $\hat{\Sigma}_n$ and $\hat{\Gamma}_n$. For the case of outliers either in the classical sense or in the adversarial corruption setting, see \cite{Chen13}. For correct usage of this theorem, it is crucial that the sub-matrix and sub-vector of $\Sigma_n^{\star}$ and $\Gamma_n^{\star}$, respectively are used for sub-models. For example, if we use full covariate imputation in case of missing data, then the sub-model estimator should be based on a sub-matrix of this full covariate imputation. Also, see \citet[pages 11--12]{Uniform:Kuch18} for other settings of applicability.
\end{rem}

%----------------------------------------------------------------

\section{Connection to High-dimensional Regression and Other Confidence Regions}\label{sec:HighDim}
The confidence regions $\hat{\mathcal{R}}_{n,M}$ and $\hat{\mathcal{R}}_{n,M}^{\dagger}$ have a very close connection to a well-known estimator in the high-dimensional linear regression literature called the Dantzig Selector proposed by \cite{Can07} and the closely related ones by \cite{Ros10} and \cite{Chen13}. These papers or methods are not related to post-selection inference and were proposed under a linear model assumption. The Dantzig selector estimates $\beta_0\in\mathbb{R}^p$, using observations $(X_i^{\top}, Y_i), 1\le i\le n$ that satisfy $Y_i = X_i^{\top}\beta_0 + \varepsilon_i$ for independent and identically distributed errors $\varepsilon_i$ with a mean zero normal distribution.  \cite{Can07}, like many others, assumed fixed covariates $X_i, 1\le i\le n$. In our notation, the Dantzig selector is defined by the optimization problem
\[
\mbox{minimize}\quad\norm{\beta}_1\quad\mbox{subject to}\quad \norm{\Gamma_n - \Sigma_n\beta}_{\infty}\le \lambda_n,
\]
for some tuning parameter $\lambda_n$ that converges to zero as $n$ increases. To relate this to our confidence regions $\hat{\mathcal{R}}_{n,M}^{\dagger}$ (in \eqref{eq:FirstFinite}), note that for $\beta = \beta_0$ in the constraint set, the quantity inside the norm is $\Sigma_n(\hat{\beta} - \beta_0)$ where $\hat{\beta}$ is any least squares estimator. %See also Equation \eqref{eq:BeforeLinearRepresentation} in Chapter \ref{chap:AssumpLean}.
The estimator defined in \cite{Chen13} and \cite{Ros10} resembles
\[
\mbox{minimize}\quad\norm{\beta}_1\quad\mbox{subject to}\quad \norm{\Gamma_n - \Sigma_n\beta}_{\infty} \le \lambda_n + \delta_n\norm{\beta}_1,
\]
for some tuning parameters $\lambda_n$ and $\delta_n$ both converging to zero as $n$ increases. This constraint set corresponds to our confidence regions $\hat{\mathcal{R}}_{n,M}$ in Theorem \ref{thm:Appr1.2}.

The following theorem proves that there exist valid post-selection confidence regions that resemble the objective functions of lasso (\cite{Tibs96}) and sqrt-lasso (\cite{Belloni11}). The proof is deferred to Appendix \ref{app:LassoRegions}. These relations to high-dimensional linear regression literature poses the interesting question: ``is there a more deeper connection between post-selection inference and high-dimensional estimation?''. Other than the results in linear regression, we do not yet have an answer to this interesting question.

Define for every $M\in\mathcal{M}_p(p)$, the confidence regions
\begin{align*}
\mathring{\mathcal{R}}_{n,M} &:= \left\{\theta\in\mathbb{R}^{|M|}:\,\right.\\
&\left.\hat{R}_n(\theta; M) \le \hat{R}_n(\hat{\beta}_{n,M}; M) + 2C_{n}^{\Gamma}(\alpha)\left[\norm{\hat{\beta}_{n,M}}_1 + \norm{\theta}_1\right] + C_{n}^{\Sigma}(\alpha)\left[\norm{\hat{\beta}_{n,M}}_1^2 + \norm{\theta}_1^2\right]\right\},\\
\mathring{\mathcal{R}}_{n,M}^{\dagger} &:= \left\{\theta\in\mathbb{R}^{|M|}:\,\hat{R}_n(\theta; M) \le \hat{R}_n(\hat{\beta}_{n,M}; M) + 4C_{n}^{\Gamma}(\alpha)\norm{\hat{\beta}_{n,M}}_1 + 2C_{n}^{\Sigma}(\alpha)\norm{\hat{\beta}_{n,M}}_1^2\right\},\\
\breve{\mathcal{R}}_{n,M} &:= \left\{\theta\in\mathbb{R}^{|M|}:\,\right.\\
&\quad\left.\hat{R}^{1/2}_n(\theta; M) \le \hat{R}_n^{1/2}(\hat{\beta}_{n,M}; M) + C_n^{1/2}(\alpha)\left(1 + \norm{\theta}_1\right)+ C_n^{1/2}(\alpha)\left(1 + \norm{\hat{\beta}_{n,M}}_1\right)\right\},\\
\breve{\mathcal{R}}_{n,M}^{\dagger} &:= \left\{\theta\in\mathbb{R}^{|M|}:\,\hat{R}^{1/2}_n(\theta; M) \le \hat{R}_n^{1/2}(\hat{\beta}_{n,M}; M) + 2C_n^{1/2}(\alpha)\left(1 + \norm{\hat{\beta}_{n,M}}_1\right)\right\},
\end{align*}
where $\hat{R}_n(\cdot; M)$ is the empirical least squares objective function defined in Equation \eqref{eq:EmpObj} and $C_n(\alpha)$ is the $(1 - \alpha)$-upper quantile of $\max\{\mathcal{D}_{n}^{\Gamma}, \mathcal{D}_{n}^{\Sigma}\}$.
\begin{thm}\label{thm:PoSILasso}
For any $n\ge 1, p\ge 1$, the following simultaneous inference guarantee holds:
\begin{align}
\mathbb{P}\left(\bigcap_{M\in\mathcal{M}_p(p)}\left\{\beta_{n,M}\in\mathring{\mathcal{R}}_{n,M}\right\}\right) &\ge 1 - \alpha,\label{eq:LassoFinite}\\
\mathbb{P}\left(\bigcap_{M\in\mathcal{M}_p(p)}\left\{\beta_{n,M}\in\breve{\mathcal{R}}_{n,M}\right\}\right) &\ge 1 - \alpha,\label{eq:SqrtLassoFinite}
\end{align}
and for any $1\le k\le p$ satisfying \ref{eq:UniformConsis}, we have
\begin{align}
\liminf_{n\to\infty}\,\mathbb{P}\left(\bigcap_{M\in\mathcal{M}_p(p)}\left\{\beta_{n,M}\in\mathring{\mathcal{R}}_{n,M}^{\dagger}\right\}\right) &\ge 1 - \alpha,\label{eq:LassoAsym}\\
\liminf_{n\to\infty}\,\mathbb{P}\left(\bigcap_{M\in\mathcal{M}_p(p)}\left\{\beta_{n,M}\in\breve{\mathcal{R}}_{n,M}^{\dagger}\right\}\right) &\ge 1 - \alpha,\label{eq:SqrtLassoAsym}
\end{align}
\end{thm}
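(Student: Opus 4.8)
The plan is to mirror the elementary structure of Theorems~\ref{thm:Appr1.2} and~\ref{thm:PoSIX}, replacing the normal-equation manipulation by a single two-sided control of the empirical objective. First I would introduce the ``population-Gram'' objective
\[
\bar{R}_n(\theta; M) := \theta^{\top}\Sigma_n(M)\theta - 2\theta^{\top}\Gamma_n(M) + \frac{1}{n}\sum_{i=1}^n Y_i^2 ,
\]
which by the normal equations~\eqref{eq:MatrixVector-OLS} is minimized at $\theta = \beta_{n,M}$ (it differs from $R_n(\theta;M)$ only by an additive constant, so shares its minimizer). Subtracting $\bar{R}_n$ from $\hat{R}_n$ in~\eqref{eq:EmpObj} leaves $\theta^{\top}(\hat{\Sigma}_n(M)-\Sigma_n(M))\theta - 2\theta^{\top}(\hat{\Gamma}_n(M)-\Gamma_n(M))$, and the first and third inequalities of~\eqref{eq:MatrixVectorIneq} together with~\eqref{eq:D1nD2n} give the key deterministic, uniform-in-$M$ bound
\[
\bigl|\hat{R}_n(\theta; M) - \bar{R}_n(\theta; M)\bigr| \le 2\mathcal{D}_{n}^{\Gamma}\norm{\theta}_1 + \mathcal{D}_{n}^{\Sigma}\norm{\theta}_1^2 \qquad\text{for all }\theta\in\mathbb{R}^{|M|},\ M\in\mathcal{M}_p(p).
\]

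For the finite-sample lasso guarantee~\eqref{eq:LassoFinite} I would apply this at $\theta=\beta_{n,M}$ (as an upper bound on $\hat{R}_n$) and at $\theta=\hat{\beta}_{n,M}$ (as a lower bound), then use $\bar{R}_n(\beta_{n,M};M)\le \bar{R}_n(\hat{\beta}_{n,M};M)$ from minimality to cancel the $\bar{R}_n$ terms. This yields $\hat{R}_n(\beta_{n,M};M) - \hat{R}_n(\hat{\beta}_{n,M};M) \le 2\mathcal{D}_{n}^{\Gamma}(\norm{\beta_{n,M}}_1+\norm{\hat{\beta}_{n,M}}_1) + \mathcal{D}_{n}^{\Sigma}(\norm{\beta_{n,M}}_1^2+\norm{\hat{\beta}_{n,M}}_1^2)$, which holds for every sample and every $M$ at once; intersecting over $M$ and passing to the quantile event~\eqref{eq:ConfidenceR} (on which $\mathcal{D}_{n}^{\Gamma}\le C_{n}^{\Gamma}(\alpha)$ and $\mathcal{D}_{n}^{\Sigma}\le C_{n}^{\Sigma}(\alpha)$) gives membership in $\mathring{\mathcal{R}}_{n,M}$ with probability at least $1-\alpha$. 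For the sqrt-lasso guarantee~\eqref{eq:SqrtLassoFinite} I would denote by $\Delta(M)\ge 0$ the right-hand side just obtained, so that $\hat{R}_n(\beta_{n,M};M)\le \hat{R}_n(\hat{\beta}_{n,M};M)+\Delta(M)$ and hence, by $\sqrt{a+b}\le\sqrt{a}+\sqrt{b}$ on nonnegative arguments, $\hat{R}_n^{1/2}(\beta_{n,M};M)\le \hat{R}_n^{1/2}(\hat{\beta}_{n,M};M)+\sqrt{\Delta(M)}$. With $D_n:=\max\{\mathcal{D}_{n}^{\Gamma},\mathcal{D}_{n}^{\Sigma}\}$ one checks the elementary inequality $\Delta(M)\le D_n\,[\,2+\norm{\beta_{n,M}}_1+\norm{\hat{\beta}_{n,M}}_1\,]^2$, whence $\sqrt{\Delta(M)}\le D_n^{1/2}(1+\norm{\beta_{n,M}}_1)+D_n^{1/2}(1+\norm{\hat{\beta}_{n,M}}_1)$; on $\{D_n\le C_n(\alpha)\}$ this is exactly the defining inequality of $\breve{\mathcal{R}}_{n,M}$. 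This route never takes a square root of $\bar{R}_n$, sidestepping any sign issue.

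For the asymptotic dagger guarantees~\eqref{eq:LassoAsym} and~\eqref{eq:SqrtLassoAsym} the only new ingredient is the replacement of $\norm{\beta_{n,M}}_1$ by $\norm{\hat{\beta}_{n,M}}_1$ in the two symmetric bounds, precisely the device used in the proof of Theorem~\ref{thm:PoSIX}. Under~\ref{eq:UniformConsis}, Lemma~\ref{lem:UniformConsisL1} bounds $\sup_{M\in\mathcal{M}_p(k)}\norm{\hat{\beta}_{n,M}-\beta_{n,M}}_1$ at a rate making $\sup_M\bigl|\,\norm{\beta_{n,M}}_1-\norm{\hat{\beta}_{n,M}}_1\,\bigr|$ negligible relative to the bounds, so that $[\norm{\beta_{n,M}}_1+\norm{\hat{\beta}_{n,M}}_1]=2\norm{\hat{\beta}_{n,M}}_1(1+o_p(1))$ and $[\norm{\beta_{n,M}}_1^2+\norm{\hat{\beta}_{n,M}}_1^2]=2\norm{\hat{\beta}_{n,M}}_1^2(1+o_p(1))$ uniformly over $M$; these produce the dagger coefficients $4C_{n}^{\Gamma}(\alpha)$, $2C_{n}^{\Sigma}(\alpha)$ and $2C_n^{1/2}(\alpha)$, and combining with the quantile event yields $\liminf_n\mathbb{P}(\cdot)\ge1-\alpha$. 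I expect this uniform replacement to be the main obstacle: the substitution must be controlled simultaneously over the nearly $(ep/k)^k$ models without eroding the $1-\alpha$ level, and one must verify that the additive slippage generated by $\norm{\hat{\beta}_{n,M}-\beta_{n,M}}_1$ is dominated by the dagger bound uniformly—exactly the role played by assumption~\ref{eq:UniformConsis} and the eigenvalue floor $\Lambda_n(k)$ in~\eqref{eq:Lambdak}. Everything else is deterministic algebra insensitive to dependence among the $(X_i,Y_i)$, so the finite-sample statements~\eqref{eq:LassoFinite} and~\eqref{eq:SqrtLassoFinite} hold verbatim without any independence assumption.
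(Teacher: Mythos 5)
Your proposal is correct and takes essentially the same route as the paper's own proof: your deterministic bound $|\hat{R}_n(\theta;M)-\bar{R}_n(\theta;M)|\le 2\mathcal{D}_{n}^{\Gamma}\norm{\theta}_1+\mathcal{D}_{n}^{\Sigma}\norm{\theta}_1^2$ is precisely the paper's inequality \eqref{eq:FundamentalIneq}, your sandwich argument (upper bound at $\beta_{n,M}$, population minimality, lower bound at $\hat{\beta}_{n,M}$) is exactly the three-step chain yielding \eqref{eq:MainIneqLasso}, and the square-root and dagger parts proceed identically via subadditivity of the square root and the Theorem~\ref{thm:PoSIX}-style substitution of $\norm{\hat{\beta}_{n,M}}_1$ for $\norm{\beta_{n,M}}_1$ under \ref{eq:UniformConsis} using Lemma~\ref{lem:UniformConsisL1}. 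The only noteworthy difference is that the paper carries out that substitution through the ratio $(\mathcal{D}_{n}^{\Gamma}+\mathcal{D}_{n}^{\Sigma}\norm{\hat{\beta}_{n,M}}_1)/(\mathcal{D}_{n}^{\Gamma}+\mathcal{D}_{n}^{\Sigma}\norm{\beta_{n,M}}_1)$, whose $\mathcal{D}_{n}^{\Gamma}$ terms guard the case $\norm{\hat{\beta}_{n,M}}_1\approx 0$, where your literal intermediate claim $\norm{\beta_{n,M}}_1+\norm{\hat{\beta}_{n,M}}_1=2\norm{\hat{\beta}_{n,M}}_1(1+o_p(1))$ can fail even though the one-sided grouped bound your argument actually needs still holds.
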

\begin{rem}{ (Intersection of Confidence Regions)}\label{rem:Intersect}
All our confidence regions are based on deterministic inequalities as mentioned before. This implies that the intersection of the confidence regions $\hat{\mathcal{R}}_{n,M}, \hat{\mathcal{R}}_{n,M}^{\dagger}$ and $\mathring{\mathcal{R}}_{n,M}$ provides a valid simultaneous and post-selection inference. That means, for any $1\le k\le p$ such that \ref{eq:UniformConsis} holds,
\begin{equation}\label{eq:Intersect}
\liminf_{n\to\infty}\,\mathbb{P}\left(\bigcap_{M\in\mathcal{M}_p(k)}\left\{\hat{\mathcal{R}}_{n,M}\cap\hat{\mathcal{R}}_{n,M}^{\dagger}\cap \mathring{\mathcal{R}}_{n,M}\right\}\right) \ge 1 - \alpha.
\end{equation}
To prove this, let $\hat{\mathcal{C}}_{n,M}, \hat{\mathcal{C}}_{n,M}^{\dagger}$ and $\mathring{\mathcal{C}}_{n,M}$ represent the confidence sets $\hat{\mathcal{R}}_{n,M}, \hat{\mathcal{R}}_{n,M}^{\dagger}$ and $\mathring{\mathcal{R}}_{n,M}$ with $(C_{n}^{\Gamma}(\alpha), C_{n}^{\Sigma}(\alpha))$ replaced by $(\mathcal{D}_{n}^{\Gamma}, \mathcal{D}_{n}^{\Sigma})$. From the proofs of Theorems \ref{thm:Appr1.2}, \ref{thm:PoSIX} and \ref{thm:PoSILasso}, it is clear that
\[
\liminf_{n\to\infty}\,\mathbb{P}\left(\bigcap_{M\in\mathcal{M}_p(k)}\left\{\hat{\mathcal{C}}_{n,M}\cap\hat{\mathcal{C}}_{n,M}^{\dagger}\cap\mathring{\mathcal{C}}_{n,M}\right\}\right) = 1.
\]
So by the definition of $(C_{n}^{\Gamma}(\alpha), C_{n}^{\Sigma}(\alpha))$ \eqref{eq:ConfidenceR}, the result of \eqref{eq:Intersect} follows. Provably the intersection of confidence regions is smaller. By the same argument it is possible to include the confidence regions $\mathring{\mathcal{R}}_{n,M}^{\dagger}, \breve{\mathcal{R}}_{n,M},$ and $\breve{\mathcal{R}}_{n,M}^{\dagger}$ in the intersection .
\end{rem}
\begin{rem}{ (Usefulness of Lasso-based Regions)}
The confidence regions discussed in this section are given solely for the purpose of illustrating and making solid the connection between post-selection inference and high-dimensional linear regression. The shape of all these confidence regions is ellipsoid and have larger volume than the confidence region $\hat{\mathcal{R}}_{n,M}^{\dagger}$ in terms of the rate. This result is not presented here but is not difficult to prove. This rate comparison is only asymptotic and the intersection argument presented in Remark \ref{rem:Intersect} might still be useful in finite samples.
\end{rem}
%%%%%%%%%%%%%%%%%%%%%%%%%%%%%%%%%%%%%%%%%%%%%%%%
%%%%%%%%%%%%%%%%%%%%%%%%%%%%%%%%%%%%%%%%%%%%%%%%

%----------------------------------------------------------------

\section{Discussion of the Current Approach}\label{sec:ProsAndCons}

The confidence regions $\hat{\mathcal{R}}_{n,M}$ and $\hat{\mathcal{R}}_{n,M}^{\dagger}$ constitute what we call approach 1. Various advantages and disadvantages of this approach are discussed in this section. Some of these comments also apply to the confidence regions mentioned in Theorem \ref{thm:Appr1Gen}.

The following are some of the advantages of this approach. The confidence regions are asymptotically valid for post-selection inference. This is the first work that provides valid post-selection inference in this generality. The confidence region for any model $M$ depend only on the joint quantiles $C_{n}^{\Gamma}(\alpha), C_{n}^{\Sigma}(\alpha)$ and the least squares linear regression estimator corresponding to the model $M$, $\hat{\beta}_{n,M}$. So, the computational complexity of these confidence regions is no more than a multiple of the computational complexity of $\hat{\beta}_{n,M}$. Computation of $C_{n}^{\Gamma}(\alpha), C_{n}^{\Sigma}(\alpha)$ takes no more than a linear function of $p$ operations, as shown in Section \ref{sec:Comp}. This computational complexity is in sharp contrast to the valid post-selection inference method proposed by \cite{Berk13} or \cite{Bac16} which requires essentially solving for the least squares estimators of all the models for a confidence region with some model $M$. Therefore, implementation of their procedure is NP-hard, in general. The Lebesgue measure of the confidence regions $\hat{\mathcal{R}}_{n,M}^{\dagger}$ converges to zero at a rate that is the minimax rate in high-dimensional linear regression literature. So, we suspect this might be the optimal rate here too but at present we do not have a proof or even an optimality framework. Note that the volume of the confidence region for model $M$ is computed with respect to the Lebesgue on $\mathbb{R}^{|M|}.$

There is one more advantage which might not seem like one at first glance. The confidence region for $\beta_{n,M}$ for a particular model does not require information on how many models are being used for model selection. The volume of the confidence region for $\beta_{n,M}$ depends only on the features of the model $M$ except for the quantiles. This implies that the confidence regions $\hat{\mathcal{R}}_{n,M}^{\dagger}, M\in\mathcal{M}_p(k)$ can often have much smaller volumes than the ones produced using the approach of \cite{Berk13}.

There are some disadvantages and some irking factors associated with this approach. Firstly, notice that the confidence regions are not invariant under linear transformations of the observations as briefed in Remark \ref{rem:Invariance}. Most methods in high-dimensional linear regression procedures that induce sparsity also share this feature. Even from a naive point of view, invariance under change of units for all variables involved is crucial for interpretation. This translates to invariance under diagonal linear transformations of the observations. Normalizing all the variables involved to have a unit standard deviation is a commonly suggested method to attain invariance under diagonal transformations. Formally, this means one should use
\[
{X}_i^* = \left(\frac{X_i(1) - \bar{X}(1)}{s_n(1)}, \ldots, \frac{X_i(p) - \bar{X}_i(p)}{s_n(p)}\right),\quad {Y}_i^* = \frac{Y_i - \bar{Y}}{s_n(0)},
\]
in place of $(X_i, Y_i), 1\le i\le n$, where for $1\le j\le p$,
\[
\bar{X}(j) = \frac{1}{n}\sum_{i=1}^n X_i(j),\quad\mbox{and}\quad s_n^2(j) = \frac{1}{n}\sum_{i=1}^n \left[X_i(j) - \bar{X}(j)\right]^2,
\]
and
\[
\bar{Y} = \frac{1}{n}\sum_{i=1}^n Y_i,\quad\mbox{and}\quad s_{n}^2(0) = \frac{1}{n}\sum_{i=1}^n \left[Y_i - \bar{Y}\right]^2.
\]
This leads to the matrix and vector,
\[
\hat\Sigma_n^{\star} = \frac{1}{n}\sum_{i=1}^n {X}_i^*{X}_i^{*\top},\quad\mbox{and}\quad \hat\Gamma_n^{\star} = \frac{1}{n}\sum_{i=1}^n {X}_i^*{Y}_i^*.
\]
Note that the observations $({X}_i^*, {Y}_i^*), 1\le i\le n$ are not independent even if we start with independent observations $(X_i, Y_i)$. This is one of the reasons why we did not assume independence for Theorems \ref{thm:Appr1.2}, \ref{thm:PoSIX} and \ref{thm:Appr1Gen}. Of course one needs to prove the rates for the error norms $\mathcal{D}_{n}^{\Gamma\star}$ and $\mathcal{D}_{n}^{\Sigma\star}$ in this case for an application of these results. We leave it to the reader to verify that the rates are exactly the same obtained in Lemma \ref{lem:RateD1nD2n} (one needs to use a Slutsky-type argument). See \cite{Cui16} for a similar derivation. We conjecture that much weaker conditions than listed in Lemma \ref{lem:RateD1nD2n} are enough for those same rates, in particular, exponential moments are not required. See \citet[Theorem 5.3]{Geer14} for a result in this direction. Getting back to invariance under arbitrary linear transformations, we do not know if it is possible come up with a procedure that retains the computational complexity of approach 1 while satisfying this invariance. We conjecture that this is not possible and that there is a strict trade-off between computational efficiency and affine invariance.
% %%%%%% AB: Can't use a forward reference to another chapter:
% On this point, the general approach developed in the next chapter obeys this affine invariance but are generally NP-hard in computation.

Another disadvantage of approach 1 is that it is mostly based on deterministic inequalities. As the reader may have suspected, this might lead to some conservativeness of the method. Note that non-identical distributions of the observations already introduces some conservativeness. The confidence regions $\hat{\mathcal{R}}_{n,M}$ and $\hat{\mathcal{R}}_{n,M}^{\dagger}$ cover $\beta_{n,M}$ with probability (at least) $1 - \alpha$ asymptotically. In particular, these confidence regions provide valid post-selection inference for the full vector $\beta_{n,M}$ instead of each of the coordinates of $\beta_{n,M}$. The region $\hat{\mathcal{R}}_{n,M}^{\dagger}$ is defined by a system of linear inequalities and hence the local inference (or inference on coordinates) for $\beta_{n,M}(j), 1\le j\le |M|$ can be obtained by solving a linear program. However, these can be very conservative for local inference guarantees.

We emphasize before ending this section that the main focus of approach 1 is validity and better computational complexity not optimality. However, optimality holds for our confidence regions as mentioned in Remark \ref{rem:FixedX} for fixed covariates. It should be understood that without validity there is no point in proving any kind of optimality properties about the size of confidence region. %Validity and optimality are the focus of the unified approach presented in Chapter \ref{chap:UnifAppr}.

%----------------------------------------------------------------

\section{Conclusions and Future Directions}\label{sec:Conclusions}
In this paper, we have considered a computationally efficient approach to valid post-selection inference in linear regression under arbitrary data-driven method of variable selection. The approach here is very different from the other methodologies available in the literature and is based on the estimating equation of linear regression. At present it is not clear if this approach can be extended to other $M$-estimation problems. Since our confidence regions are based on deterministic inequalities, our results provide valid post-selection inference even under dependence and non-identically distributed random vectors. For this reason, the setting of the current work is the most general available in the literature of post-selection inference. 

In addition to providing several valid confidence regions, we compare the Lebesgue measure of our confidence regions with the ones from~\cite{Berk13} and~\cite{Bac16}. This comparison shows that our confidence regions are much smaller (in terms of volume) in case of fixed (non-stochastic) covariates. In general, the volume of our confidence regions scales with the cardinality of model $\hat{M}$ chosen. This is a feature not available from the works of~\cite{Berk13} and~\cite{Bac16}. Note that the confidence regions from selective inference literature have infinite expected length as shown in~\cite{2018arXiv180301665K}.

An interesting finding of our work is the connection between post-selection confidence regions and high-dimensional sparsity inducing linear regression estimators. If this finding were to hold for other $M$-estimation problems, then computationally efficient valid post-selection confidence regions are possible in general.   
%%%%%%%%%%%%%%%%%%%%%%%%%%%%%%%%%%%%%%%%%%%%%%%%
%%%%%%%%%%%%%%%%%%%%%%%%%%%%%%%%%%%%%%%%%%%%%%%%
\bibliographystyle{apalike}
\bibliography{AssumpLean}

\begin{thebibliography}{}

\bibitem[Anderson, 1955]{Anderson55}
Anderson, T.~W. (1955).
\newblock The integral of a symmetric unimodal function over a symmetric convex
  set and some probability inequalities.
\newblock {\em Proc. Amer. Math. Soc.}, 6:170--176.

\bibitem[{Bachoc} et~al., 2018]{Bachoc18}
{Bachoc}, F., {Blanchard}, G., and {Neuvial}, P. (2018).
\newblock {On the Post Selection Inference constant under Restricted Isometry
  Properties}.
\newblock {\em ArXiv e-prints}.

\bibitem[{Bachoc} et~al., 2016]{Bac16}
{Bachoc}, F., {Preinerstorfer}, D., and {Steinberger}, L. (2016).
\newblock {Uniformly valid confidence intervals post-model-selection}.
\newblock {\em ArXiv e-prints}.

\bibitem[Belloni et~al., 2011]{Belloni11}
Belloni, A., Chernozhukov, V., and Wang, L. (2011).
\newblock Square-root lasso: pivotal recovery of sparse signals via conic
  programming.
\newblock {\em Biometrika}, 98(4):791--806.

\bibitem[Belloni et~al., 2017]{Belloni17}
Belloni, A., Rosenbaum, M., and Tsybakov, A.~B. (2017).
\newblock Linear and conic programming estimators in high dimensional
  errors-in-variables models.
\newblock {\em J. R. Stat. Soc. Ser. B. Stat. Methodol.}, 79(3):939--956.

\bibitem[Berk et~al., 2013]{Berk13}
Berk, R., Brown, L., Buja, A., Zhang, K., and Zhao, L. (2013).
\newblock Valid post-selection inference.
\newblock {\em Ann. Statist.}, 41(2):802--837.

\bibitem[Buehler and Feddersen, 1963]{Bueh63}
Buehler, R.~J. and Feddersen, A.~P. (1963).
\newblock Note on a conditional property of {S}tudent's {$t$}.
\newblock {\em Ann. Math. Statist.}, 34:1098--1100.

\bibitem[{Buja} et~al., 2014]{Buja14}
{Buja}, A., {Berk}, R., {Brown}, L., {George}, E., {Pitkin}, E., {Traskin}, M.,
  {Zhan}, K., and {Zhao}, L. (2014).
\newblock {Models as Approximations, Part I: A Conspiracy of Nonlinearity and
  Random Regressors in Linear Regression}.
\newblock {\em ArXiv e-prints}.

\bibitem[Candes and Tao, 2007]{Can07}
Candes, E. and Tao, T. (2007).
\newblock The {D}antzig selector: statistical estimation when {$p$} is much
  larger than {$n$}.
\newblock {\em Ann. Statist.}, 35(6):2313--2351.

\bibitem[Chen et~al., 2013]{Chen13}
Chen, Y., Caramanis, C., and Mannor, S. (2013).
\newblock Robust sparse regression under adversarial corruption.
\newblock In {\em Proceedings of The 30th International Conference on Machine
  Learning}, volume~28, pages 774--782.

\bibitem[Chernozhukov et~al., 2017]{Chern17}
Chernozhukov, V., Chetverikov, D., and Kato, K. (2017).
\newblock Central limit theorems and bootstrap in high dimensions.
\newblock {\em Ann. Probab.}, 45(4):2309--2352.

\bibitem[Claeskens and Carroll, 2007]{Claeskens07}
Claeskens, G. and Carroll, R.~J. (2007).
\newblock An asymptotic theory for model selection inference in general
  semiparametric problems.
\newblock {\em Biometrika}, 94(2):249--265.

\bibitem[Cui et~al., 2016]{Cui16}
Cui, Y., Leng, C., and Sun, D. (2016).
\newblock Sparse estimation of high-dimensional correlation matrices.
\newblock {\em Comput. Statist. Data Anal.}, 93:390--403.

\bibitem[de~la Pe{\~n}a and Gin{\'e}, 1999]{DeLaPena99}
de~la Pe{\~n}a, V.~H. and Gin{\'e}, E. (1999).
\newblock {\em Decoupling}.
\newblock Probability and its Applications (New York). Springer-Verlag, New
  York.
\newblock From dependence to independence, Randomly stopped processes.
  $U$-statistics and processes. Martingales and beyond.

\bibitem[Freedman, 1983]{Freed83}
Freedman, D.~A. (1983).
\newblock A note on screening regression equations.
\newblock {\em Amer. Statist.}, 37(2):152--155.

\bibitem[Hjort and Claeskens, 2003]{Hjort03}
Hjort, N.~L. and Claeskens, G. (2003).
\newblock Frequentist model average estimators.
\newblock {\em J. Amer. Statist. Assoc.}, 98(464):879--899.

\bibitem[{Kivaranovic} and {Leeb}, 2018]{2018arXiv180301665K}
{Kivaranovic}, D. and {Leeb}, H. (2018).
\newblock {Expected length of post-model-selection confidence intervals
  conditional on polyhedral constraints}.
\newblock {\em ArXiv e-prints:1803.01665}.

\bibitem[Kuchibhotla et~al., 2018]{Chap2:Kuch18}
Kuchibhotla, A.~K., Brown, L.~D., and Buja, A. (2018).
\newblock Assumption-lean linear regression.
\newblock {\em Unpublished Manuscript}, pages 1--30.

\bibitem[{Kuchibhotla} et~al., 2018]{Uniform:Kuch18}
{Kuchibhotla}, A.~K., {Brown}, L.~D., {Buja}, A., {George}, E.~I., and {Zhao},
  L. (2018).
\newblock {A Model Free Perspective for Linear Regression: Uniform-in-model
  Bounds for Post Selection Inference}.
\newblock {\em ArXiv e-prints}.

\bibitem[{Kuchibhotla} and {Chakrabortty}, 2018]{KuchAbhi17}
{Kuchibhotla}, A.~K. and {Chakrabortty}, A. (2018).
\newblock {Moving Beyond Sub-{G}aussianity in High-Dimensional Statistics:
  Applications in Covariance Estimation and Linear Regression}.
\newblock {\em ArXiv e-prints:1804.02605}.

\bibitem[Lee et~al., 2016]{Lee16}
Lee, J.~D., Sun, D.~L., Sun, Y., and Taylor, J.~E. (2016).
\newblock Exact post-selection inference, with application to the lasso.
\newblock {\em Ann. Statist.}, 44(3):907--927.

\bibitem[Liu et~al., 2013]{Liu13}
Liu, W., Xiao, H., and Wu, W.~B. (2013).
\newblock Probability and moment inequalities under dependence.
\newblock {\em Statist. Sinica}, 23(3):1257--1272.

\bibitem[Loh and Wainwright, 2012]{Loh12}
Loh, P.-L. and Wainwright, M.~J. (2012).
\newblock High-dimensional regression with noisy and missing data: provable
  guarantees with nonconvexity.
\newblock {\em Ann. Statist.}, 40(3):1637--1664.

\bibitem[Olshen, 1973]{Olsh73}
Olshen, R.~A. (1973).
\newblock The conditional level of the {$F$}-test.
\newblock {\em J. Amer. Statist. Assoc.}, 68:692--698.

\bibitem[P\"otscher and Prucha, 1997]{Potscher97}
P\"otscher, B.~M. and Prucha, I.~R. (1997).
\newblock {\em Dynamic nonlinear econometric models}.
\newblock Springer-Verlag, Berlin.
\newblock Asymptotic theory.

\bibitem[Rencher and Pun, 1980]{REN80}
Rencher, A.~C. and Pun, F.~C. (1980).
\newblock Inflation of {$R^2$} in best subset regression.
\newblock {\em Technometrics}, 22(1):49--53.

\bibitem[{Rinaldo} et~al., 2016]{Rin16}
{Rinaldo}, A., {Wasserman}, L., {G'Sell}, M., {Lei}, J., and {Tibshirani}, R.
  (2016).
\newblock {Bootstrapping and Sample Splitting For High-Dimensional,
  Assumption-Free Inference}.
\newblock {\em ArXiv e-prints}.

\bibitem[Rosenbaum and Tsybakov, 2010]{Ros10}
Rosenbaum, M. and Tsybakov, A.~B. (2010).
\newblock Sparse recovery under matrix uncertainty.
\newblock {\em Ann. Statist.}, 38(5):2620--2651.

\bibitem[Simmons et~al., 2011]{simmons2011false}
Simmons, J.~P., Nelson, L.~D., and Simonsohn, U. (2011).
\newblock False-positive psychology: Undisclosed flexibility in data collection
  and analysis allows presenting anything as significant.
\newblock {\em Psychological science}, 22(11):1359--1366.

\bibitem[{Tian} et~al., 2016]{Tian16}
{Tian}, X., {Bi}, N., and {Taylor}, J. (2016).
\newblock {MAGIC: a general, powerful and tractable method for selective
  inference}.
\newblock {\em ArXiv e-prints}.

\bibitem[Tibshirani, 1996]{Tibs96}
Tibshirani, R. (1996).
\newblock Regression shrinkage and selection via the lasso.
\newblock {\em J. Roy. Statist. Soc. Ser. B}, 58(1):267--288.

\bibitem[Tibshirani et~al., 2018]{MR3798003}
Tibshirani, R.~J., Rinaldo, A., Tibshirani, R., and Wasserman, L. (2018).
\newblock Uniform asymptotic inference and the bootstrap after model selection.
\newblock {\em Ann. Statist.}, 46(3):1255--1287.

\bibitem[Tibshirani et~al., 2016]{Tibs16}
Tibshirani, R.~J., Taylor, J., Lockhart, R., and Tibshirani, R. (2016).
\newblock Exact post-selection inference for sequential regression procedures.
\newblock {\em J. Amer. Statist. Assoc.}, 111(514):600--620.

\bibitem[van~de Geer and Muro, 2014]{Geer14}
van~de Geer, S. and Muro, A. (2014).
\newblock On higher order isotropy conditions and lower bounds for sparse
  quadratic forms.
\newblock {\em Electron. J. Stat.}, 8(2):3031--3061.

\bibitem[Wu, 2005]{Wu05}
Wu, W.~B. (2005).
\newblock Nonlinear system theory: another look at dependence.
\newblock {\em Proc. Natl. Acad. Sci. USA}, 102(40):14150--14154.

\bibitem[Wu and Mielniczuk, 2010]{Wu10}
Wu, W.~B. and Mielniczuk, J. (2010).
\newblock A new look at measuring dependence.
\newblock In {\em Dependence in probability and statistics}, volume 200 of {\em
  Lect. Notes Stat.}, pages 123--142. Springer, Berlin.

\bibitem[Wu and Wu, 2016]{WuWu16}
Wu, W.-B. and Wu, Y.~N. (2016).
\newblock Performance bounds for parameter estimates of high-dimensional linear
  models with correlated errors.
\newblock {\em Electron. J. Stat.}, 10(1):352--379.

\bibitem[{Zhang} and {Cheng}, 2014]{Zhang14}
{Zhang}, X. and {Cheng}, G. (2014).
\newblock {Bootstrapping High Dimensional Time Series}.
\newblock {\em ArXiv e-prints}.

\end{thebibliography}
\appendix
%\begin{subappendices}

%----------------------------------------------------------------
\medskip
\begin{center}
APPENDIX
\end{center}
\section{Proof of Lemma \ref{lem:UniformConsisL1}}\label{app:ProofLemmaL1}

Fix $M\in\mathcal{M}_p(k)$ with $k\mathcal{D}_{n}^{\Sigma} \le \Lambda_n(k)$. Observe that the least squares estimator satisfies
\begin{equation}\label{eq:LeastSquaresEquation}
\hat{\beta}_{n,M} - \beta_{n,M} = \left(\Sigma_n(M)\right)^{-1}\left(\left[\hat{\Gamma}_n(M) - \Gamma_n(M)\right] - \left[\hat{\Sigma}_n(M) - \Sigma_n(M)\right]\beta_{n,M}\right),
\end{equation}
and for all $M\in\mathcal{M}_p(k),$
\begin{align}
\norm{\hat{\Sigma}_n(M) - \Sigma_n(M)}_{op} &\le  \sup_{\substack{\norm{\theta}_0 \le k,\\\norm{\theta}_2 \le 1}}\left|\theta^{\top}\left(\hat{\Sigma}_n - \Sigma_n\right)\theta\right| \le k\norm{\hat{\Sigma}_n - \Sigma_n}_{\infty} = k\mathcal{D}_{n}^{\Sigma}.\label{eq:UniformNonSing}
\end{align}
Thus, for all $M\in\mathcal{M}_p(k)$,
\[
\Lambda_n(k) - k\mathcal{D}_{n}^{\Sigma} \le \norm{\Sigma_n(M)}_{op} - k\mathcal{D}_{n}^{\Sigma} \le \norm{\hat{\Sigma}_n(M)}_{op} \le \norm{\Sigma_n(M)}_{op} + k\mathcal{D}_{n}^{\Sigma}.
\]
Hence, for $k$ satisfying $k\mathcal{D}_{n}^{\Sigma} \le \Lambda_n(k)$,
\begin{align*}
\norm{\hat{\beta}_{n,M} - \beta_{n,M}}_2 &\le \frac{\norm{\hat{\Gamma}_n(M) - \Gamma_n(M)}_2 + \norm{[\hat{\Sigma}_n(M) - \Sigma_n(M)]\beta_{n,M}}_2}{\Lambda_n(k) - k\mathcal{D}_{n}^{\Sigma}}\\
&\le |M|^{1/2}\frac{\norm{\hat{\Gamma}_n(M) - \Gamma_n(M)}_{\infty} + \norm{[\hat{\Sigma}_n(M) - \Sigma_n(M)]\beta_{n,M}}_{\infty}}{\Lambda_n(k) - k\mathcal{D}_{n}^{\Sigma}}\\
&\le \frac{|M|^{1/2}\left(\mathcal{D}_{n}^{\Gamma} + \mathcal{D}_{n}^{\Sigma}\norm{\beta_{n,M}}_1\right)}{\Lambda_n(k) - k\mathcal{D}_{n}^{\Sigma}}.
\end{align*}
Now applying
\[
\norm{\hat{\beta}_{n,M} - \beta_{n,M}}_1 \le {|M|}^{1/2}\norm{\hat{\beta}_{n,M} - \beta_{n,M}}_2,
\]
the result follows.
%%%%%%%%%%%%%%%%%%%%%%%%%%%%%%%%%%%%%%%%%%%%%%%
%%%%%%%%%%%%%%%%%%%%%%%%%%%%%%%%%%%%%%%%%%%%%%%

%----------------------------------------------------------------

\section{Proof of Proposition \ref{lem:LebesgueMeasure}}\label{app:LebesgueMeasure}

For any fixed model $M$, the Lebesgue measure of the confidence region is given by
\begin{equation}\label{eq:Lebesgue}
\Leb(\hat{\mathcal{R}}_{M}^{\dagger}) = |\Sigma_n(M)|^{-1}\left(C_{n}^{\Gamma}(\alpha) + C_{n}^{\Sigma}(\alpha)\norm{\hat{\beta}_M}_1\right)^{|M|},
\end{equation}
which converges to zero as $n$ tends to infinity. Here for any matrix $A\in\mathbb{R}^{p\times p}$, $|A|$ denotes the determinant of $A$. This equality follows since the confidence region $\hat{\mathcal{R}}_{M}^{\dagger}$ can be written as
\[
\hat{\mathcal{R}}_{M}^{\dagger} = \left\{\left[\Sigma_n(M)\right]^{-1}(\theta + \hat{\beta}_M):\, \norm{\theta}_{\infty} \le \left(C_{n}^{\Gamma}(\alpha) + C_{n}^{\Sigma}(\alpha)\norm{\hat{\beta}_{n,M}}_1\right)\right\}.
\]
By inequality \eqref{eq:UniformNonSing}, for all $M\in\mathcal{M}_p(k)$
\[
|\Sigma_n(M)|^{-1} \le \left(\Lambda_n(k) - k\mathcal{D}_{n}^{\Sigma}\right)^{-|M|}.
\]
We know that $C_{n}^{\Gamma}(\alpha)$ and $C_{n}^{\Sigma}(\alpha)$ converge to zero at a rate depending on the tails of the joint distribution of $(X_i, Y_i)$. The result now follows from equation \eqref{eq:Lebesgue} and uniform consistency of $\hat{\beta}_{n,M}$ in the $\norm{\cdot}_1$-norm as shown in Lemma \ref{lem:UniformConsisL1} under \ref{eq:UniformConsis}.

To prove the second result, first note that from Lemma \ref{lem:RateD1nD2n},
\[
\max\{C_{n}^{\Gamma}(\alpha), C_{n}^{\Sigma}(\alpha)\} = O\left(\sqrt{\frac{\log p}{n}}\right),
\]
since the second term in the expectation bound in Lemma \ref{lem:RateD1nD2n} is of lower order than the first term under the assumption \eqref{eq:pNotIncrease} of Lemma \ref{lem:LebesgueMeasure}. The result is now proved if we prove that for all $M\in\mathcal{M}_p(k)$,
\begin{equation}\label{eq:BoundCoefficients}
\norm{\beta_{n,M}}_1^2 \le \frac{|M|}{\Lambda_n(k)}\left(\frac{1}{n}\sum_{i=1}^n \mathbb{E}\left[Y_i^2\right]\right).
\end{equation}
By definition of $\beta_{n,M}$ it follows that
\[
0 \le \frac{1}{n}\sum_{i=1}^n \mathbb{E}\left[Y_i^2\right] - \frac{1}{n}\sum_{i=1}^n \beta_{n,M}^{\top}\mathbb{E}\left[X_i(M)X_i^{\top}(M)\right]\beta_{n,M} = \frac{1}{n}\sum_{i=1}^n \mathbb{E}\left[\left(Y_i - X_i^{\top}(M)\beta_{n,M}\right)^2\right].
\]
Therefore, by definition of $\Lambda_n(k)$,
\[
\Lambda_n(k)\norm{\beta_{n,M}}_2^2 \le \left(\frac{1}{n}\sum_{i=1}^n \mathbb{E}\left[Y_i^2\right]\right).
\]
Now using the inequality $\norm{\beta_{n,M}}_1 \le \sqrt{|M|}\norm{\beta_{n,M}}_2$, inequality \eqref{eq:BoundCoefficients} follows.
%%%%%%%%%%%%%%%%%%%%%%%%%%%%%%%%%%%%%%%%%%%%%%
\section{Proof of~\texorpdfstring{\eqref{eq:SeparatedMaximum}}{blah}}\label{app:SeparatedMaximum}
\begin{prop}\label{prop:SeparatedMaximum}
Suppose there exists a constant $B_{n,k,p}$ and $\gamma > 0$, such that
\[
\sup_{M\in\mathcal{M}_p(k)}\max_{1\le j\le |M|}\norm{\frac{\sqrt{n}\left(\hat{\beta}_{n,M}(j) - \beta_{n,M}(j)\right)}{\sigma_{n,M}(j)}}_{\psi_{\gamma}} \le B_{n,k,p}.
\]
Then
\[
\max_{1\le \ell \le k}\frac{1}{\psi_{\gamma}^{-1}(\ell)}\max_{M\in\mathcal{M}_p(\ell)\cap\mathcal{M}^c_p(\ell - 1)}\frac{\left|\sqrt{n}\left(\hat{\beta}_{n,M}(j) - \beta_{n,M}(j)\right)/\sigma_{n,M}(j)\right|}{\psi_{\gamma}^{-1}\left((2ep/\ell)^{\ell}\right)} = O_p(1)
\]
\end{prop}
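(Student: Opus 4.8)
The plan is to read the left-hand side as a maximum of sub-Weibull random variables over a family that is stratified by model size, and to bound it by a union bound combined with the sub-Weibull tail estimate that the Orlicz-norm hypothesis supplies. Write $T_{M,j} := \sqrt{n}(\hat{\beta}_{n,M}(j) - \beta_{n,M}(j))/\sigma_{n,M}(j)$, and read the free index $j$ in the statement as carrying a maximum over $1\le j\le |M|$. The assumption $\norm{T_{M,j}}_{\psi_\gamma}\le B_{n,k,p}$ gives $\mathbb{E}[\exp((|T_{M,j}|/B_{n,k,p})^\gamma)]\le 2$, hence by Markov's inequality the pointwise tail bound $\mathbb{P}(|T_{M,j}|>t)\le 2\exp(-(t/B_{n,k,p})^\gamma)$ for each pair $(M,j)$. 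Recalling that $\psi_\gamma^{-1}(u) = (\log(1+u))^{1/\gamma}$, so that $(\psi_\gamma^{-1}(u))^\gamma = \log(1+u)$, the whole problem reduces to a combinatorial accounting of how many functionals $T_{M,j}$ sit in each size stratum.

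First I would stratify by exact model size. For each $1\le\ell\le k$, there are $\binom{p}{\ell}$ models of size $\ell$, each contributing $\ell$ coordinates, so the stratum $\mathcal{M}_p(\ell)\cap\mathcal{M}_p^c(\ell-1)$ carries $N_\ell := \ell\binom{p}{\ell}$ functionals. Using $\binom{p}{\ell}\le(ep/\ell)^\ell$ together with $\ell\le 2^\ell$ yields the convenient bound $N_\ell\le(2ep/\ell)^\ell =: P_\ell$, precisely the quantity appearing in the inner normalization. The crucial observation is that the two normalizing factors conspire multiplicatively: for a threshold of the form $R\,\psi_\gamma^{-1}(\ell)\,\psi_\gamma^{-1}(P_\ell)$, the exponent in the tail bound becomes $(R/B_{n,k,p})^\gamma(\psi_\gamma^{-1}(\ell))^\gamma(\psi_\gamma^{-1}(P_\ell))^\gamma = (R/B_{n,k,p})^\gamma\log(1+\ell)\log(1+P_\ell)$, i.e.\ the two logarithms are multiplied.

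Then I would union-bound within and across strata. Denoting by $Z$ the random quantity in the statement, for any $R>0$,
\[
\mathbb{P}(Z>R) \le \sum_{\ell=1}^k N_\ell\,\mathbb{P}\big(|T_{M,j}| > R\,\psi_\gamma^{-1}(\ell)\,\psi_\gamma^{-1}(P_\ell)\big) \le \sum_{\ell=1}^k 2P_\ell\,(1+P_\ell)^{-(R/B_{n,k,p})^\gamma\log(1+\ell)}.
\]
Writing $s := (R/B_{n,k,p})^\gamma$ and using $\log(1+P_\ell)\ge\ell$ (valid since $P_\ell\ge(2e)^\ell$ for $\ell\le p$) together with $\log(1+\ell)\ge\log 2$, each summand is at most $2\exp(-\tfrac{s}{2}\,\ell\log 2)$ once $s\log 2\ge 2$, so that $\mathbb{P}(Z>R)\le 2\cdot 2^{-s/2}/(1-2^{-s/2})$. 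This bound is uniform in $p$ and $k$ and tends to $0$ as $R\to\infty$ whenever $B_{n,k,p}$ stays bounded, which establishes $Z=O_p(1)$ (more precisely $O_p(B_{n,k,p})$, matching the Gaussian instance $\gamma=2$, $B_{n,k,p}=O(1)$ used for \eqref{eq:SeparatedMaximum} in the main text). Equivalently, one may simply invoke Proposition 4.3.1 of \cite{DeLaPena99}, of which this is a direct instance.

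The main obstacle is the bookkeeping in the second paragraph: the stratum-dependent scaling must be chosen so that the outer factor $\psi_\gamma^{-1}(\ell)$ and the inner factor $\psi_\gamma^{-1}(P_\ell)$ enter the exponent multiplicatively. Only then does the sum over $\ell$ collapse to a convergent geometric-type series whose bound is free of $p$ and $k$; the single global scaling $\sqrt{k\log(ep/k)}$ underlying \eqref{eq:FullSupremum} would control the overall maximum but could not deliver the sharper, size-adaptive normalization of \eqref{eq:SeparatedMaximum}. The remaining steps — the tail estimate, the counting $N_\ell\le P_\ell$, and the geometric summation — are routine once this alignment of the two $\psi_\gamma^{-1}$ factors is in place.
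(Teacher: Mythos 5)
Your proof is correct, but it follows a genuinely different route from the paper's. The paper's proof is a two-stage Orlicz-norm argument: it first invokes Proposition A.5 of \cite{KuchAbhi17} to bound the $\psi_\gamma$-norm of the within-stratum maximum by $C_\gamma\,\psi_\gamma^{-1}\bigl((ep/\ell)^\ell\bigr)B_{n,k,p}$, using $\bigl|\mathcal{M}_p(\ell)\setminus\mathcal{M}_p(\ell-1)\bigr|=\binom{p}{\ell}\le(ep/\ell)^\ell$, and then feeds the normalized variables $\xi_\ell$ into Proposition 4.3.1 of \cite{DeLaPena99} to handle the outer maximum over $\ell$. You instead work directly with tails: Markov's inequality gives the sub-Weibull tail bound, and a stratified union bound in which the two normalizers enter the exponent multiplicatively, as $\log(1+\ell)\log(1+P_\ell)$, collapses the sum over strata to a geometric series; your bounds $\log(1+P_\ell)\ge\ell$ and $\log(1+\ell)\ge\log 2$ are valid, and the resulting bound $2\cdot 2^{-s/2}/(1-2^{-s/2})$ is uniform in $p$ and $k$. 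Each approach buys something: the paper's is shorter and, by staying at the level of Orlicz norms, actually yields a $\psi_\gamma$-norm (hence quantitative tail) bound on the full weighted maximum, at the cost of resting on two cited maximal inequalities; yours is self-contained and elementary with explicit constants, and it makes explicit the coordinate counting $N_\ell=\ell\binom{p}{\ell}\le(2ep/\ell)^\ell$ (via $\ell\le 2^\ell$) that justifies the factor $2$ in the proposition's normalization, a point the paper's proof glosses over since it counts only models and not the $\ell$ coordinates per model. Two small remarks: (i) both arguments really prove $O_p(B_{n,k,p})$, so the stated $O_p(1)$ implicitly takes $B_{n,k,p}=O(1)$, which you flag and which holds in the Gaussian application of \eqref{eq:SeparatedMaximum}; (ii) your closing claim that the result is ``a direct instance'' of Proposition 4.3.1 of \cite{DeLaPena99} is a bit loose --- that proposition only handles the outer maximum over $\ell$ and still requires the within-stratum Orlicz bound as input (precisely the paper's first step), whereas your union bound treats both levels at once.
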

\begin{proof}
From the proof of Proposition A.5 of \cite{KuchAbhi17}, we get
\[
\norm{\max_{M\in\mathcal{M}_p(\ell)\cap\mathcal{M}^c_p(\ell - 1)}{\left|\frac{\sqrt{n}\left(\hat{\beta}_{n,M}(j) - \beta_{n,M}(j)\right)}{\sigma_{n,M}(j)}\right|}}_{\psi_{\gamma}} \le \psi_{\gamma}^{-1}\left((ep/\ell)^{\ell}\right)C_{\gamma}B_{n,p,k},
\]
for some constant $C_{\gamma}$ depending only on $\gamma$. Here the fact
\[
\left|\mathcal{M}_p(\ell)\setminus\mathcal{M}_p(\ell - 1)\right| = \binom{p}{\ell} \le \left(\frac{ep}{\ell}\right)^{\ell},
\]
is used. Now take
\[
\xi_{\ell} := \frac{1}{\psi_{\gamma}^{-1}\left((ep/\ell)^{\ell}\right)}\max_{M\in\mathcal{M}_p(\ell)\cap\mathcal{M}^c_p(\ell - 1)}{\left|\frac{\sqrt{n}\left(\hat{\beta}_{n,M}(j) - \beta_{n,M}(j)\right)}{\sigma_{n,M}(j)}\right|},
\]
and apply Proposition 4.3.1 of \cite{DeLaPena99}, to get the result. Also, see Proposition A.7 of \cite{KuchAbhi17} for an alternative proof to Proposition 4.3.1 of \cite{DeLaPena99}. If $\gamma = 2$, then $\psi_\gamma$ corresponds to sub-Gaussian random variables and
\[
\psi_2^{-1}(x) = \sqrt{\log(1 + x)}.
\]
This proves~\eqref{eq:SeparatedMaximum}.
\end{proof}
%%%%%%%%%%%%%%%%%%%%%%%%%%%%%%%%%%%%%%%%%%%%%%

%----------------------------------------------------------------

\section{Proof of Theorem \ref{thm:PoSILasso}}\label{app:LassoRegions}

Only the proof of \eqref{eq:LassoFinite} and \eqref{eq:LassoAsym} is provided and the steps to prove \eqref{eq:SqrtLassoFinite} and \eqref{eq:SqrtLassoAsym} are sketched since the proof is similar.

It is easy to verify that for any $M\subseteq\mathcal{M}_p(p)$ and $\theta\in\mathbb{R}^{|M|}$
\begin{equation}\label{eq:FundamentalIneq}
\left|\theta^{\top}\hat{\Sigma}_n(M)\theta - 2\theta^{\top}\hat{\Gamma}_n(M) - \theta^{\top}\Sigma_n(M)\theta + 2\theta^{\top}\Gamma_n(M)\right| \le \norm{\theta}_1^2\mathcal{D}_{n}^{\Sigma} + 2\norm{\theta}_1\mathcal{D}_{n}^{\Gamma}.
\end{equation}
Therefore, for every $M\in\mathcal{M}_p(p)$,
\begin{align*}
\beta_{n,M}&\hat{\Sigma}_n(M)\beta_{n,M} -2\beta_{n,M}^{\top}\hat{\Gamma}_n(M)\\ &\le \beta_{n,M}\Sigma_n(M)\beta_{n,M} -2\beta_{n,M}^{\top}\Gamma_n(M) + 2\mathcal{D}_{n}^{\Gamma}\norm{\beta_{n,M}}_1 + \mathcal{D}_{n}^{\Sigma}\norm{\beta_{n,M}}_1^2\\
&\le \hat{\beta}_{n,M}\Sigma_n(M)\hat{\beta}_{n,M} -2\hat{\beta}_{n,M}^{\top}\Gamma_n(M) +  2\mathcal{D}_{n}^{\Gamma}\norm{\beta_{n,M}}_1 + \mathcal{D}_{n}^{\Sigma}\norm{\beta_{n,M}}_1^2\\
&\le \hat{\beta}_{n,M}\hat{\Sigma}_n(M)\hat{\beta}_{n,M} -2\hat{\beta}_{n,M}^{\top}\Gamma_n(M) + 2\mathcal{D}_{n}^{\Gamma}\left[\norm{\hat{\beta}_{n,M}}_1 + \norm{\beta_{n,M}}_1\right]\\&\qquad+ \mathcal{D}_{n}^{\Sigma}\left[\norm{\hat{\beta}_{n,M}}_1^2 + \norm{\beta_{n,M}}_1^2\right].
\end{align*}
Here the first inequality follows from inequality \eqref{eq:FundamentalIneq} with $\theta = {\beta}_{n,M}$, the second inequality follows from the definition of $\beta_{n,M}$ (see Equation \eqref{eq:ModifiedObjective}) and the third inequality follows from inequality \eqref{eq:FundamentalIneq} with $\theta=\hat{\beta}_{n,M}$. Adding the sample average of $\{Y_i^2:\,1\le i\le n\}$ on both sides, we get for all $M\in\mathcal{M}_p(p)$,
\begin{equation}\label{eq:MainIneqLasso}
\hat{R}_n\left(\beta_{n,M}; M\right) \le \hat{R}_n\left(\hat{\beta}_{n,M}; M\right) + 2\mathcal{D}_{n}^{\Gamma}\left[\norm{\hat{\beta}_{n,M}}_1 + \norm{\beta_{n,M}}_1\right] + \mathcal{D}_{n}^{\Sigma}\left[\norm{\hat{\beta}_{n,M}}_1^2 + \norm{\beta_{n,M}}_1^2\right].
\end{equation}
This implies the first result \eqref{eq:LassoFinite}. To prove the second result \eqref{eq:LassoAsym}, note that
\begin{align*}
%\left|\vphantom{\frac{2\mathcal{D}_{n}\norm{\hat{\beta}_M}_1 + \mathcal{D}_{n}\norm{\hat{\beta}_M}_1^2}{2\mathcal{D}_{n}\norm{\beta_{0,M}}_1 + \mathcal{D}_{n}\norm{\beta_{0,M}}_1^2}}\right.&\left.\frac{\mathcal{D}_{n}^2 + 2\mathcal{D}_{n}\mathcal{D}_{n}\norm{\hat\beta_{M}}_1 + \mathcal{D}_{n}^2\norm{\hat\beta_M}_1^2}{\mathcal{D}_{n}^2 + 2\mathcal{D}_{n}\mathcal{D}_{n}\norm{\beta_{0,M}}_1 + \mathcal{D}_{n}^2\norm{\beta_{0,M}}_1^2} - 1\right|\\ &= 
\left|\left(\frac{\mathcal{D}_{n}^{\Gamma} + \mathcal{D}_{n}^{\Sigma}\norm{\hat\beta_M}_1}{\mathcal{D}_{n}^{\Gamma} + \mathcal{D}_{n}^{\Sigma}\norm{\beta_{0,M}}_1}\right)^2 - 1\right|\le
%\\
%&\le 
\left|\frac{\mathcal{D}_{n}^{\Gamma} + \mathcal{D}_{n}^{\Sigma}\norm{\hat\beta_M}_1}{\mathcal{D}_{n}^{\Gamma} + \mathcal{D}_{n}^{\Sigma}\norm{\beta_{0,M}}_1} - 1\right|^2 + 2\left|\frac{\mathcal{D}_{n}^{\Gamma} + \mathcal{D}_{n}^{\Sigma}\norm{\hat\beta_M}_1}{\mathcal{D}_{n}^{\Gamma} + \mathcal{D}_{n}^{\Sigma}\norm{\beta_{0,M}}_1} - 1\right|,
\end{align*}
which converges to zero under assumption \ref{eq:UniformConsis}, following the proof of Theorem \ref{thm:PoSIX}. This implies that the error
\[
\left[2\mathcal{D}_{n}^{\Gamma}\norm{\hat\beta_{n,M}}_1 + \mathcal{D}_{n}^{\Sigma}\norm{\hat\beta_{n,M}}_1^2\right] - \left[2\mathcal{D}_{n}^{\Gamma}\norm{\beta_{n,M}}_1 + \mathcal{D}_{n}^{\Sigma}\norm{\beta_{n,M}}_1^2\right],
\]
is of smaller order than each of the terms uniformly in $M\in\mathcal{M}_p(k)$. The second result \eqref{eq:LassoAsym} then follows trivially by substituting the estimated parameters for the targets in inequality \eqref{eq:MainIneqLasso} and using the definition of $(C_{n}^{\Gamma}(\alpha), C_{n}^{\Sigma}(\alpha))$.

To prove the results with square-root lasso based regions, note that from inequality \eqref{eq:MainIneqLasso}
\begin{align*}
\hat{R}_n^{1/2}(\beta_{n,M}; M) &\le \hat{R}_n^{1/2}(\hat{\beta}_{n,M}; M) + \max\{\mathcal{D}_{n}^{\Gamma}, \mathcal{D}_{n}^{\Sigma}\}^{1/2}\left(1 + \norm{\hat{\beta}_{n,M}}_1\right)\\
&\quad+ \max\{\mathcal{D}_{n}^{\Gamma}, \mathcal{D}_{n}^{\Sigma}\}^{1/2}\left(1 + \norm{{\beta}_{n,M}}_1\right).
\end{align*}
%%%%%%%%%%%%%%%%%%%%%%%%%%%%%%%%%%%%%%%%%%%%%%%
%%%%%%%%%%%%%%%%%%%%%%%%%%%%%%%%%%%%%%%%%%%%%%%

%----------------------------------------------------------------

\section{High-dimensional CLT and Bootstrap Consistency}\label{app:HDCLT}

Suppose $W_i, 1\le i\le n$ are independent random vectors in $\mathbb{R}^q$ with finite second moment. Let $G_i, 1\le i\le n$ be independent Gaussian random vectors in $\mathbb{R}^q$ with mean zero satisfying
\[
\mathbb{E}\left[G_iG_i^{\top}\right] = \mathbb{E}\left[W_iW_i^{\top}\right]\quad\mbox{for all}\quad 1\le i\le n.
\]
Set
\[
S_{n}^W := \frac{1}{\sqrt{n}}\sum_{i=1}^n \left\{W_i - \mathbb{E}\left[W_i\right]\right\}\quad\mbox{and}\quad S_n^G := \frac{1}{\sqrt{n}}\sum_{i=1}^n G_i.
\]
Before deriving the exact rate under the assumption \eqref{eq:MarginalPhi} of Lemma \ref{lem:RateD1nD2n}, we prove that a central limit theorem for $S_n^{W}$ implies a CLT for $(\mathcal{D}_{n}^{\Gamma}, \mathcal{D}_{n}^{\Sigma})$. Observe that for any $t_1, t_2 \in \mathbb{R}^+\cup\{0\}$,
\begin{equation}\label{eq:RectangleRepresentation}
\begin{split}
\left\{\mathcal{D}_{n}^{\Gamma} \le t_1,\right.&\left.\mathcal{D}_{n}^{\Sigma} \le t_2\right\} \\
&= \left\{-t_1 \le \frac{1}{n}\sum_{i=1}^n \left\{X_i(j)Y_i - \mathbb{E}\left[X_i(j)Y_i\right]\right\} \le t_1\mbox{ for all }1\le j\le p\right\}\bigcap\\
&\quad\left\{-t_2 \le \frac{1}{n}\sum_{i=1}^n \left\{X_i(l)X_i(m) - \mathbb{E}\left[X_i(l)X_i(m)\right]\right\} \le t_2\mbox{ for all }1\le l\le m\le p\right\}.
\end{split}
\end{equation}
The right hand side here is a rectangle in terms of the vector $S_n^W$ with vectors $W_i$ containing
\begin{equation}\label{eq:DefinitionWi}
\left(X_i(j)Y_i, 1\le j\le p;\; X_i(l)X_i(m), 1\le l \le m\le p\right).
\end{equation}
Note that $W_i$'s are vectors in $\mathbb{R}^{q}$ with
\[
q = 2p + \frac{p(p-1)}{2}.
\]
Let $\mathcal{A}^r$ denote the set of all rectangles in $\mathbb{R}^q$, that is, $\mathcal{A}^r$ consists of all sets $A$ of the form
\[
A = \{z\in\mathbb{R}^q:\,a(j) \le z(j) \le b(j)\,\mbox{ for all }\,1\le j\le q\},
\]
for some vectors $a, b\in\mathbb{R}^q$. Define
\begin{equation}\label{eq:MaxThirdMoment}
L_{n,q} := \max_{1\le j\le q}\,\frac{1}{n}\sum_{i=1}^n \mathbb{E}\left[\left|W_i(j) - \mathbb{E}[W_i(j)]\right|^3\right].
\end{equation}
Finally, set for any class $\mathcal{A}$ of (Borel) sets in $\mathbb{R}^q$,
\[
\rho_n\left(\mathcal{A}\right) := \sup_{A\in\mathcal{A}}\left|\mathbb{P}\left(S_n^W \in A\right) - \mathbb{P}\left(S_n^G \in A\right)\right|.
\]
The following theorem proved in Section 6 of \cite{KuchAbhi17} provides a central limit theorem for $S_n^W$ over all rectangles. The proof there is based on Theorem 2.1 of \cite{Chern17}.
\begin{thm}\label{thm:MarginalPhiHDBEBound}
Suppose $W_1, \ldots, W_n$ are independent mean zero random vectors in $\mathbb{R}^q$ satisfying for some $\gamma, B, K_{n,q} > 0$,
%\begin{align}
%\min_{1\le j\le q}\,\frac{1}{n}\sum_{i=1}^n \mathbb{E}\left[W_i^2(j)\right] &\ge B,\label{eq:VarianceLowerBound}\\
%\max_{1\le i\le n}\max_{1\le j\le q}\,\norm{W_i(j)}_{\psi_{\beta}} &\le K_{n,q}, \label{eq:MarginalPhiBeta}
%\end{align}
\begin{align}
&&\min_{1\le j\le q}\,\frac{1}{n}\sum_{i=1}^n \mbox{Var}\left[W_i^2(j)\right] \ge B && \mbox{and} %\label{eq:VarianceLowerBound}%\\
&&\max_{1\le i\le n}\max_{1\le j\le q}\,\norm{W_i(j)}_{\psi_{\gamma}} &\le K_{n,q}. \label{eq:MarginalPhiBeta}
\end{align}
Assume further that for some constant $K_2 > 0$ (depending only $B$),
\begin{equation}\label{eq:SampleSizeDimension}
\frac{1}{8K_2K_{n,q}}\left(\frac{n L_{n,q}}{\log q}\right)^{1/3} \ge \max\{1,2^{1/\gamma - 1}\} \left\{ (\log q)^{1/\gamma} +  \left(6/\gamma\right)^{1/\gamma} + 1  \right\}.
\end{equation}\noeqref{eq:SampleSizeDimension}
Then there exist constants $K_1 > 0$ depending only on $B$, and $C_{\gamma, B} > 0$ depending only on $B, \gamma$ such that
\begin{align*}
\rho_n\left(\mathcal{A}^{re}\right) &\le K_1\left(\frac{L_{n,q}^2\log^7q}{n}\right)^{1/6} + C_{\gamma, B}\frac{K_{n,q}^6\log q}{n}.
\end{align*}
\end{thm}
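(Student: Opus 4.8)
The engine of the proof is the Gaussian approximation bound over hyperrectangles of \cite{Chern17} (their Theorem 2.1), which controls $\rho_n(\mathcal{A}^{re})$ for independent centered summands whose coordinates satisfy a variance lower bound together with moment and envelope conditions. Since the hypothesis \eqref{eq:MarginalPhiBeta} supplies only sub-Weibull tails of order $\gamma$ rather than a hard envelope, the plan is to reduce to the bounded-envelope regime by truncation, apply the \cite{Chern17} bound to the truncated sum, and then account for the truncation error and the resulting covariance mismatch using Gaussian anti-concentration and a Gaussian comparison inequality.

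First I would fix a truncation level $u = u_n$ and decompose each coordinate as $W_i(j) = \tilde{W}_i(j) + R_i(j)$, where $\tilde{W}_i(j)$ is the version truncated at level $u$ and recentered to have mean zero (so that $|\tilde{W}_i(j)| \le 2u$), while $R_i(j)$ collects the recentered tail. The sub-Weibull maximal inequality gives $\norm{\max_{1\le i\le n}\max_{1\le j\le q}|W_i(j)|}_{\psi_\gamma} \lesssim K_{n,q}\{(\log(nq))^{1/\gamma} + (6/\gamma)^{1/\gamma}\}$, so choosing $u \asymp K_{n,q}(\log q)^{1/\gamma}$ makes the event $\{\max_{i,j}|W_i(j)| > u\}$ and the associated truncation bias controllably small. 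This is precisely the regime in which condition \eqref{eq:SampleSizeDimension} asserts that $u \lesssim (nL_{n,q}/\log q)^{1/3}$, so that constraint is what licenses the whole construction.

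Next I would apply Theorem 2.1 of \cite{Chern17} to the truncated, recentered vectors $\tilde{W}_i$. Because truncation does not increase absolute central moments, the third-moment proxy of $\tilde{W}_i$ is still bounded by $L_{n,q}$, and the envelope $2u$ together with \eqref{eq:SampleSizeDimension} places the problem in the range where their bound reads $\rho_n^{\mathrm{trunc}}(\mathcal{A}^{re}) \lesssim (L_{n,q}^2 \log^7 q / n)^{1/6}$; the variance lower bound $B$ in \eqref{eq:MarginalPhiBeta} is what makes the resulting constant $K_1$ depend only on $B$. One then has to pass from the Gaussian with the truncated covariance to $S_n^G$, whose covariance matches the full $W_i$, which I would control with a Gaussian comparison (Slepian/Stein-type) lemma, bounding the covariance discrepancy by the second moments of the discarded tails $R_i$, rendered negligible at the chosen $u$ by the sub-Weibull assumption.

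The three error contributions — the \cite{Chern17} bound for $\tilde{W}$, the truncation and tail term, and the Gaussian-to-Gaussian covariance comparison — are then assembled using Nazarov's anti-concentration inequality for rectangle probabilities of a nondegenerate Gaussian vector (available since the average variance is bounded below by $B$), which converts each additive perturbation of the summands into an $O(\cdot)\sqrt{\log q}$ change in rectangle probabilities. The main obstacle I anticipate is the bookkeeping around the truncation level: $u$ must be large enough that the tail bias and covariance mismatch are dominated, yet small enough that the envelope does not spoil the clean $(L_{n,q}^2\log^7 q/n)^{1/6}$ rate. Optimizing this trade-off and tracking the $\psi_\gamma$-maximal-inequality constants is what produces the second, lower-order term $C_{\gamma,B}K_{n,q}^6\log q / n$ and pins down the dependence of the constants on $\gamma$ and $B$.
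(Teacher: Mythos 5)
Your proposal is correct in substance and runs on the same engine as the paper's proof: the paper gives no self-contained argument for this theorem but defers to Section 6 of \cite{KuchAbhi17}, whose proof is an application of Theorem 2.1 of \cite{Chern17}. The difference is where the truncation lives. Theorem 2.1 of \cite{Chern17} is already stated for \emph{unbounded} independent summands: it carries a free parameter $\phi_n$ and a truncated-tail functional $M_n(\phi_n)$ (truncation threshold of order $\sqrt{n}/(\phi_n\log q)$), so the cited proof never truncates the $W_i$ themselves; it simply chooses $\phi_n$ to balance the leading term to $(L_{n,q}^2\log^7q/n)^{1/6}$ and then bounds $M_n(\phi_n)$ and its Gaussian companion using sub-Weibull moment bounds --- which is exactly where \eqref{eq:SampleSizeDimension} enters (the threshold must dominate the $\psi_\gamma$-scale $K_{n,q}\{(\log q)^{1/\gamma}+(6/\gamma)^{1/\gamma}+1\}$) and where the lower-order term $C_{\gamma,B}K_{n,q}^6\log q/n$ is produced. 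Your plan instead truncates and recenters by hand, applies the bounded-case bound to $\tilde{W}_i$, and reassembles via a Gaussian covariance comparison plus Nazarov anti-concentration; that is workable, but it re-derives machinery already packaged inside the cited theorem.

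One subtlety in your version needs repair. You invoke the maximal inequality at scale $(\log(nq))^{1/\gamma}$ but then set $u\asymp K_{n,q}(\log q)^{1/\gamma}$; controlling the event $\{\max_{i,j}|W_i(j)|>u\}$ by a union bound over all $n$ observations actually requires $u\gtrsim K_{n,q}(\log (nq))^{1/\gamma}$, whereas \eqref{eq:SampleSizeDimension} only guarantees headroom of order $(\log q)^{1/\gamma}$ below the threshold $(nL_{n,q}/\log q)^{1/3}/(8K_2K_{n,q})$. This is not fatal: either route the tail contribution through per-observation expectations rather than a sample-wide event (this is precisely what the $M_n(\phi)$ formulation of \cite{Chern17} does, and is why only $\log q$ appears in \eqref{eq:SampleSizeDimension}), or observe that in the regime where $(\log(nq))^{1/\gamma}$ exceeds that threshold the stated bound is no longer $o(1)$ and the claim can be made trivial by enlarging constants. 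But this is exactly the bookkeeping your last paragraph flags as the obstacle, and it is the part the paper's citation route avoids entirely.
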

Based on~\eqref{eq:RectangleRepresentation}, it is clear that Theorem~\ref{thm:MarginalPhiHDBEBound} implies a CLT for $(\mathcal{D}_n^{\Gamma}, \mathcal{D}_n^{\Sigma})$. This does not require the observations to be identically distributed or equal expectations for the $W_i$ vectors defined in~\eqref{eq:DefinitionWi}.
%%%%%%%%%%%%%%%%%%%%%%%%%%%%%%%%%%%%%%%%%%%%%%%%%
%%%%%%%%%%%%%%%%%%%%%%%%%%%%%%%%%%%%%%%%%%%%%%%%%

%----------------------------------------------------------------

\subsection{Bootstrap Consistency}

In this sub-section, we consider the consistency of multiplier bootstrap based on Section 4.1 of \cite{Chern17}. It is also possible to consider the empirical bootstrap in high-dimensions and prove its consistency based on the proof of Proposition 4.3 of \cite{Chern17}. We do not prove it here as the proof techniques are the same.

Let $e_1, e_2, \ldots, e_n$ be a sequence of independent standard normal random variables independent of $\mathcal{W}_n := \{W_1, \ldots, W_n\}$. Set
\[
\bar{W}_n := \frac{1}{n}\sum_{i=1}^n W_i \in\mathbb{R}^q,
\]
and consider the normalized sum
\[
S_n^{eW} := \frac{1}{\sqrt{n}}\sum_{i=1}^n e_i\left(W_i - \bar{W}_n\right).
\]
Note that
\[
S_n^{eW}\big|\mathcal{W}_n\sim N\left(0, \frac{1}{n}\sum_{i=1}^n \left(W_i - \bar{W}_n\right)\left(W_i - \bar{W}_n\right)^{\top}\right)\in\mathbb{R}^q.
\]
To prove consistency of multiplier bootstrap, we bound a quantity similar to $\rho_n\left(\mathcal{A}^{re}\right)$, defined as
\begin{equation}\label{eq:MultiplierBEBound}
\rho_n^{\mbox{MB}}\left(\mathcal{A}^{re}\right) := \sup_{A\in\mathcal{A}^{re}}\left|\mathbb{P}\left(S_n^{eW} \in A\big|\mathcal{W}_n\right) - \mathbb{P}\left(S_n^{G\star} \in A\right)\right|,
\end{equation}
where
\[
S_n^{G\star} ~\sim~ N\left(0, \frac{1}{n}\sum_{i=1}^n \mathbb{E}\left[\left(W_i - \bar{\mu}_n\right)\left(W_i - \bar{\mu}_n\right)^{\top}\right]\right),\quad\mbox{with}\quad \bar{\mu}_n := \mathbb{E}\left[\bar{W}_n\right] = \frac{1}{n}\sum_{i=1}^n \mathbb{E}\left[W_i\right].
\]
Note that $\mbox{Var}\left(S_n^W\right) \neq \mbox{Var}\left(S_n^{G\star}\right)$ unless $\mathbb{E}[W_1] = \mathbb{E}[W_2] = \cdots = \mathbb{E}[W_n]$. Define
\begin{align*}
\Delta_{n,q} &:= \norm{\frac{1}{n}\sum_{i=1}^n \left(W_i - \bar{W}_n\right)\left(W_i - \bar{W}_n\right)^{\top} - \frac{1}{n}\sum_{i=1}^n \mathbb{E}\left[\left(W_i - \bar{\mu}_n\right)\left(W_i - \bar{\mu}_n\right)^{\top}\right]}_{\infty}.%\\
% &= \norm{\frac{1}{n}\sum_{i=1}^n \left\{W_iW_i^{\top} - \mathbb{E}\left[W_iW_i^{\top}\right]\right\} - \bar{W}_n\bar{W}_n^{\top}}_{\infty}.
\end{align*}
Based on Theorem 4.1 and Remark 4.1 of \cite{Chern17}, we prove the following theorem under assumption \eqref{eq:MarginalPhiBeta}.
\begin{thm}\label{thm:BootstrapConsistency}
If $W_i, 1\le i\le n$ are independent mean zero random vectors, then under assumption \eqref{eq:MarginalPhiBeta},
\[
\mathbb{E}\left[\rho_n^{\mbox{MB}}\left(\mathcal{A}^{re}\right)\right] \le C\log^{2/3}q\left[A_{n,q}^{1/3}\left(\frac{\log q}{n}\right)^{1/6} + K_{n,q}^{2/3}\frac{(\log q\log n)^{\frac{2}{3\gamma}}}{n^{1/3}}\right],
\]
for some constant $C$ depending only on $\gamma, B$. Here
\[
A_{n,q} := \max_{1\le l\le m\le q}\frac{1}{n}\sum_{i=1}^n \mbox{Var}\left(W_i(l)W_i(m)\right).
\]
\end{thm}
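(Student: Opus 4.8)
The plan is to use the fact that, conditionally on the data $\mathcal{W}_n$, the bootstrap statistic $S_n^{eW}$ is \emph{exactly} a centered Gaussian vector with covariance equal to the sample covariance $\widehat{V}_n := \tfrac{1}{n}\sum_{i=1}^n (W_i - \bar{W}_n)(W_i - \bar{W}_n)^{\top}$, while the reference variable $S_n^{G\star}$ is centered Gaussian with covariance $V_n^{\star} := \tfrac{1}{n}\sum_{i=1}^n \mathbb{E}[(W_i - \bar{\mu}_n)(W_i - \bar{\mu}_n)^{\top}]$. Hence $\rho_n^{\mbox{MB}}(\mathcal{A}^{re})$ is nothing more than the discrepancy between two centered Gaussian laws over the class of rectangles, and the only quantity separating them is $\|\widehat{V}_n - V_n^{\star}\|_{\infty} = \Delta_{n,q}$. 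The first step is therefore to invoke the Gaussian-to-Gaussian comparison over rectangles from \cite{Chern17} (the comparison underlying their Theorem~4.1 together with Remark~4.1), applied conditionally on $\mathcal{W}_n$, yielding the pointwise inequality $\rho_n^{\mbox{MB}}(\mathcal{A}^{re}) \le C\,\Delta_{n,q}^{1/3}(\log q)^{2/3}$.

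The second step removes the conditioning by taking expectations and applying Jensen's inequality to the concave map $x \mapsto x^{1/3}$, giving $\mathbb{E}[\rho_n^{\mbox{MB}}(\mathcal{A}^{re})] \le C(\log q)^{2/3}\big(\mathbb{E}[\Delta_{n,q}]\big)^{1/3}$. It then remains to control $\mathbb{E}[\Delta_{n,q}]$. I would use the identities $\widehat{V}_n = \tfrac{1}{n}\sum_i W_iW_i^{\top} - \bar{W}_n\bar{W}_n^{\top}$ and $V_n^{\star} = \tfrac{1}{n}\sum_i \mathbb{E}[W_iW_i^{\top}] - \bar{\mu}_n\bar{\mu}_n^{\top}$, and split $\Delta_{n,q}$ by the triangle inequality into a dominant second-moment term $\|\tfrac{1}{n}\sum_i W_iW_i^{\top} - \tfrac{1}{n}\sum_i \mathbb{E}[W_iW_i^{\top}]\|_{\infty}$ and a centering term $\|\bar{W}_n\bar{W}_n^{\top} - \bar{\mu}_n\bar{\mu}_n^{\top}\|_{\infty}$.

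The main term is exactly of the type bounded in Lemma~\ref{lem:RateD1nD2n} (an $\|\cdot\|_{\infty}$ deviation of an average of coordinatewise products), so the same sub-Weibull maximal inequality underlying that lemma (Theorem~4.1 of \cite{KuchAbhi17}, applied to $W_i$ in place of $Z_i$) controls it in terms of the variance proxy $A_{n,q}$ and the sub-Weibull parameter $K_{n,q}$, producing a rate of the order $\sqrt{A_{n,q}\log q / n} + K_{n,q}^2 (\log q \log n)^{2/\gamma}/n$. The centering term is of strictly smaller order: a coordinatewise maximal inequality gives $\|\bar{W}_n - \bar{\mu}_n\|_{\infty} = O_p(\sqrt{\log q/n})$ up to the sub-Weibull correction, so its quadratic contribution is absorbed. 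Substituting the resulting bound on $\mathbb{E}[\Delta_{n,q}]$ into the Jensen estimate and distributing the cube root across the two summands reproduces the displayed bound.

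I expect the main obstacle to be the careful conditional application of the Gaussian comparison: the comparison must be made against the \emph{pseudo-target} $S_n^{G\star}$, whose covariance is centered at $\bar{\mu}_n$ rather than at the individual means, precisely because $\widehat{V}_n$ consistently estimates $V_n^{\star}$ but \emph{not} the true variance of $S_n^{W}$ under non-identical distributions. Keeping this distinction explicit is what makes the statement a bound on $\rho_n^{\mbox{MB}}$ rather than a consistency claim, and it is exactly the gap $V_n^{\star} - \operatorname{Var}(S_n^{W})$ (the between-means dispersion) that later leaves only conservativeness, not consistency, when $S_n^{G\star}$ is compared with $S_n^{W}$ via Theorem~\ref{thm:MarginalPhiHDBEBound}.
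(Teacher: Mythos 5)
Your proposal is correct and takes essentially the same route as the paper: the paper likewise applies the conditional Gaussian-to-Gaussian comparison over rectangles (Remark 4.1 of \cite{Chern17}) to obtain $\rho_n^{\mbox{MB}}\left(\mathcal{A}^{re}\right) \le C\Delta_{n,q}^{1/3}\log^{2/3}q$, and then reduces the claim to the moment bound $\mathbb{E}\left[\Delta_{n,q}^{1/3}\right] \le M_{\gamma}\left[A_{n,q}\sqrt{\log q/n} + K_{n,q}^2(\log q\log n)^{2/\gamma}n^{-1}\right]^{1/3}$, exactly as you do via Jensen. The only difference is that where you unpack the bound on $\mathbb{E}\left[\Delta_{n,q}\right]$ by hand (splitting off the second-moment term and the centering term, which is genuinely quadratic and negligible here thanks to the mean-zero assumption, and invoking the sub-Weibull maximal inequality behind Lemma \ref{lem:RateD1nD2n}), the paper obtains that bound by citing Theorem 4.2 of \cite{KuchAbhi17} as a black box.
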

\begin{proof}
As proved in Remark 4.1 of \cite{Chern17}, we have
\[
\rho_n^{\mbox{MB}}\left(\mathcal{A}^{re}\right) \le C\Delta_{n,q}^{1/3}\log^{2/3}q.
\]
So, to prove the result, all we need is to prove
\[
\mathbb{E}\left[\Delta_{n,q}^{1/3}\right] \le M_{\gamma}\left[A_{n,q}\sqrt{\frac{\log q}{n}} + K_{n,q}^2(\log q\log n)^{2/\gamma}n^{-1}\right]^{1/3},
\]
for some constant $M_{\gamma}$. This follows from Theorem 4.2 of \cite{KuchAbhi17}.
\end{proof}
\begin{rem}{ (Inconsistency under unknown unequal means)}\label{rem:Inconsistency}
%%%%%%%%%%%%% AB: Impermissible reference to another chapter?
Since $\mbox{Var}(S_n^W)$ and $\mbox{Var}\left(S_n^{G\star}\right)$ are not equal (in general), Theorem~\ref{thm:BootstrapConsistency} does \emph{not} prove that
\[
\sup_{A\in\mathcal{A}^{re}}\left|\mathbb{P}\left(S_n^{eW} \in A\big|\mathcal{W}_n\right) - \mathbb{P}\left(S_n^{G} \in A\right)\right| \to 0.
\]
It was proved in \cite{Chap2:Kuch18} that variance of an average of non-identically distributed random variables cannot be consistently estimated if the expectations are unknown and the same comment applies to the high-dimensional multiplier bootstrap. When $\mathbb{E}\left[W_i\right]$ are not all the same for all $1\le i\le n$, then the variance of $S_n^{W}$ cannot be consistently estimated and so the distribution of $S_n^W$ cannot be estimated consistently using bootstrap. However, Theorem~\ref{thm:BootstrapConsistency} implies conservative inference. Observe that
\[
\mbox{Var}\left(S_n^W\right) = \frac{1}{n}\sum_{i=1}^n \mbox{Var}\left(W_i\right) \preceq \frac{1}{n}\sum_{i=1}^n \mathbb{E}\left[\left(W_i - \mathbb{E}\left[\bar{W}_n\right]\right)\left(W_i - \mathbb{E}\left[\bar{W}_n\right]\right)^{\top}\right].
\]
Hence by Anderson's Lemma (Corollary 3 of \cite{Anderson55}), for all $A\in\mathcal{A}^{sre},$
\[
\mathbb{P}\left(S_n^{G\star} \in A\right) \le \mathbb{P}\left(S_n^G \in A\right).
\]
Here $\mathcal{A}^{sre}$ represents the set of all rectangles that are symmetric around zero. Thus, we get that
\[
\liminf_{n\to\infty}\inf_{A\in\mathcal{A}^{sre}}\left( \mathbb{P}\left(S_n^{G} \in A\right) - \mathbb{P}\left(S_n^{eW} \in A\big|\mathcal{W}_n\right)\right) \ge 0.
\]
Observe that the sets in \eqref{eq:RectangleRepresentation} are centrally convex symmetric sets and so, Anderson's Lemma applies. Therefore, the multiplier bootstrap provides an asymptotically conservative inference for $(\mathcal{D}_n^{\Gamma}, \mathcal{D}_n^{\Sigma})$, in general.
\end{rem}
%%%%%%%%%%%%%%%%%%%%%%%%%%%%%%%%%%%%%%%%%%%%%%%
%%%%%%%%%%%%%%%%%%%%%%%%%%%%%%%%%%%%%%%%%%%%%%%
%[Rate Bounds under Dependence]
\section{Rate Bounds on $\mathcal{D}_n^{\Gamma}$ and $\mathcal{D}_n^{\Sigma}$ under Dependence}\label{app:Dependent}
In this section, we derive rate of convergence of $\|\hat{\Omega}_n - \Omega_n\|_{\infty}$ under dependence. We first describe some classical notions of dependence that include well-known dependent processes as special cases. The description is essentially taken from \cite{Potscher97}. Let $\{\xi_t:\,t\in\mathbb{Z}\}$ be a stochastic process on some measure space. Let $\mathcal{F}_{m,n}$ (for $m < n$) be the $\sigma$-field generated by $\{\xi_i:\,m \le i\le n\}$ with possibility of $m = -\infty$ and $n = \infty$ included. Define
\begin{align*}
\alpha(j) &:= \sup_{k\in\mathcal{Z}}\sup\left\{|\mathbb{P}\left(A\cap B\right) - \mathbb{P}(A)\mathbb{P}(B)|:\,A\in\mathcal{F}_{-\infty, j}, B\in\mathcal{F}_{k+j, \infty}\right\},\\
\phi(j) &:= \sup_{k\in\mathcal{Z}}\sup\left\{|\mathbb{P}\left(B|A\right) - \mathbb{P}(B)|:\,A\in\mathcal{F}_{-\infty, j}, B\in\mathcal{F}_{k+j, \infty}, \mathbb{P}(A) > 0\right\}.
\end{align*}
If $\alpha(j)$ (or correspondingly  $\phi(j)$) converges to zero as $j$ approaches infinity then the process $\{\xi_t:\,t\in\mathbb{Z}\}$ is called $\alpha$-mixing (or correspondingly $\phi$-mixing). It is clearly seen that every $\phi$-mixing process is $\alpha$-mixing since for any event $A$ with $\mathbb{P}(A) > 0,$
\begin{align*}
|\mathbb{P}\left(A\cap B\right) - \mathbb{P}(A)\mathbb{P}(B)| \le \mathbb{P}(A)|\mathbb{P}\left(B|A\right) - \mathbb{P}(B)|.
\end{align*}
A process $\{\xi_t:\,t\in\mathbb{Z}\}$ is said to be $m$-dependent if $\alpha(j) = 0$ for all $j\ge m$. Evidently, $m$-dependent processes are $\phi$-mixing for any $m$ and so $\alpha$-mixing too. One very useful feature of $\alpha$-mixing processes is that measurable functions of finitely many elements of the process themselves $\alpha$-mixing.

The dependence notion used in this section is the one called functional dependence introduced by \cite{Wu05}. It is possible to derive the results under the classical dependence notions like $\alpha$-,$\rho$- mixing too, however, verifying the mixing assumptions can often be hard and many well-known processes do not satisfy them. See \cite{Wu05} for more details. It has also been shown that many econometric time series can be studied under the notion of functional dependence; see \cite{Wu10}, \cite{Liu13} and \cite{WuWu16}.

The dependence notion of \cite{Wu05} is written in terms of an input-output process that is easy to analyze in many settings. The process is defined as follows. Let $\{\varepsilon_i, \varepsilon_i':\,i\in\mathbb{Z}\}$ denote a sequence of independent and identically distributed random variables on some measurable space $(\mathcal{E}, \mathcal{B})$. Let the $q$-dimensional (stochastic) process $W_i$ has a causal representation as
\begin{equation}\label{eq:CausalProcessVector}
W_i = G_i(\ldots,\varepsilon_{i-1}, \varepsilon_i)\in\mathbb{R}^q,
\end{equation}
for some vector-valued function $G_i(\cdot) = (g_{i1}(\cdot), \ldots, g_{iq}(\cdot))$. By Wold representation theorem for stationary processes, this causal representation holds in many cases. Define the non-decreasing filtration
\begin{equation}\label{eq:Filtration}
\mathcal{F}_i := \sigma\left(\ldots, \varepsilon_{i-1}, \varepsilon_i\right).
\end{equation}
Using this filtration, we also use the notation $W_i = G_i(\mathcal{F}_i)$. To measure the strength of dependence, define for $r\ge 1$ and $1\le j\le q$, the \textbf{functional dependence measure}
\begin{equation}\label{eq:FunctionalDepMeasure}
\delta_{s,r,j} := \max_{1\le i\le n}\,\norm{W_i(j) - W_{i,s}(j)}_r,\quad\mbox{and}\quad \Delta_{m,r,j} := \sum_{s = m}^{\infty} \delta_{s,r,j},
\end{equation}
where
\begin{equation}\label{eq:ZikDef}
W_{i,s}(j) := g_{ij}(\mathcal{F}_{i,i-s})\quad\mbox{with}\quad \mathcal{F}_{i,i-s} := \sigma\left(\ldots,\varepsilon_{i-s-1}, \varepsilon_{i-s}', \varepsilon_{i-s+1}, \ldots, \varepsilon_{i-1}, \varepsilon_i\right).
\end{equation}
The $\sigma$-field $\mathcal{F}_{i,i-s}$ represents a coupled version of $\mathcal{F}_i$. The quantity $\delta_{s,r,j}$ measures the dependence using the distance in terms of $\norm{\cdot}_r$-norm between $g_{ij}(\mathcal{F}_i)$ and $g_{ij}(\mathcal{F}_{i,i-s}).$ In other words, it is quantifying the impact of changing input $\varepsilon_{i-s}$ on the output $g_{ij}(\mathcal{F}_i)$; see Definition 1 of \cite{Wu05}. The \textbf{dependence adjusted norm} for $j$-th coordinate is given by
\[
\norm{\{W(j)\}}_{r,\nu} := \sup_{m\ge 0}\, (m + 1)^{\nu}\Delta_{m,r,j},\quad \nu \ge 0.
\]
To summarize these measures for the vector-valued process, define
\[
\norm{\{W\}}_{r,\nu} := \max_{1\le j\le q}\,\norm{\{W(j)\}}_{r,\nu}\quad\mbox{and}\quad \norm{\{W\}}_{\psi_{\beta}, \nu} := \sup_{r\ge 2}\,r^{-1/\beta}\norm{\{W\}}_{r,\nu}.
\]
\begin{rem}\,(Independent Sequences)\label{rem:IndepFunctionalDependence}
Any notion of dependence should at least include independent random variables. It might be helpful to understand how independent random variables fits into this framework of dependence. For independent random vectors $W_i$, the causal representation reduces to
\[
W_i = G_i(\ldots, \varepsilon_{i-1}, \varepsilon_i) = G_i(\varepsilon_i)\in\mathbb{R}^{q}.
\]
It is not a function of any of the previous $\varepsilon_j, j < i$. This implies by the definition~\eqref{eq:ZikDef}  that
\[
W_{i,s} = \begin{cases}
G_i(\varepsilon_i) = W_i,&\mbox{if }s\ge 1,\\
G_i(\varepsilon_{i}') =: W_i',&\mbox{if }s = 0.
\end{cases}
\]
Here $W_i'$ represents an independent and identically distributed copy of $W_i$. Hence,
\[
\delta_{s,r,j} = \begin{cases}
0, &\mbox{if }s\ge 1,\\
\norm{W_i(j) - W_i'(j)}_r \le 2\norm{W_i(j)}_r,&\mbox{if }s = 0.
\end{cases}
\]
It is now clear that for any $\nu > 0$,
\[
\norm{\{W\}}_{r, \nu} = \sup_{m\ge 0}\,(m + 1)^{\nu}\Delta_{m,r} = \Delta_{0, r} \le 2\max_{1\le j\le q}\norm{W_i(j)}_r.
\]
Hence, if the independent sequence $W_i$ satisfies assumption~\eqref{eq:MarginalPhi}, then $\norm{\{W\}}_{\psi_{\beta}, \nu} < \infty$ for all $\nu > 0$, in particular for $\nu = \infty$. Therefore, independence corresponds to $\nu = \infty$. As $\nu$ decreases to zero, the random vectors become more and more dependent.
\end{rem}
Recall that
\[
\norm{\hat{\Omega}_n - \Omega_n}_{\infty} := \max_{1\le j, k\le p+1}\left|\frac{1}{n}\sum_{i=1}^n \left(Z_i(j)Z_i(k) - \mathbb{E}\left[Z_i(j)Z_i(k)\right]\right)\right|,
\]
which is a maximum of $(p+1)^2$ many averages. To prove bound on the quantity above, consider the following assumption:
\begin{description}
\item[\namedlabel{eq:Dependent}{(DEP)}] Assume that there exist $n$ vector-valued functions $G_i$ and an iid sequence $\{\varepsilon_i:\,i\in\mathbb{Z}\}$ such that
\[
Z_i := (X_i, Y_i) = G_i(\ldots, \varepsilon_{i-1}, \varepsilon_i)\in\mathbb{R}^{p+1}.
\]
Also, for some $\nu, \beta > 0$,
\[
\norm{\{Z\}}_{\psi_{\beta}, \nu} \le K_{n,p}\quad\mbox{and}\quad \max_{1\le i\le n}\max_{1\le j\le p+1}\,|\mathbb{E}\left[Z_i(j)\right]| \le K_{n,p}.
\]
\end{description}
Based on Remark~\ref{rem:Indep}, Assumption~\ref{eq:Dependent} is equivalent to the assumption of Lemma~\ref{lem:RateD1nD2n} for independent data. For independent random variables, the second part of Assumption~\ref{eq:Dependent} about the expectations follows from the $\psi_{\beta}$-bound assumption. The reason for this expectation bound in the assumption here is that the functional dependence measure $\delta_{s,r}$ does not have any information about the expectation since
\[
\norm{W_i(j) - W_{i,s}(j)}_r = \norm{\left(W_i(j) - \mathbb{E}\left[W_i(j)\right]\right) - \left(W_{i,s}(j) - \mathbb{E}\left[W_{i,s}(j)\right]\right)}_r.
\]
The coupled random variable $W_{i,s}$ has the same expectation as $W_i$. Since the quantities we need to bound involve product of random variables, such a bound on the expectations is needed for our analysis.

The following result proves a bound on $\|\hat{\Omega}_n - \Omega_n\|_{\infty}$ under assumption~\ref{eq:Dependent}. Define
\[
\Upsilon_{4,p} := \max_{1\le j\le p+1}\,\left(\norm{\{Z(j)\}}_{4,0} + \max_{1\le i\le n}|\mathbb{E}|Z_i(j)||\right)\norm{\{Z(j)\}}_{4,\nu}.
\]
\begin{thm}\label{thm:BoundDependence}
Fix $n, k\ge 1$ and let $t\ge 0$ be any real number. Then under assumption~\ref{eq:Dependent}, with probability at least $1 - 8e^{-t}$, 
\[
\|\hat{\Omega}_n - \Omega_n\|_{\infty} \le 2eB_{\nu}\sqrt{\frac{\Upsilon_{4,p}(t + \log(4p))}{n}} + C_{\beta}K_{n,p}^2\frac{(\log n)^{1/s(\beta/2)}\kappa_n(\nu)(t + \log(4p))^{1/T_1(s(\beta/2))}}{n},
\]
where $T_1(\lambda) = \min\{\lambda, 1\}$, $s(\lambda) = (1/2 + 1/\lambda)^{-1}$ and
\[
\kappa_n(\nu) = 2^{\nu}\times\begin{cases}
5(\nu - 1/2)^{-3}, &\mbox{if }\nu > 1/2,\\
2(\log_2n)^{5/2}, &\mbox{if }\nu = 1/2,\\
5(2n)^{1/2 - \nu}(1/2 - \nu)^{-3}, &\mbox{if }\nu < 1/2.
\end{cases}
\]
Here $B_{\nu}$ and $C_{\beta}$ are positive constants depending only on $\nu$ and $\beta$.
\end{thm}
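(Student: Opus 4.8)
The plan is to recognize $\norm{\hat{\Omega}_n - \Omega_n}_{\infty}$ as the maximum absolute value of $(p+1)^2$ centered sample averages and then apply a single Nagaev-type concentration inequality for maxima of functionally dependent, sub-Weibull partial sums. Writing $U_i(j,k) := Z_i(j)Z_i(k)$ for $1\le j,k\le p+1$, we have
\[
\norm{\hat{\Omega}_n - \Omega_n}_{\infty} = \max_{1\le j,k\le p+1}\left|\frac{1}{n}\sum_{i=1}^n \left(U_i(j,k) - \mathbb{E}[U_i(j,k)]\right)\right|,
\]
so it suffices to control a maximum over the $(p+1)^2$-dimensional product process $U_i = (U_i(j,k))_{j,k}$. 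The two-sided union bound over these coordinates is what produces the factor $\log(4p)$ in the bound and the probability $8e^{-t}$ in the statement, so the entire task reduces to (a)~verifying that $U_i$ inherits a functional-dependence and sub-Weibull structure from $Z_i$ under \ref{eq:Dependent}, and (b)~invoking the corresponding single-coordinate deviation inequality.

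For step (a) I would first control the functional dependence measure of each product coordinate. Using the coupled version $U_{i,s}(j,k) = Z_{i,s}(j)Z_{i,s}(k)$ and the telescoping identity
\[
Z_i(j)Z_i(k) - Z_{i,s}(j)Z_{i,s}(k) = (Z_i(j) - Z_{i,s}(j))Z_i(k) + Z_{i,s}(j)(Z_i(k) - Z_{i,s}(k)),
\]
the triangle inequality together with the Cauchy--Schwarz (Hölder with exponents $4,4$) inequality gives
\[
\norm{U_i(j,k) - U_{i,s}(j,k)}_2 \le \norm{Z_i(k)}_4\,\delta_{s,4,j} + \norm{Z_{i,s}(j)}_4\,\delta_{s,4,k},
\]
where $\delta_{s,4,\cdot}$ are the order-$4$ functional dependence measures of $Z$. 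Summing over $s$ and taking the $(m+1)^{\nu}$-weighted supremum bounds the order-$2$ dependence-adjusted norm of $U$ by products involving the order-$4$ dependence-adjusted norms of $Z$ at levels $\nu$ and $0$ together with the means $\mathbb{E}[Z_i(j)]$; these combine into exactly the quantity $\Upsilon_{4,p}$ (the means enter because $\norm{Z_i(k)}_4 \le \norm{Z_i(k) - \mathbb{E}[Z_i(k)]}_4 + |\mathbb{E}[Z_i(k)]|$). For the tail behaviour, a product of two sub-Weibull$(\beta)$ variables is sub-Weibull$(\beta/2)$ with quasi-norm bounded by the product of the factors, which accounts for every appearance of $\beta/2$ through the exponents $s(\beta/2)$ and $T_1(s(\beta/2))$.

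For step (b) I would apply the concentration inequality for functionally dependent sub-Weibull sums, in the spirit of the bounds of \cite{Wu05} and Theorems~4.1--4.2 of \cite{KuchAbhi17}, to each centered average $n^{-1}\sum_i (U_i(j,k) - \mathbb{E}[U_i(j,k)])$, and then take the union bound to replace $t$ by $t + \log(4p)$. The two summands in the conclusion are precisely the two regimes of such an inequality: the variance-dominated term $\sqrt{\Upsilon_{4,p}(t+\log 4p)/n}$ and the tail-dominated term carrying $K_{n,p}^2$ and the $\log n$ factors. The main obstacle is the bookkeeping in establishing that the product process has a finite dependence-adjusted norm with the correct constants; in particular, the trichotomy in $\kappa_n(\nu)$---convergent, logarithmically divergent, and polynomially divergent for $\nu > 1/2$, $\nu = 1/2$, and $\nu < 1/2$ respectively---arises from how the tail sums $\Delta_{m,r,j}$ of the dependence measures aggregate into the variance proxy of the partial sum while keeping the dependence on $n$ explicit. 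Once the single-coordinate inequality and the product dependence bound are in hand, the union bound and the sub-Weibull tail correction are routine.
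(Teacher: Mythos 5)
Your proposal is correct and takes essentially the same route as the paper: the paper's proof is a bare citation to Lemma B.4 and Theorem 5.1 (or B.1) of \cite{Uniform:Kuch18}, which are precisely your two steps --- (a) showing the product process $Z_i(j)Z_i(k)$ inherits a dependence-adjusted norm (via the telescoping/H\"older argument, with the means entering exactly as you describe, yielding $\Upsilon_{4,p}$) and a sub-Weibull$(\beta/2)$ tail, and (b) a Nagaev-type tail inequality for functionally dependent sub-Weibull averages combined with a union bound over the $O(p^2)$ coordinates, which is where the trichotomy in $\kappa_n(\nu)$ originates. Your blind reconstruction simply fills in the content that the paper delegates to the citation.
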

\begin{proof}
The proof follows from Lemma B.4 and Theorem 5.1 (or Theorem B.1) of \cite{Uniform:Kuch18}.
\end{proof}
\begin{rem}{ (Rate of Convergence under Dependence)}
Theorem~\ref{thm:BoundDependence} readily implies bounds on $C_n^{\Sigma}(\alpha)$ and $C_n^{\Gamma}(\alpha)$ along with rate bounds on $\mathcal{D}_n^{\Gamma}$ and $\mathcal{D}_n^{\Sigma}$.
\end{rem}
\end{document}